\newtheorem{rules}{Rule}
\title{A General Stabilization Bound for Influence Propagation in Graphs}
\author{Pál András Papp}{ETH Zürich, Switzerland}{apapp@ethz.ch}{}{}
\author{Roger Wattenhofer}{ETH Zürich, Switzerland}{wattenhofer@ethz.ch}{}{}
\authorrunning{P.A. Papp and R. Wattenhofer}
\keywords{Minority process, Majority process}
\begin{document}

\maketitle

\begin{abstract}
We study the stabilization time of a wide class of processes on graphs, in which each node can only switch its state if it is motivated to do so by at least a $\frac{1+\lambda}{2}$ fraction of its neighbors, for some $0 < \lambda < 1$. Two examples of such processes are well-studied dynamically changing colorings in graphs: in majority processes, nodes switch to the most frequent color in their neighborhood, while in minority processes, nodes switch to the least frequent color in their neighborhood. We describe a non-elementary function $f(\lambda)$, and we show that in the sequential model, the worst-case stabilization time of these processes can completely be characterized by $f(\lambda)$. More precisely, we prove that for any $\epsilon>0$, $O(n^{1+f(\lambda)+\epsilon})$ is an upper bound on the stabilization time of any proportional majority/minority process, and we also show that there are graph constructions where stabilization indeed takes $\Omega(n^{1+f(\lambda)-\epsilon})$ steps.
\end{abstract}

\newpage
\setcounter{page}{1}

\section{Introduction}

Many natural phenomena can be modeled by graph processes, where each node of the graph is in a state (represented by a color), and each node can change its state based on the states of its neighbors. Such processes have been studied since the dawn of computer science, by, e.g., von Neumann, Ulam, and Conway. Among the numerous applications of these graph processes, the most eminent ones today are possibly neural networks, both biological and artificial.

Two fundamental graph processes are majority and minority processes. In a \textit{majority process}, each node wants to switch to the most frequent color in its neighborhood. Such a process is a straightforward model of influence spreading in networks, and as such, it has various applications in social science, political science, economics, and many more \cite{MajApplic1, MajApplic2, MajApplic3, MajApplic4, MajApplic5}.

In contrast, in a \textit{minority process}, each node wants to switch to the least frequent color in its neighborhood. Minority processes are used to model scenarios where the nodes are motivated to anti-coordinate with each other, like frequency selection in wireless communication, or differentiating from rival companies in economics \cite{KPRanticoor, MinApplic1, MinApplic2, MinApplic3, MinApplic4}.

Majority and minority processes have been studied in several different models, the most popular being the synchronous model (where in each step, all nodes can switch simultaneously) and the sequential model (where in each step, exactly one node switches). Since in many application areas, it is unrealistic to assume that nodes switch at the exact same time, we focus on the sequential model in this paper. We are interested in the worst-case stabilization time of such processes, i.e. the maximal number of steps until no node wants to change its color anymore.

Our main parameter describes how easily nodes will switch their color. Previously, the processes have mostly been studied under the basic switching rule, when nodes are willing switch their color for any small improvement. However, it is often more reasonable to assume a \textit{proportional switching rule}, i.e. that nodes only switch their color if they are motivated by at least, say, 70\% of their neighbors to do so. In general, we describe such proportional processes by a parameter $\lambda \in (0,1)$, and say that a node is switchable if it is in conflict with a $\frac{1+\lambda}{2}$ portion of its neighborhood. The stabilization time in such proportional processes (possibly as a function of $\lambda$) has so far remained unresolved.

The reason we can analyze proportional majority and minority processes together is that both can be viewed as a special case of a more general process of propagating conflicts through a network, where the cost of relaying conflicts through a node is proportional to the degree of the node. This more general process could also be used to model the propagation of information, energy, or some other entity through a network. This suggests that our results might also be useful for gaining insights into different processes in a wide range of other application areas, e.g. the behavior of neural networks.

In the paper, we provide a tight characterization of the maximal possible stabilization time of proportional majority and minority processes.
We show that for maximal stabilization, a critical parameter is the portion $\varphi$ of the neighborhood that nodes use as `outputs', i.e. neighbors they propagate conflicts to. 
Based on this, we prove that the stabilization time of proportional processes follows a transition between quadratic and linear time, described by the non-elementary function
\begin{equation} \label{eq:f}
f(\lambda) := \max_{\varphi \in (0,\frac{1-\lambda}{2}]} \; \frac{\log\left( \frac{1-\varphi}{\lambda+\varphi} \right)}{\log\left( \frac{1-\varphi}{\varphi} \right)}.
\end{equation}
More specifically, for any $\epsilon > 0$, we show that on the one hand, $O(n^{1+f(\lambda)+\epsilon})$ is an upper bound on the number of steps of any majority/minority process, and on the other hand, there indeed exists a graph construction where the processes last for $\Omega(n^{1+f(\lambda)-\epsilon})$ steps.

\section{Related Work}

Various aspects of both majority and minority processes on two colors have been studied extensively. This includes basic properties of the processes \cite{Goles, Winkler}, sets of critical nodes that dominate the process \cite{MajApplic3, MajOther1, Dynamo3}, complexity and approximability results \cite{votingtime, approx0, influenceApprox}, threshold behavior in random graphs \cite {MajOther2, Ahad2018}, and the analysis of stable states in the process \cite{SGPclass1, SGPclass3, SGPclass4, SGPsurvey, noUGP, KPRanticoor}. Modified process variants have also been studied \cite{certainityMaj, switchOnce}, with numerous generalizations aiming to provide a more realistic model for social networks \cite{SocialGen1, SocialGen2}.

However, the question of stabilization time in the processes has almost exclusively been studied for the basic switching rule (defined in Section \ref{sec:models}). Even for the basic rule, apart from a straightforward $O(n^2)$ upper bound, the question has remained open for a long time in case of both processes. It has recently been shown in \cite{majority} and \cite{minority} that both processes can exhibit almost-quadratic stabilization time in case of basic switching, both in the sequential adversarial and in the synchronous model.
On the other hand, the maximal stabilization time under proportional switching has remained open so far.

It has also been shown that if the order of nodes is chosen by a benevolent player, then the behavior of the two processes differs significantly, with the worst-case stabilization time being $O(n)$ for majority processes \cite{majority} and almost-quadratic for minority processes \cite{minority}. In weighted graphs, where the only available upper bound on stabilization time is exponential, it has been shown that both majority and minority can indeed last for an exponential number of steps in various models \cite{majorityW, minorityW}. The result of \cite{minorityW} is the only one to also study the proportional switching rule, showing that the exponential lower bound also holds in this case; however, since the paper studies weighted graphs with arbitrarily high weights, this model differs significantly from our unweighted setting.

Stabilization time has also been examined in several special cases, mostly assuming the synchronous model. 
The stabilization of a slightly different minority process variant (based on closed neighborhoods) has been studied in special classes of graphs including grids, trees and cycles \cite{CA1, CA2, CA3}. The work of \cite{hedetniemi} describes slightly modified versions of minority processes which may take $O(n^5)$ or $O(n^6)$ steps to stabilize, but provide better local minima (stable states) upon termination. For majority processes, stabilization has mostly been studied from a random initial coloring, on special classes of graphs such as grids, tori and expanders \cite{MajOther2, Ahad2018}.

Various aspects of majority processes have also been studied under the proportional switching rule, including sets of critical nodes that dominate the process, and sets of nodes that always preserve a specific color \cite{propDynamos1, propDynamos2}. However, to our knowledge, the stabilization time of the processes with proportional switching has not been studied before.

\section{Model and Notation}

\subsection{Preliminaries}

We define our processes on simple, unweighted, undirected graphs $G(V,E)$, with $V$ denoting the set of nodes and $E$ the set of edges. We denote the number of nodes by $n=|V|$. The neighborhood of $v$ is denoted by $N(v)$, the degree of $v$ by $\text{deg}(v)=|N(v)|$.

We also use simple directed graphs in our proofs. A directed graph is called a DAG if it contains no directed cycles. A \textit{dipartitioning} of a DAG is a disjoint partitioning ($V_1$, $V_2$) of $V$ such that each source node is in $V_1$, and all edges between $V_1$ and $V_2$ all go from $V_1$ to $V_2$. We refer to the set of edges from $V_1$ to $V_2$ as a \textit{dicut}.

Given an undirected graph $G$ with edge set $E$, we also define the \textit{directed edge set} of $G$ as $\widehat{E}=\{ (u,v), (v,u) \: | \: (u,v) \in E \}$, i.e. the set of directed edges obtained by taking each edge with both possible orientations.

A \textit{coloring} is a function $\gamma:V \rightarrow \{\text{black, white}\}$. A \textit{state} is a current coloring of $G$. Under a given coloring, we define $N_s(v)=\{ u \in N(v) | \gamma(v)=\gamma(u) \} $ and $N_o(v)=\{ u \in N(v) | \gamma(v) \neq \gamma(u) \} $ as the same-color and opposite-color neighborhood of $v$, respectively.

We say that there is a \textit{conflict} on edge $(u,v)$, or that $(u,v)$ is a \textit{conflicting edge}, if $u \in N_o(v)$ in case of a majority process, and if $u \in N_s(v)$ in case of a minority process. In general, we denote the conflict neighborhood by $N_c(v)$, meaning $N_c(v) = N_o(v)$ and $N_c(v) = N_s(v)$ in case of majority and minority processes, respectively. We occasionally also use $N_{\neg c}(v)=N(v) \setminus N_c(v)$.

If a node $v$ has more conflicts than a predefined threshold (depending on the so-called \textit{switching rule} in the model, discussed later) in the current state, then $v$ is \textit{switchable}. Switching $v$ changes its color to the opposite color. If edge $(u,v)$ becomes (ceases to be) a conflicting edge when node $v$ switches, then we say that $v$ has \textit{created} this conflict (\textit{removed} this conflict, respectively).

A \textit{majority/minority process} is a sequence of steps (states), where each state is obtained from the previous state by a set of switchable nodes switching. In this paper, we examine sequential processes, when in each step, exactly one node switches. Such a process is \textit{stable} when there are no more switchable nodes in the graph. By \textit{stabilization time}, we mean the number of steps until a stable state is reached.

\subsection{Model and switching rule} \label{sec:models}

We study the worst-case stabilization time of majority/minority processes, that is, the maximal number of steps achievable on any graph, from any initial coloring. In other words, we assume the \textit{sequential adversarial model}, when the order of nodes (i.e., the next switchable node to switch in each time step) is chosen by an adversary who maximizes stabilization time.

It only remains to specify the condition that allows a node to switch its color. The most straightforward switching rule is the following:

\begin{rules}[\textbf{Basic Switching}]
Node $v$ is switchable if $|N_c(v)| - |N_{\neg c}(v)| > 0$.
\end{rules}

An equivalent form of this rule is $|N_c(v)| > \frac{1}{2} \cdot \text{deg}(v)$. This rule is shown to allow up to $\widetilde{\Theta}(n^2)$ stabilization time for both majority \cite{majority} and minority \cite{minority} processes. However, it is often more realistic to assume a proportional switching rule, based on a real parameter $\lambda \in (0, 1)$:

\begin{rules}[\textbf{Proportional Switching}]
Node $v$ is switchable if $|N_c(v)| - |N_{\neg c}(v)| \geq \lambda \cdot \text{deg}(v)$.
\end{rules}

Since we have $|N_c(v)| + |N_{\neg c}(v)| = \text{deg}(v)$, this is equivalent to saying that $v$ is switchable exactly if $|N_c(v)| \geq \frac{1+\lambda}{2} \cdot \text{deg}(v)$. In the limit when $\lambda$ is infinitely small (or, equivalently, as $\frac{1+\lambda}{2}$ approaches $\frac{1}{2}$ from above), we obtain Rule I as a special case of Rule II.

In case of Rule I, whenever a node $v$ switches, it is possible that the total number of conflicts in the graph decreases by 1 only. On the other hand, Rule II implies that the switching of $v$ decreases the total number of conflicts at least by $\lambda \cdot \text{deg}(v)$ (we say that $v$ \textit{wastes} these conflicts), so in case of Rule II, the total number of conflicts can decrease more rapidly, allowing only a smaller stabilization time. Our findings show that the maximal number of steps is different for every distinct $\lambda$.

\subsection{On the $f(\lambda)$ function}

While the processes have a symmetric definition on each edge by default, it turns out that in order to maximize stabilization time, each edge has to be used in an asymmetric way. The most important parameter at each node $v$ is the ratio of neighbors $v$ uses as `inputs' and as `outputs'.
That is, the optimal behavior for each node $v$ is to select $\varphi \cdot \text{deg}(v)$ of its neighbors as outputs (for some $\varphi \in (0,1)$), and create all new conflicts on the edges leading to these output nodes, and similarly, mark the remaining $(1-\varphi) \cdot \text{deg}(v)$ neighbors as inputs, and only remove conflicts from the edges coming from these input nodes. Note that with Rule II, whenever a node switches, it can create at most $\left(1-\frac{1+\lambda}{2}\right) \cdot \text{deg}(v)=\frac{1-\lambda}{2} \cdot \text{deg}(v)$ new conflicts, so it is reasonable to assume $\varphi \in \left(0,\frac{1-\lambda}{2}\right]$.

Our results show that if all nodes select $\varphi$ as their output rate, then the maximal achievable stabilization time is a function of
\begin{equation} \label{eq:relay}
\frac{\log\left( \frac{1-\varphi}{\lambda+\varphi} \right)}{\log\left( \frac{1-\varphi}{\varphi} \right)}.
\end{equation}
As such, the largest stabilization time can be achieved by maximizing this expression by selecting the optimal $\varphi$ value, as shown in the definition of $f$ in Equation \ref{eq:f}. We denote the optimal value of $\varphi$ (i.e., the $\text{arg} \text{max}$ of Equation \ref{eq:relay}) by $\varphi^*$. The function $f$ has no straightforward closed form, as such a form would require solving 
\[  (\lambda+1) \cdot \varphi \cdot \log\left( \frac{1-\varphi}{\varphi} \right) = (\lambda+\varphi) \log \left( \frac{1-\varphi}{\lambda+\varphi} \right), \]
for $\varphi$, with $\lambda$ as a parameter. A more detailed discussion of $f$ is available in Appendix \ref{App:C}.

\begin{figure}
\captionsetup{justification=centering}
\centering
	\includegraphics[width=0.6\textwidth]{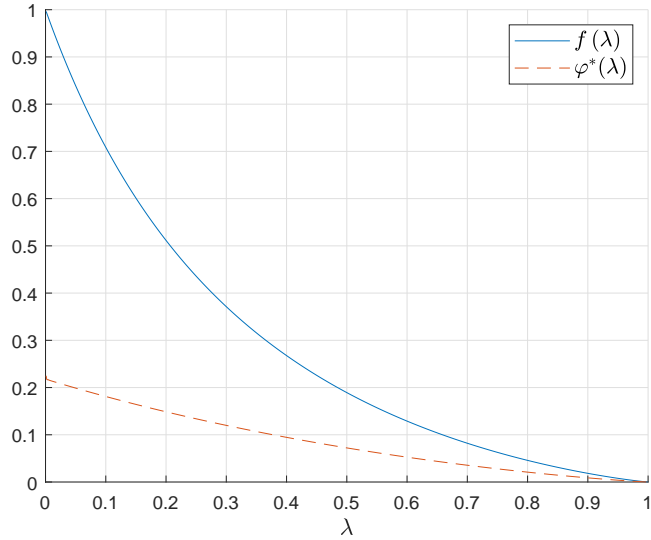}
	\caption{Plot of $f(\lambda)$ and $\varphi^*(\lambda)$ for $\lambda \in (0, 1)$}
	\label{fig:func}
\end{figure}

Figure \ref{fig:func} shows the values of $f$ and $\varphi^*$ as a function of $\lambda$. The figure shows that both $f(\lambda)$ and $\varphi^*(\lambda)$ are continuous, monotonically decreasing and convex.

It is visible that $\lim_{\lambda \rightarrow 0}f(\lambda)=1$ and $\lim_{\lambda \rightarrow 1}f(\lambda)=0$. This is in line with what we would expect: the simple switching rule allows a stabilization time up to $\widetilde{\Theta}(n^2)$ \cite{majority, minority}, while even for any large $\lambda<1$, it is still straightforward to present a graph with $\Omega(n)$ stabilization time. Our main result is showing that $f(\lambda)$ describes the continuous transition between these two extremes. 

\section{General intuition behind the proofs} \label{sec:prop}

Note that initially, each node $v$ can have at most $\text{deg}(v)$ conflicts on its incident edges, and each time when $v$ switches, it wastes $\lambda \cdot \text{deg}(v)$ conflicts. Therefore, if each node were to `use' its own initial conflicts only, then each node could switch at most $\frac{1}{\lambda}$ times, and stabilization time could never go above $O(n)$.

Instead, the idea is to take the high number of conflicts initially available at high-degree nodes, and use these conflicts to switch the less wasteful low-degree nodes many times. Specifically, we could have a set of $\Theta(n)$-degree nodes that initially have $\Omega(n^2)$ conflicts altogether on their incident edges, and somehow relay these conflicts to another set of $O(1)$-degree nodes, which only waste $O(1)$ conflicts at each switching. However, due to the large difference both in degree and in the number of switches, it is not possible to connect these two sets directly; instead, we need to do this through a range of intermediate levels, which exhibit decreasing degree and increasingly more switches. In order to maximize stabilization time, our main task is to move conflicts through these levels as efficiently (i.e., wasting as few conflicts in the process) as possible.

The formula of $f(\lambda)$ describes the efficiency of this process. The rate of inputs to outputs $\frac{1-\varphi}{\varphi}$ determines the factor by which the degree decreases at every new level. If $\varphi$ is chosen small, then $\frac{1-\varphi}{\varphi}$ is high, so we only have a few levels until we reach constant degree, and hence the number of switches is increased only a few times. On the other hand, the increase in the number of switches per level is expressed by $\frac{1-\varphi}{\lambda+\varphi}$, which is a decreasing function of $\varphi$. If $\varphi$ is too large, then although we execute this increase more times, each of these increases is significantly smaller.

With a degree decrease rate of $\frac{1-\varphi}{\varphi}$, we can altogether have about $\log_{\frac{1-\varphi}{\varphi}}(n)$ levels until the degree decreases from $\Theta(n)$ to $\Theta(1)$. If we increase the number of switches by a factor of $\frac{1-\varphi}{\lambda+\varphi}$ each time, then the $O(1)$-degree nodes will exhibit 
\begin{equation} \label{eq:match}
\left( \frac{1-\varphi}{\lambda+\varphi} \right)^{\log_{\frac{1-\varphi}{\varphi}}(n)} = n^{\frac{\log\left( \frac{1-\varphi}{\lambda+\varphi} \right)}{\log\left( \frac{1-\varphi}{\varphi} \right)}} \leq n^{f(\lambda)}
\end{equation}
switches, with an equation only if $\varphi=\varphi^*(\lambda)$. Having $\widetilde{\Theta}(n)$ nodes in the last level, this sums up to about $n^{1+f(\lambda)}$ switches altogether.

\subsection{Conflict propagation systems}

The upper bound on stabilization time is easiest to present in a general form that only focuses on this flow of conflicts in the graph. We define a simpler representation of the processes which only keeps a few necessary concepts to describe the flow of conflicts, and ignores e.g. the color of nodes or the timing of the switches at each node. In fact, we only require the number of times $s(v)$ each $v \in V$ switches, and the number $c(u,v)$ of conflicts that were created by node $u$ and then removed by node $v$, for each $(u,v) \in \widehat{E}$.

For simplicity, given a function $c:\widehat{E} \rightarrow \mathbb{N}$, let us introduce the notation $c_{in}(v) := \sum_{u \in N(v)} c(u,v)$ and $c_{out}(v) := \sum_{u \in N(v)} c(v,u)$.

\begin{definition}[\textbf{Conflict Propagation System, CPS}]
Given an undirected graph $G$, a conflict propagation system is an assignment $s:V \rightarrow \mathbb{N}$ and $c:\widehat{E} \rightarrow \mathbb{N}$ such that 
\begin{enumerate}
\item for each $v \in V$, we have $c_{in}(v) +\text{deg}(v) \geq \lambda \cdot \text{deg}(v) \cdot s(v) + c_{out}(v)$,
\item for each $v \in V$, we have $c_{out}(v) \leq \frac{1-\lambda}{2} \cdot \text{deg}(v) \cdot s(v)$, and
\item for each $(u,v) \in \widehat{E}$, we have $c(u,v) \leq s(u)$.
\end{enumerate}
\end{definition}

With the choice of $s(v)$ and $c(u,v)$ described above, any proportional majority or minority process indeed satisfies these properties, and thus provides a CPS. Hence if we upper bound the stabilization time (i.e. the total number of switches $\sum_{v \in V} s(v)$) of any CPS, this establishes the same bound on the stabilization time of any majority/minority process.

Condition 1 is the most complex of the three; it expresses the amount of `input conflicts' $c_{in}(v)$ required to switch $v$ an $s(v)$ times altogether. Every time after $v$ switches, it has at most $\frac{1-\lambda}{2} \cdot \text{deg}(v)$ conflicts on the incident edges, so it needs to acquire $\lambda \cdot \text{deg}(v)$ new conflicts to reach the threshold of $\frac{1+\lambda}{2} \cdot \text{deg}(v)$ and be switchable again; this results in the $\lambda \cdot \text{deg}(v) \cdot s(v)$ term. Moreover, if in the meantime, the neighboring nodes remove some of the conflicts from the incident edges (expressed by $c_{out}(v)$), then this also has to be compensated for by extra input conflicts. Finally, the extra $\text{deg}(v)$ term comes from the (at most) $\text{deg}(v)$ conflicts that are already on the incident edges in the initial coloring. For a detailed discussion of this condition, see Appendix \ref{App:A}.

Condition 2 also holds, since each time when $v$ switches, it creates at most $\frac{1-\lambda}{2} \cdot \text{deg}(v)$ conflicts on the incident edges. Each time $u$ switches, it can only create one conflict on a specific edge, so condition 3 also follows. Hence any majority/minority process indeed provides a CPS.

Finally, we need a technical step to get rid of the extra $\text{deg}(v)$ term in condition 1. Note that this term becomes asymptotically irrelevant as $s(v)$ grows; hence, our approach is to handle fewer-switching nodes separately, and require condition 1 only for nodes with large $s(v)$. More formally, we select a constant $s_0$, and we refer to nodes $v$ with $s(v) < s_0$ as \textit{base nodes}. We then consider \textit{Relaxed CPSs}, where, given this extra parameter $s_0$, condition 1 is replaced by:
\renewcommand{\labelenumi}{\arabic{enumi}R.}
\vspace{5pt}
\begin{enumerate}
{\setlength\itemindent{8pt} \item \textit{for each $v \in V$ with $s(v) \geq s_0$, we have $ c_{in}(v) \geq \lambda \cdot \text{deg}(v) \cdot s(v) + c_{out}(v)$,}}
\end{enumerate}
\vspace{6pt}
This relaxation comes at the cost of an extra $\epsilon$ additive term in the exponent of our upper bound.

\section{Upper bound proof} \label{sec:upper}

We now outline the proof of the upper bound on the number of switches. A more detailed discussion of this proof is available in Appendix \ref{App:A}.

\subsection{Properties of an optimal construction}

We start by noting that since moving a conflict through a node is wasteful, it is suboptimal to have two neighboring nodes that both transfer a conflict to each other, or more generally, to move a conflict along any directed cycle. Therefore, in a CPS with maximal stabilization time, the conflicts are essentially moved along the edges of a DAG. To formalize this, given a CPS, let us say that a directed edge $(u,v) \in \widehat{E}$ is a \textit{real edge} if $c(u,v)>0$.

\begin{lemma}
There exists a CPS with maximal stabilization time where the real edges form a DAG.
\end{lemma}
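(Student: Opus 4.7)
The plan is to start with any CPS that achieves maximal stabilization time and iteratively remove directed cycles of real edges without decreasing the total number of switches $\sum_{v \in V} s(v)$. Since the number of real edges is finite and strictly decreases at each step, this process terminates in a CPS whose real edges form a DAG but still attain the maximum.

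Concretely, suppose the real edges contain a directed cycle $v_1 \to v_2 \to \cdots \to v_k \to v_1$, and let $\delta := \min_{1 \le i \le k} c(v_i, v_{i+1}) > 0$ (indices taken mod $k$). Define a new assignment $c'$ by subtracting $\delta$ from $c$ on every edge of the cycle, and leaving all other values (and all of $s$) unchanged. I would then verify the three CPS conditions:
\begin{itemize}
\item Condition 1: for each $v_i$ on the cycle, $c_{in}(v_i)$ loses exactly $\delta$ (from the incoming cycle edge $(v_{i-1}, v_i)$), and $c_{out}(v_i)$ also loses exactly $\delta$ (from the outgoing cycle edge $(v_i, v_{i+1})$); both sides of the inequality drop by $\delta$, so it is preserved. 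Nodes outside the cycle are unaffected.
\item Condition 2: $c_{out}(v_i)$ only decreases, and the right-hand side is unchanged, so it still holds.
\item Condition 3: $c'(u,v) \le c(u,v) \le s(u)$ trivially.
\end{itemize}
By the choice of $\delta$, at least one edge of the cycle now has $c'(u,v) = 0$, i.e.\ has ceased to be a real edge; no new real edges are created. Thus the number of real edges strictly decreases, while $\sum_{v \in V} s(v)$ remains maximal.

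Iterating this cancellation-around-cycles step, after finitely many iterations no directed cycle of real edges remains, and we obtain a CPS with the same (maximal) stabilization time whose real edges form a DAG. I do not expect a genuinely hard step here: the whole argument is a discrete analogue of flow cancellation, and the only thing to be careful about is that the simultaneous decrease of $c_{in}$ and $c_{out}$ at each cycle vertex exactly matches, which is why condition 1 (and not just conditions 2 and 3) is preserved. The relaxed version (condition 1R) of the CPS is handled by the same argument verbatim, since the cancellation never changes $s$ and never increases $c_{in} - c_{out}$.
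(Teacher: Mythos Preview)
Your proof is correct and uses essentially the same idea as the paper: cancel the minimum $c$-value around a directed cycle of real edges, which preserves all CPS conditions and the total number of switches while eliminating at least one real edge. The only cosmetic difference is that the paper packages this as a minimality argument (pick a maximal-switch CPS with minimal $\sum_{e} c(e)$ and derive a contradiction from any real-edge cycle), whereas you iterate the cancellation and terminate via the strictly decreasing real-edge count; the underlying step is identical.
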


\begin{proof}
Among the CPSs on $n$ nodes with maximal stabilization time, let us take the CPS $P$ where the sum $\sum_{e \in \widehat{E}} c(e)$ is minimal. Assume that there is a directed cycle along the real edges of this CPS, and let $c(e_0)$ denote the minimal value of function $c$ along this cycle.

Now consider the CPS $P'$ where the value of $c$ on each edge of this directed cycle is decreased by $c(e_0)$. Since in each affected node, the inputs and outputs have been decreased by the same value, $P'$ still satisfies all three conditions, and thus it is also a valid CPS.  Moreover, $P'$ has the same amount of total switches as $P$. However, since $c(e_0)>0$, the sum of $c(e)$ values in $P'$ is less than in $P$, which contradicts the minimality of $P$. \qedhere

\end{proof}

Hence for the upper bound proof, we can assume that the real edges of the CPS form a DAG. In the rest of the section, we focus on this DAG composed of the real edges of the CPS. We first show that for convenience, we can also assume that each base node is a source in this DAG.

\begin{lemma} \label{baseSource}
There exists a CPS with maximal stabilization time where each base node is a source node of the DAG.
\end{lemma}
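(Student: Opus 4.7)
The plan is to start from the previous lemma, obtaining a Relaxed CPS $P$ with maximal stabilization time whose real edges form a DAG, and then modify it by simply deleting every real edge whose head is a base node. Formally, define $P'$ by keeping $s$ unchanged and setting $c(u,v):=0$ for every $(u,v)\in\widehat{E}$ with $s(v) < s_0$, while leaving all other $c$-values as in $P$. My goal is to show $P'$ is still a valid Relaxed CPS with the same $\sum_v s(v)$, and then argue that each base node is a source of the real-edge DAG in $P'$.

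For the verification, I would check the three conditions in turn. Conditions 2 and 3 bound $c_{out}(v)$ and $c(u,v)$ from above in terms of the unchanged quantity $s$, and since the change can only decrease $c$-values, both conditions continue to hold trivially. The only condition that requires genuine care is 1R. Since 1R only needs to hold for non-base nodes $v$ with $s(v)\geq s_0$, and since we have only modified edges pointing \emph{into} base nodes, such a $v$ has the same $c_{in}(v)$ as before, while its $c_{out}(v)$ may have decreased (if some of its outgoing edges led to base nodes). Hence the inequality $c_{in}(v)\geq \lambda\cdot\text{deg}(v)\cdot s(v)+c_{out}(v)$ is, if anything, easier to satisfy in $P'$ than in $P$.

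It then remains to check that the real edges of $P'$ still form a DAG (immediate, as they are a subset of the real edges of $P$) and that every base node is a source (immediate, since we zeroed out all directed edges leading into base nodes). Because $s$ was not touched, the total number of switches is unchanged, so $P'$ still realizes the maximal stabilization time.

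The main obstacle is essentially conceptual rather than technical: one must notice that this deletion is only safe because condition~1R has been relaxed away for base nodes in the previous subsection. In an unrelaxed CPS, zeroing the incoming edges of a base node $v$ could easily violate $c_{in}(v)+\text{deg}(v)\geq \lambda\cdot\text{deg}(v)\cdot s(v)+c_{out}(v)$ whenever $s(v)$ is not tiny; this is exactly the role the $\text{deg}(v)$ slack was playing, and it is precisely what the relaxation sidesteps. Beyond this observation, the argument is a routine monotonicity check.
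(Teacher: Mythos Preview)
Your proposal is correct and follows essentially the same approach as the paper: zero out all incoming edges of base nodes, observe that condition~1R is vacuous at base nodes and only becomes easier elsewhere (since $c_{out}$ can only drop), while conditions~2 and~3 are upper bounds on $c$ and hence preserved. The paper's version is terser, but the argument is identical.
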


\begin{proof}
Note that by removing an input edge $(u,v)$ of a base node $v$ (that is, setting $c(u,v)$ to 0), the remaining CPS is still valid, since node $v$ does not have to satisfy condition 1R, and in node $u$, only the sum of outputs was decreased. Therefore, we can remove all the input edges of each base node, and hence base nodes will all become source nodes of the DAG. \qedhere
\end{proof}

\begin{lemma} \label{lem:constbase}
For each directed edge $(u,v)$ in the DAG where $u$ is a source node, $c(u,v)=O(1)$. More specifically, $c(u,v) \leq s_0$.
\end{lemma}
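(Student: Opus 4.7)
The plan is to argue that any source node $u$ of the DAG is forced to be a base node, after which the bound follows immediately from condition 3 of the CPS definition.

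First I would note the defining property of a source in the DAG of real edges: $u$ has no incoming real edge, which by definition of \emph{real edge} means $c(w,u)=0$ for every $w \in N(u)$, and hence $c_{in}(u)=0$. This is the only structural fact about $u$ that I will need.

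Next I would argue that $u$ must be a base node, i.e.\ $s(u) < s_0$. Suppose for contradiction that $s(u) \geq s_0$. Then condition 1R applies to $u$, giving
\[
0 \;=\; c_{in}(u) \;\geq\; \lambda \cdot \text{deg}(u) \cdot s(u) + c_{out}(u).
\]
Since $\lambda > 0$, $\text{deg}(u) \geq 1$ (the edge $(u,v)$ exists), and $c_{out}(u) \geq 0$, this forces $s(u) = 0$, contradicting $s(u) \geq s_0 \geq 1$. Hence $s(u) < s_0$, so $u$ is a base node.

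Finally, applying condition 3 of the CPS definition to the edge $(u,v)$ gives $c(u,v) \leq s(u) < s_0$, which yields $c(u,v) \leq s_0$ as claimed. There is no real obstacle here; the whole content of the lemma is the observation that a source has zero input conflicts, so condition 1R would be immediately violated unless $u$ switches very few times, and then condition 3 transfers that bound to every outgoing edge.
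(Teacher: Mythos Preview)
Your proof is correct and follows essentially the same approach as the paper: both use that a source has $c_{in}(u)=0$, so condition~1R forces $s(u)$ (and $c_{out}(u)$) to vanish in the non-base case, after which condition~3 gives the bound. The only cosmetic difference is that the paper phrases the non-base case as yielding $c(u,v)=0$ directly rather than deriving a contradiction; your assumption $s_0\geq 1$ is implicit in the paper's choice $s_0=1/\epsilon$.
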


\begin{proof}
If $u$ is a base node, then $s(u) \leq s_0$, so $c(u,v) \leq s_0$ due to condition 3. Otherwise, condition 1R must hold, and since $u$ has no input nodes, we get $0 \geq c_{out}(u) + \lambda \cdot \text{deg}(u) \cdot s(u)$, hence $c_{out}(u)=0$, so $c(u,v) = 0$ for every $v$. Thus $c(u,v) \leq s_0$. \qedhere
\end{proof}

\subsection{Edge potential}

As a main ingredient of the proof, we define a way to measure how close we are to propagating conflicts optimally.
\begin{definition}[\textbf{Potential}]
Given a real edge $e \in \widehat{E}$,
the potential of $e$ is defined as $P(e)=c(e)^{1/f(\lambda)}$.
\end{definition}
\noindent For simplicity of notation, we also use $P$ to denote the function $x \rightarrow x^{1/f(\lambda)}$ on real numbers instead of edges.

Intuitively speaking, the potential function describes the cost of sending a specific number of conflicts through a single edge, in terms of the number of initial conflicts used up for this. Note that since $f(\lambda)<1$, the function $P$ is always convex. This shows that sending a high number of conflicts through a single edge is more costly than sending the same amount of conflicts through multiple edges.

As the following lemma shows, the potential is defined in such a way that the total potential can never increase when passing through a node in the DAG; the best that a node can do is to preserve the input potential if it relays conflicts optimally.

\begin{lemma} \label{preservePot}
For any non-source node $v$ of the DAG, with input edges from $N_{in}(v)$ and output edges to $N_{out}(v)$, we have
\[  \sum_{u \in N_{in}(v)} P(u,v) \geq \sum_{u \in N_{out}(v)} P(v,u) .\]
\end{lemma}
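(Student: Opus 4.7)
The plan is to reduce the per-node inequality to a one-variable inequality in the normalized output rate $\varphi:=c_{out}(v)/(\text{deg}(v)\cdot s(v))$, and then observe that the resulting inequality is exactly what the definition of $f(\lambda)$ guarantees. Writing $\alpha:=1/f(\lambda)>1$, the potential function $P(x)=x^\alpha$ is convex, which drives both halves of the argument: I lower-bound the input potential by Jensen's inequality on this convex function, and upper-bound the output potential by concentrating mass via the per-edge cap from condition 3.

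Concretely, with $A:=|N_{in}(v)|$, Jensen gives $\sum_{u\in N_{in}(v)}c(u,v)^\alpha\geq A^{1-\alpha}c_{in}(v)^\alpha$. Since $v$ is a non-source, Lemma \ref{baseSource} ensures $s(v)\geq s_0$, so condition 1R applies and yields $c_{in}(v)\geq \lambda\,\text{deg}(v)\,s(v)+c_{out}(v)$. For the output side, condition 3 gives $c(v,u)\leq s(v)$, so with $\alpha-1>0$ I can bound termwise $c(v,u)^\alpha\leq c(v,u)\cdot s(v)^{\alpha-1}$ and sum, obtaining $\sum_{u\in N_{out}(v)}c(v,u)^\alpha\leq s(v)^{\alpha-1}\cdot c_{out}(v)$. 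The final ingredient is a bound on $A$: each real output edge carries at most $s(v)$ conflicts, so the number $B:=|N_{out}(v)|$ of real output edges satisfies $B\geq c_{out}(v)/s(v)$; and since the real edges form a DAG, in- and out-neighbors at $v$ are disjoint, so $A\leq\text{deg}(v)-c_{out}(v)/s(v)$. Using $1-\alpha<0$, this flips to $A^{1-\alpha}\geq(\text{deg}(v)-c_{out}(v)/s(v))^{1-\alpha}$.

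Substituting $\varphi=c_{out}(v)/(\text{deg}(v)\,s(v))\in[0,(1-\lambda)/2]$ (the upper bound coming from condition 2), the common factors of $\text{deg}(v)$ and $s(v)^\alpha$ cancel on both sides and the claim collapses to
\[
(1-\varphi)^{1-\alpha}(\lambda+\varphi)^\alpha \;\geq\; \varphi.
\]
Taking logarithms (all relevant quantities are positive, and $\varphi\leq(1-\lambda)/2<1/2$ keeps the sign bookkeeping clean) and rearranging, this becomes
\[
f(\lambda) \;\geq\; \frac{\log\!\left(\frac{1-\varphi}{\lambda+\varphi}\right)}{\log\!\left(\frac{1-\varphi}{\varphi}\right)},
\]
which holds for every admissible $\varphi$ by the very definition of $f(\lambda)$ as the maximum of that ratio. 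The degenerate cases $c_{out}(v)=0$ and $s(v)=0$ make the output potential vanish, so they are handled separately with nothing to check. The main subtlety I expect is confirming that the two extremizations interact correctly: Jensen is tight only when the inputs are spread over as many edges as possible, while the output upper bound is tight only when each output edge is saturated at $s(v)$, and these two tight cases must be simultaneously compatible with the degree constraint $A+B\leq\text{deg}(v)$. This is exactly what the optimal value $\varphi^*$ in the definition of $f(\lambda)$ is engineered to achieve, so that no slack is lost in the reduction and the exponent $1/f(\lambda)$ is exactly the smallest one for which the lemma can hold uniformly in $\varphi$.
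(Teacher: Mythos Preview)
Your proof is correct and follows essentially the same approach as the paper: both lower-bound the input side via convexity (equal split over inputs) and upper-bound the output side via the per-edge cap $c(v,u)\leq s(v)$, then reduce to the one-variable inequality that is guaranteed by the definition of $f(\lambda)$. The only cosmetic difference is your choice of parameter: you set $\varphi=c_{out}(v)/(\deg(v)\,s(v))$, whereas the paper takes $\varphi=|N_{out}(v)|/\deg(v)$ and argues by passing to the extremal configuration where every output edge is saturated at $s(v)$; in that extremal case the two parameterizations coincide, and your inequality chain (Jensen for the inputs, the termwise bound $c(v,u)^\alpha\le c(v,u)\,s(v)^{\alpha-1}$ for the outputs, and $A\le\deg(v)-c_{out}/s(v)$) makes the reduction to the extremal case explicit rather than asserted.
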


\begin{proof}
If $v$ is not a source, then by Lemma \ref{baseSource} it is not a base node, and thus has to satisfy condition 1R. In our DAG, $c_{in}$ and $c_{out}$ correspond to $\sum_{u \in N_{in}(v)} c(u,v)$ and $\sum_{u \in N_{out}(v)} c(v,u)$, respectively. Assume that we fix the value of $c_{in}$ and $c_{out}$. Since the potential function $P$ is convex, the incoming potential (left side) is minimized if $c_{in}$ is split as equally among the input neighbors as possible. On the other hand, the outgoing potential (right side) is maximized if $c_{out}$ is split as unequally among outputs as possible, so all output edges present in the DAG have the maximal possible number of switches, meaning $c(v,u)=s(v)$ for every $u \in N_{out}(v)$.

Assume that a fraction $\varphi$ of $v$'s incident edges are outgoing, i.e. $|N_{out}(v)|=\varphi \cdot \text{deg}(v)$ and $|N_{in}(v)|=(1-\varphi) \cdot \text{deg}(v)$. By condition 1R, we have $c_{in} \geq \lambda \cdot \text{deg}(v) \cdot s(v) + c_{out}$; with $c_{out}=\varphi \cdot \text{deg}(v) \cdot s(v)$, this gives $c_{in} \geq (\lambda+\varphi) \cdot \text{deg}(v) \cdot s(v)$. If split evenly among the $(1-\varphi) \cdot \text{deg}(v)$ inputs, this means
\[ \frac{c_{in}}{|N_{in}(v)|} \geq \frac{(\lambda+\varphi) \cdot \text{deg}(v) \cdot s(v)}{(1-\varphi) \cdot \text{deg}(v)} = \left( \frac{\lambda+\varphi}{1-\varphi} \right) \cdot s(v) \]
switches for each input node. The inequality on the potential then comes down to
\begin{gather*}
 \sum_{u \in N_{in}(v)} P(u,v) \geq (1-\varphi) \cdot \text{deg}(v) \cdot \left( \frac{\lambda+\varphi}{1-\varphi} \cdot s(v) \right)^{1/f(\lambda)} \geq \\ \geq \varphi \cdot \text{deg}(v) \cdot s(v)^{1/f(\lambda)} \geq \sum_{u \in N_{out}(v)} P(v,u).
\end{gather*}
To show that the inequality in the middle holds, we only require
\[ \left( \frac{\lambda+\varphi}{1-\varphi} \right)^{1/f(\lambda)} \geq \frac{\varphi}{1-\varphi}, \]
or, put otherwise,
\[ \frac{1}{f(\lambda)} \log \left( \frac{\lambda+\varphi}{1-\varphi} \right) \geq \log \left( \frac{\varphi}{1-\varphi} \right). \]
Since $\frac{\varphi}{1-\varphi}<1$ (thus its logarithm is negative), we get
\[  \frac{\log \left( \frac{\lambda+\varphi}{1-\varphi} \right)}{\log \left( \frac{\varphi}{1-\varphi} \right)} = \frac{\log \left( \frac{1-\varphi}{\lambda+\varphi} \right)}{\log \left( \frac{1-\varphi}{\varphi} \right)} \leq f(\lambda). \]
This holds by the definition of $f(\lambda)$. Note that this also shows that equality can only be achieved if the output rate $\varphi$ is indeed chosen as the argmax value $\varphi^*(\lambda)$. 
\end{proof}

Lemma \ref{preservePot} provides the key insight to the main idea of our proof: if we process the nodes of a DAG according to a topological ordering, always maintaining a dicut of outgoing edges from the already processed part of the DAG, then this potential cannot ever increase when adding a new node.

\begin{lemma} \label{cutbound}
Given a dicut $S$ of a dipartitioning in the DAG, we have
\[  \sum_{e \in S} P(e) = O(n^2) .\]
\end{lemma}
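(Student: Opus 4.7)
My plan is to reduce to a simple initial dipartitioning and then use \textsc{Lemma} \ref{preservePot} to show that the dicut potential cannot increase as we deform it into the given one. First, I would identify a trivial starting dipartitioning: let $V_1^{(0)}$ be exactly the set of source nodes of the DAG, and $V_2^{(0)}$ all other nodes. This is a valid dipartitioning because source nodes have no incoming edges. The associated dicut $S^{(0)}$ consists of all edges leaving source nodes. By Lemma \ref{lem:constbase}, every such edge $e$ satisfies $c(e) \leq s_0$, and hence $P(e) \leq s_0^{1/f(\lambda)}$, which is a constant. Since the DAG has at most $\binom{n}{2}$ edges, we get $\sum_{e \in S^{(0)}} P(e) = O(n^2)$.

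Next, I would deform $(V_1^{(0)}, V_2^{(0)})$ into the given $(V_1, V_2)$ one node at a time. The key structural observation is that any $V_1$ coming from a dipartitioning must be closed under predecessors: if some $v \in V_1$ had a predecessor $u \in V_2$, then the edge $(u,v)$ would violate the definition of dipartitioning. Consequently, I can add the nodes of $V_1 \setminus V_1^{(0)}$ to the current $V_1$ one by one in a topological order of the DAG, maintaining a valid dipartitioning at each intermediate step.

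Each insertion step works as follows: when node $v$ is moved from the current $V_2$ to the current $V_1$, all of its in-neighbors already lie in the current $V_1$ (by the topological order and by closure under predecessors), so every input edge of $v$ is currently in the dicut and leaves it. Similarly, no out-neighbor of $v$ has yet been processed, so every output edge of $v$ newly enters the dicut. Therefore the net change in the dicut potential is exactly $\sum_{u \in N_{out}(v)} P(v,u) - \sum_{u \in N_{in}(v)} P(u,v)$, which is $\leq 0$ by Lemma \ref{preservePot} (the lemma applies since the node is not a source, and hence by Lemma \ref{baseSource} not a base node either). Iterating over all insertions yields $\sum_{e \in S} P(e) \leq \sum_{e \in S^{(0)}} P(e) = O(n^2)$.

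The main subtlety I expect is verifying that the hypotheses of Lemma \ref{preservePot} are satisfied at each step, that is, that every input edge of $v$ really is in the current dicut and every output edge really becomes part of the new dicut. Both facts follow from (i) the closure of $V_1$ under predecessors, which migrates to every intermediate $V_1$ as we grow it in topological order, and (ii) processing new nodes in a topological order so that successors are not yet inserted. Beyond this, the argument is essentially a straightforward telescoping via Lemma \ref{preservePot}, and the implicit constant in the $O(n^2)$ bound depends only on $s_0$ and $\lambda$.
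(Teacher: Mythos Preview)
Your proposal is correct and follows essentially the same argument as the paper: start from the trivial dipartitioning consisting of the source nodes, bound the initial dicut potential by $O(n^2)$ via Lemma~\ref{lem:constbase}, and then grow $V_1$ one node at a time in topological order, applying Lemma~\ref{preservePot} at each step to show the dicut potential is non-increasing. The only addition worth noting is that your explicit observation that any dipartitioning's $V_1$ is closed under predecessors is what guarantees each intermediate pair is itself a valid dipartitioning; the paper establishes this same fact via a short inductive lemma in the appendix.
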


\renewcommand{\proofname}{Proof (Sketch).}

\begin{proof}

Each dipartitioning can be obtained by starting from the trivial dipartitioning where $V_1$ only contains the source nodes of the DAG, and then iteratively adding nodes one by one to this initial $V_1$. The number of outgoing edges from this initial $V_1$ (the set of source nodes) is upper bounded by $|E|=O(n^2)$. According to Lemma \ref{lem:constbase}, the number of switches (and hence the potential) on each edge of the dicut is at most constant, so the sum of potential in this initial dicut is also $O(n^2)$.

Now consider the process of iteratively adding nodes to this initial $V_1$ to obtain a specific dipartitioning. Whenever we add a new node $v$ to $V_1$, the incoming edges of $v$ are removed from the dicut, and the outgoing edges of $v$ are added to the dicut. According to Lemma \ref{preservePot}, the potential on the outgoing edges of $v$ is at most as much as the potential on the incoming edges, so the sum of potential can not increase in any of these steps. Therefore, when arriving at the final $V_1$, the sum of potential on the cut edges is still at most $O(n^2)$.
\end{proof}

\renewcommand{\proofname}{Proof}

\subsection{Upper bounding switches} \label{sec:upperlast}

Finally, we present our main lemma that uses the previous upper bound on potential in order to upper bound the number of switches in the CPS.

\begin{lemma} \label{lem:upper}
Given a CPS and an integer $a \in \{ 1, ..., n \}$, let $A=\{ v \in V \: | \: a \leq \text{deg(v) \textless \, 2a } \}$. For the total number of switches $s(A)=\sum_{v \in A} s(v)$, we have
\[ s(A) = O \left( n^{1+f(\lambda)} \cdot a^{-f(\lambda)} \right). \]
\end{lemma}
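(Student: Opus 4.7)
The plan is to combine Jensen's inequality with the cut potential bound of Lemma~\ref{cutbound}, then apply Hölder's inequality to sum over $A$. First I dispose of source nodes in $A$: by Lemma~\ref{lem:constbase} (and by condition~1R for non-base sources, which forces $s(v)=0$), each source $u$ satisfies $s(u) \leq s_0 = O(1)$, so these contribute $O(n)$ to $s(A)$, which is absorbed by $O(n^{1+f(\lambda)} a^{-f(\lambda)})$ as $a \leq n$. Let $A'$ denote the non-source portion of $A$; every $v \in A'$ satisfies condition~1R.

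Next, for each $v \in A'$ I apply Jensen's inequality. Since $P(x) = x^{1/f(\lambda)}$ is convex (as $f(\lambda) < 1$), spreading $c_{in}(v)$ evenly across input edges minimizes $P_{in}(v)$, giving
\[ P_{in}(v) \geq |N_{in}(v)|^{1 - 1/f(\lambda)} \cdot c_{in}(v)^{1/f(\lambda)} \geq \lambda^{1/f(\lambda)} \cdot d_v \cdot s(v)^{1/f(\lambda)} \geq \lambda^{1/f(\lambda)} \cdot a \cdot s(v)^{1/f(\lambda)}, \]
using $|N_{in}(v)| \leq d_v$ together with $1 - 1/f(\lambda) < 0$, and $c_{in}(v) \geq \lambda d_v s(v)$ from condition~1R. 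To bound the aggregate $\sum_{v \in A'} P_{in}(v)$, I use the dipartitioning with $V_2 := A' \cup \mathrm{desc}(A')$ and $V_1 := V \setminus V_2$; this is valid since $V_2$ is upward-closed in the DAG and contains no sources, so Lemma~\ref{cutbound} gives $\sum_{e \in S} P(e) = O(n^2)$ for the corresponding dicut $S$. Processing nodes in any topological order, Lemma~\ref{preservePot} shows the cut potential is nonincreasing, and the contribution of each $v \in A'$ to the total decrease is $P_{in}(v) - P_{out}(v) \geq 0$. Since these nonnegative terms over $A'$ are a subset of the total decrease, I obtain $\sum_{v \in A'}(P_{in}(v) - P_{out}(v)) \leq O(n^2)$; combined with the upper bound $P_{out}(v) \leq \frac{1-\lambda}{2} d_v s(v)^{1/f(\lambda)}$ (from condition~2 and convexity of $P$), a rearrangement yields $\sum_{v \in A'} s(v)^{1/f(\lambda)} = O(n^2 / a)$.

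Finally, Hölder's inequality with conjugate exponents $1/f(\lambda)$ and $1/(1-f(\lambda))$ gives
\[ s(A') = \sum_{v \in A'} s(v) \leq |A'|^{1 - f(\lambda)} \left( \sum_{v \in A'} s(v)^{1/f(\lambda)} \right)^{f(\lambda)} \leq n^{1 - f(\lambda)} \cdot O\!\left( (n^2/a)^{f(\lambda)} \right) = O(n^{1+f(\lambda)} a^{-f(\lambda)}), \]
using $|A'| \leq n$; adding the $O(n)$ source contribution gives the claim. The hard part is the rearrangement inside the second step: a direct combination of the Jensen lower bound $P_{in}(v) \geq \lambda^{1/f(\lambda)} a\, s(v)^{1/f(\lambda)}$ with $P_{out}(v) \leq (1-\lambda) a\, s(v)^{1/f(\lambda)}$ cleanly separates only when $\lambda^{1/f(\lambda)} > 1 - \lambda$, which fails for small $\lambda$. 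Making this work uniformly in $\lambda$ will likely require using the sharper Jensen bound in terms of the adversary's actual output rate $\varphi_v = |N_{out}(v)|/d_v$, so that the extremality of $\varphi^* = \varphi^*(\lambda)$ encoded in the definition of $f(\lambda)$ delivers the slack that separates the two sides of the inequality.
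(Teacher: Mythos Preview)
Your telescoping bound $\sum_{v\in A'}\bigl(P_{in}(v)-P_{out}(v)\bigr)\le O(n^2)$ is correct, but it does not give $\sum_{v\in A'} s(v)^{1/f(\lambda)}=O(n^2/a)$, and the fix you propose does not close the gap. The point is that the potential $P$ was engineered precisely so that a node operating at the extremal rate $\varphi^*$ passes \emph{all} of its incoming potential to its outputs. If you carry out the sharper Jensen computation with $\varphi_v=|N_{out}(v)|/d_v$ and saturated outputs $c(v,u)=s(v)$, you obtain
\[
P_{in}(v)-P_{out}(v)\ \ge\ \Bigl[(1-\varphi_v)^{1-1/f(\lambda)}(\lambda+\varphi_v)^{1/f(\lambda)}-\varphi_v\Bigr]\, d_v\, s(v)^{1/f(\lambda)},
\]
and the bracket is exactly the quantity shown to be nonnegative in Lemma~\ref{preservePot}, with equality at $\varphi_v=\varphi^*(\lambda)$. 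So at the optimum the bracket is $0$: a node can have arbitrarily large $s(v)$ while contributing nothing to your telescoping sum. The ``extremality of $\varphi^*$'' is not slack you can exploit---it is the reason the slack vanishes. Concretely, a long chain of nodes in $A'$ each running at $\varphi^*$ would have $\sum(P_{in}-P_{out})=0$ along the chain while $\sum s(v)^{1/f(\lambda)}$ is as large as you like.

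The obstruction is exactly that potential can be reused by several nodes of $A$ lying on the same directed path, so no single dicut captures all their input edges. The paper handles this not by bounding the losses $P_{in}-P_{out}$, but by first modifying the CPS via a \emph{responsibility technique}: for each $v_0\in A$ in topological order one subtracts from $c$ and $s$ the ``downstream effects'' of $v_0$, turning $v_0$ into a sink while keeping the CPS valid and changing each $s(v_0)$ by at most a constant factor (the effects form a geometric series with ratio $\frac{1-\lambda}{1+\lambda}<1$, and all nodes of $A$ have degree within a factor~$2$). After this preprocessing no directed path connects two nodes of $A$, so there is a dipartitioning whose dicut contains \emph{all} input edges of $A$; then Lemma~\ref{cutbound} bounds $\sum_{v\in A}P_{in}(v)$ directly, and your Jensen/H\"older steps finish the argument. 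Your proof needs an ingredient of this kind to prevent potential from being counted once in the telescoping sum while serving many nodes of $A$.
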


\renewcommand{\proofname}{Proof (Sketch).}

\begin{proof}

If the input edges of the nodes in $A$ would form the dicut of a dipartitioning, then we could directly use Lemma \ref{cutbound} to upper bound the number of switches in $A$ through the potential of the input edges. However, the nodes of $A$ might be scattered arbitrarily in the DAG, and if there is a directed path from one node in $A$ to another, then the `same' potential might be used to switch more than one node in $A$. Thus we cannot apply Lemma \ref{cutbound} directly. Instead, our proof consists of two parts.

1. First, we define so-called responsibilities for the nodes in $A$.
Given a node $v_0 \in A$, the idea is to devise two different functions: (i) a function $\Delta c(e)$, defined on each edge $e$ which is contained in any directed path starting from $v_0$, and (ii) a function $\Delta s(v)$, which is defined on any node $v$ that is reachable from $v_0$ on a directed path. Intuitively, we will consider the conflicts $\Delta c(e)$ and the switches $\Delta s(v)$ to be those that are indirectly `the effects of the switches of $v_0$'. More specifically, $\Delta c$ and $\Delta s$ are chosen such that if they are removed (subtracted from the CPS), then $v_0$ has no output edges in the DAG anymore, and the resulting assignment $s'(v)=s(v)-\Delta s(v)$ and $c'(e)=c(e)-\Delta c(e)$ still remains a valid CPS. Hence the subtraction results in a CPS where $v_0$ has no directed path to other nodes in $A$ anymore. This shows that we can keep on executing this step for each $v_0 \in A$ until no two nodes in $A$ are connected by a directed path, at which point we can apply Lemma \ref{cutbound} to the resulting graph.

Whenever we process such a node $v_0 \in A$, we define the \textit{responsibility} of $v_0$ as $R(v_0):=s(v_0) + \sum \Delta s (v)$, where the sum is understood over all the nodes $v \in A$ that are reachable from $v_0$. The main idea is that we `reassign' these switches to $v_0$ from other nodes in $A$. This method is essentially a redistribution of switches in the CPS, so we have $\sum_{v \in A}s(v) = \sum_{v \in A} R(v)$ altogether.

Furthermore, our definition of $\Delta s(v)$ will ensure that $R(v_0)=O(1) \cdot s(v_0)$. Intuitively, this can be explained as follows. Recall that with Rule II, the ratio of output to input conflicts is always upper bounded by a constant factor (below 1) at every node, since switching always wastes a specific proportion of conflicts. Hence, over any path starting from $v_0$, the number of outputs that can be attributed to $v_0$ forms a geometric series. As the ratio of the geometric series is below 1, the total amount of conflicts caused by $v_0$ this way is still within the magnitude of the input conflicts of $v_0$. Since each node in $A$ has similar degree (and thus requires similar number of input conflicts for one switching), these conflicts can only switch nodes in $A$ approximately the same number of times as $v_0$ can be switched by its own inputs. A more detailed discussion of this responsibility technique is available in Appendix \ref{App:A}.

2. For the second part of the proof, we show the claim in this modified CPS with no directed path between nodes in $A$. This implies that there exists a dipartitioning where the nodes of $A$ are in $V_2$, but all their input nodes are in $V_1$. This means that all the input edges of each node in $A$ are included in the dicut $S$ of the partitioning.

Consider a node $v \in A$. Due to condition 1R, $v$ has at least $\lambda \cdot \text{deg}(v) \cdot s(v)$ input conflicts. Even if these are distributed equally on all incident edges of $v$ (this is the case that amounts to the lowest total potential, since $P$ is convex), this requires a total input potential of
\[ \text{deg}(v) \cdot P(\lambda \cdot s(v)) = \text{deg}(v) \cdot s(v)^{1/f(\lambda)} \cdot \lambda^{1/f(\lambda)} \]
at least. Recall that Lemma \ref{cutbound} shows that the total potential on all edges in $S$ is $O(n^2)$. Our task is hence to find an upper bound on $\sum_{v \in A} s(v)$, subject to
\[ \sum_{v \in A} \text{deg}(v) \cdot s(v)^{1/f(\lambda)} \cdot \lambda^{1/f(\lambda)} = O(n^2) .\]
Since the last factor on the left side is a constant, we can simply remove it and include it in the $O(n^2)$ term. Furthermore, the degree of each node in $A$ is at least $a$, so by lower bounding each degree by $a$, we get
\[ \sum_{v \in A} s(v)^{1/f(\lambda)} = O(n^2) \cdot \frac{1}{a} .\]
Given this upper bound on $\sum_{v \in A} P(s(v))$, since the function $P$ is convex, the sum of switches $\sum_{v \in A} s(v)$ is maximal when each node in $A$ switches the same amount of times (i.e. there is an $s$ such that $s(v)=s$ for every $v \in A$), giving
\[ |A| \cdot s^{1/f(\lambda)} = O(n^2) \cdot \frac{1}{a} .\]
With this upper bound, $|A| \cdot s$ is maximal if $|A|$ is as large as possible and $s$ as small as possible (again because $P$ grows faster than linearly). Clearly $|A| \leq n$, so assuming $|A|=n$, we get
\[ s^{1/f(\lambda)} = O(n) \cdot \frac{1}{a} ,\]
which means that
\[ s = O(n^{f(\lambda)}) \cdot a^{-f(\lambda)} ,\]
and thus for the total number of switches in $A$, we get
\[ |A| \cdot s = O(n^{1+f(\lambda)}) \cdot a^{-f(\lambda)} . \qedhere \]

\end{proof}

\renewcommand{\proofname}{Proof}

It only remains to sum up this bound for the appropriate intervals to obtain our final bound. Let us consider the intervals $[1,2)$, $[2,4)$, $[4,8)$, ..., i.e. $a=2^k$ for each factor of 2 up to $n$, which is a disjoint partitioning of the possible degrees. Note that for these specific values of $a$, the sum $ \sum_{k=0}^{\infty} (2^k)^{-f(\lambda)}$ converges to a constant according to the ratio test. In other words, the sum is dominated by the number of switches of the lowest (constant) degree nodes, and hence, the total number of switches in the graph can be upper bounded by $O(1) \cdot n^{1+f(\lambda)}$.

Recall that since we work with Relaxed CPSs, we lose an $\epsilon$ in the exponent of this upper bound when we carry the result over to an original CPS.

\begin{theorem}
In any CPS with parameter $\lambda$, we have $\sum_{v \in V} s(v) =O(n^{1+f(\lambda)+\epsilon})$ for any $\epsilon>0$.
\end{theorem}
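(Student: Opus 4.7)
The plan is to combine Lemma~\ref{lem:upper} with a conversion from an actual CPS to a Relaxed CPS at slightly worsened parameter. Fix $\epsilon > 0$ and pick a (large) threshold constant $s_0 = s_0(\epsilon, \lambda)$, to be optimized below. Partition $V$ into \emph{base nodes} $V_b = \{v : s(v) < s_0\}$ and the rest $V_{nb} = V \setminus V_b$. The base nodes contribute at most $|V_b| \cdot s_0 \le n \cdot s_0 = O(n)$ switches in total, which is well within the target bound.

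For $v \in V_{nb}$, CPS condition~1 reads $c_{in}(v) + \text{deg}(v) \ge \lambda \cdot \text{deg}(v) \cdot s(v) + c_{out}(v)$; since $s(v) \ge s_0$, the slack term satisfies $\text{deg}(v) \le \frac{1}{s_0}\cdot \text{deg}(v) \cdot s(v)$, yielding $c_{in}(v) \ge \lambda' \cdot \text{deg}(v) \cdot s(v) + c_{out}(v)$ with $\lambda' := \lambda - 1/s_0$. This is exactly condition~1R with parameter $\lambda'$; conditions~2 and~3 remain valid when $\lambda$ is replaced by $\lambda'$ (condition~3 does not depend on $\lambda$, and condition~2 only weakens). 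Hence the assignment restricted to $V_{nb}$ forms a Relaxed CPS with parameter $\lambda'$.

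Next, apply the dyadic-degree argument already outlined before the theorem: partition $V_{nb}$ into $A_k = \{v \in V_{nb} : 2^k \le \text{deg}(v) < 2^{k+1}\}$ for $k = 0, 1, \ldots, \lceil \log_2 n \rceil$. Lemma~\ref{lem:upper}, instantiated with parameter $\lambda'$, gives $s(A_k) = O\!\left(n^{1+f(\lambda')} \cdot 2^{-k f(\lambda')}\right)$, and since $f(\lambda') > 0$, the geometric sum $\sum_{k \ge 0} 2^{-k f(\lambda')}$ converges to a constant depending only on $\lambda'$. Thus $\sum_{v \in V_{nb}} s(v) = O(n^{1+f(\lambda')})$. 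Finally, by continuity of $f$ (cf.\ Figure~\ref{fig:func} and Appendix~\ref{App:C}), $s_0$ can be chosen large enough that $f(\lambda') \le f(\lambda) + \epsilon$; combining with the $O(n)$ base-node contribution then gives the claimed $O(n^{1+f(\lambda)+\epsilon})$ overall.

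The main obstacle is quantitative rather than conceptual: one needs an explicit modulus of continuity for $f$ near $\lambda$, so that the choice of $s_0$ can be made concrete (and verified to be independent of $n$ for any fixed $\epsilon$). The smoothness of $f$ furnished by its defining transcendental equation, as discussed in Appendix~\ref{App:C}, supplies this, after which the remaining steps are routine bookkeeping. All the heavy machinery — the DAG reduction (Lemmas~\ref{baseSource} and~\ref{lem:constbase}), potential preservation (Lemma~\ref{preservePot}), the cut bound (Lemma~\ref{cutbound}), and the degree-class bound (Lemma~\ref{lem:upper}) — is already assembled, so the proof amounts to gluing these pieces together and verifying that nothing is lost at the relaxation boundary.
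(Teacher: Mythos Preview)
Your proposal is correct and follows essentially the same route as the paper: the paper (Appendix~\ref{App:A}, Section ``Relaxing the CPS definition'') likewise passes from a CPS with parameter $\lambda$ to a Relaxed CPS with parameter $\lambda' = \lambda - 1/s_0$ by absorbing the additive $\text{deg}(v)$ term into the $\lambda \cdot \text{deg}(v) \cdot s(v)$ term for nodes with $s(v) \ge s_0$, then applies Lemma~\ref{lem:upper} to the dyadic degree classes, sums the resulting geometric series, and invokes the continuity of $f$ to trade the shift in $\lambda$ for an additive $\epsilon$ in the exponent. Your handling of the base nodes and of conditions~2 and~3 under the parameter change matches the paper's, and your remark about needing a modulus of continuity for $f$ is the only point on which you are slightly more cautious than the paper, which simply asserts continuity.
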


\noindent Since we have established that every majority/minority process provides a CPS, the upper bound on their stabilization time also follows.

\begin{corollary} \label{the:upper}
Under Rule II with any $\lambda \in (0,1)$, every majority/minority process stabilizes in time $O(n^{1+f(\lambda)+\epsilon})$ for any $\epsilon > 0$.
\end{corollary}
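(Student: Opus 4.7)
The plan is to exhibit, from any given majority/minority process, a Conflict Propagation System whose total number of switches equals the stabilization time of the process, and then invoke the preceding theorem directly. Given a process, I would define $s(v)$ as the total number of times $v$ switches, and for each directed edge $(u,v) \in \widehat{E}$, define $c(u,v)$ as the number of conflicts on the undirected edge $\{u,v\}$ that were created by a switch of $u$ and later cleared by a switch of $v$, pairing creations with their subsequent removals chronologically along the timeline of that edge.

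Once these quantities are fixed, I would verify the three CPS conditions. Condition 3 is immediate: each switch of $u$ can toggle the conflict status of $\{u,v\}$ at most once, so at most $s(u)$ conflicts on that edge can be attributed to $u$, giving $c(u,v) \leq s(u)$. Condition 2 follows directly from Rule II: immediately after $v$ switches, it has at most $\frac{1-\lambda}{2} \cdot \text{deg}(v)$ incident conflicts, hence it creates at most that many new conflicts per switch, summing to $c_{out}(v) \leq \frac{1-\lambda}{2} \cdot \text{deg}(v) \cdot s(v)$.

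Condition 1 is the substantive one, and I would establish it by a supply-and-demand accounting of conflicts incident to $v$ over the whole process. To be switchable, $v$ must carry at least $\frac{1+\lambda}{2} \cdot \text{deg}(v)$ incident conflicts just before each of its $s(v)$ switches, while just after any switch it carries at most $\frac{1-\lambda}{2} \cdot \text{deg}(v)$; hence between consecutive switches (and before the first one) at least $\lambda \cdot \text{deg}(v)$ fresh incident conflicts must appear, producing a cumulative demand of $\lambda \cdot \text{deg}(v) \cdot s(v)$ appearances, plus an additional $c_{out}(v)$ appearances to compensate for conflicts that neighbors remove from $v$'s incident edges in the interim. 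On the supply side, new conflicts incident to $v$ arise either from the initial coloring (at most $\text{deg}(v)$ of them) or from switches of $v$'s neighbors (exactly $c_{in}(v)$ in total), yielding $c_{in}(v) + \text{deg}(v) \geq \lambda \cdot \text{deg}(v) \cdot s(v) + c_{out}(v)$.

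With all three conditions verified, the process induces a CPS, and the preceding theorem gives $\sum_{v \in V} s(v) = O(n^{1+f(\lambda)+\epsilon})$, which by definition bounds the stabilization time. The main subtlety I expect lies in Condition 1, specifically in not double-counting conflicts that $v$ creates at its own switches: the cleanest way to handle this is to phrase the per-switch demand as the \emph{net} drop of at least $\lambda \cdot \text{deg}(v)$ in $v$'s incident-conflict count, so that self-created conflicts automatically cancel against some of the self-performed removals and never enter either side of the inequality. Everything else is routine bookkeeping, and no further graph-theoretic ingredients are needed beyond the theorem already in hand.
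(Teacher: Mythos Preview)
Your proposal is correct and follows essentially the same route as the paper: the corollary is deduced by observing that any majority/minority process yields a CPS (with exactly your choices of $s$ and $c$) and then invoking the preceding theorem. Your verifications of Conditions~2 and~3 match the paper's, and your supply-and-demand argument for Condition~1 is the same idea the paper uses; the only wrinkle is that your phrase ``and before the first one'' is not literally true (no fresh conflicts need appear before the first switch if enough are present initially), but the paper handles precisely this edge case in Appendix~A.1 by noting that the extra $\frac{1-\lambda}{2}\cdot\text{deg}(v)$ demand at the first switch is offset by uncompensated outputs after the last switch, so your final inequality $c_{in}(v)+\text{deg}(v)\geq \lambda\cdot\text{deg}(v)\cdot s(v)+c_{out}(v)$ stands.
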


\section{Lower bound construction} \label{sec:lower}

Having established the most efficient way to relay conflicts, the high-level design of the matching lower bound construction is rather straightforward, following the level-based idea described in Section \ref{sec:prop}.

Given $\lambda$, we first determine the optimal output rate $\varphi=\varphi^*(\lambda)$. We then create a construction consisting of distinct levels, where each level has the same size, and each consists of a set of nodes that have the same degree. Since the degree should decrease by a factor of $\frac{\varphi}{1-\varphi}$ in each new level from top to bottom, we can add $L = \log_{\frac{1-\varphi}{\varphi}}(n)$ such levels to the graph. If each of these level has $\Theta(\frac{n}{\log{n}})$ nodes, then with the appropriate choice of constants, the total number of nodes is below $n$.

Each node in the construction is only connected to other nodes on the levels immediately above or below its own. All conflicts are propagated down in the graph, from upper to lower levels, so the upper neighbors of a node are always used as inputs, while the lower neighbors are always used as outputs. For the optimal propagation of conflicts, each node $v$ must have the optimal input-output rate, i.e. an up-degree of $(1-\varphi) \cdot \text{deg}(v)$ and a down-degree of $\varphi \cdot \text{deg}(v)$. Thus each consecutive level pair forms a regular bipartite graph, with $\frac{\varphi}{1-\varphi}$ of the degree of the level pair above. The construction is illustrated in Figure \ref{fig:constr}.

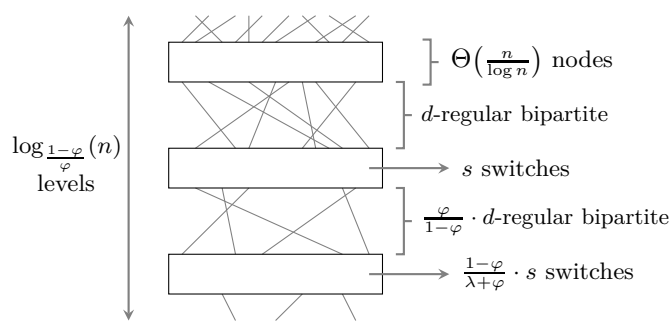
\begin{figure}
\centering
\captionsetup{justification=centering}
	\begin{tikzpicture}

	\draw[gray] (20pt,0pt) -- (25pt,-10pt);
	\draw[gray] (50pt,0pt) -- (40pt,-10pt);
	\draw[gray] (65pt,0pt) -- (70pt,-10pt);

	\draw[gray] (5pt,15pt) -- (30pt,40pt);
	\draw[gray] (25pt,15pt) -- (20pt,40pt);
	\draw[gray] (75pt,15pt) -- (65pt,40pt);
	\draw[gray] (35pt,15pt) -- (70pt,40pt);
	\draw[gray] (65pt,15pt) -- (10pt,40pt);
	
	\draw[gray] (10pt,55pt) -- (45pt,80pt);
	\draw[gray] (25pt,55pt) -- (5pt,80pt);
	\draw[gray] (30pt,55pt) -- (40pt,80pt);
	\draw[gray] (55pt,55pt) -- (50pt,80pt);
	\draw[gray] (60pt,55pt) -- (15pt,80pt);
	\draw[gray] (75pt,55pt) -- (55pt,80pt);
	\draw[gray] (50pt,55pt) -- (70pt,80pt);
	\draw[gray] (65pt,55pt) -- (30pt,80pt);
	
	\draw[gray] (5pt,95pt) -- (15pt,105pt);
	\draw[gray] (10pt,95pt) -- (25pt,105pt);
	\draw[gray] (20pt,95pt) -- (10pt,105pt);
	\draw[gray] (25pt,95pt) -- (35pt,105pt);
	\draw[gray] (30pt,95pt) -- (30pt,105pt);
	\draw[gray] (35pt,95pt) -- (45pt,105pt);
	\draw[gray] (40pt,95pt) -- (55pt,105pt);
	\draw[gray] (55pt,95pt) -- (70pt,105pt);
	\draw[gray] (60pt,95pt) -- (50pt,105pt);
	\draw[gray] (65pt,95pt) -- (75pt,105pt);
	\draw[gray] (75pt,95pt) -- (60pt,105pt);
	
	\draw (0pt,0pt) -- (0pt,15pt) -- (80pt,15pt) -- (80pt,0pt) -- cycle;
	\draw (0pt,40pt) -- (0pt,55pt) -- (80pt,55pt) -- (80pt,40pt) -- cycle;
	\draw (0pt,80pt) -- (0pt,95pt) -- (80pt,95pt) -- (80pt,80pt) -- cycle;
	
	\draw[gray, thick, arrows=stealth-stealth] (-15pt,-10pt) -- (-15pt,105pt);
	\node[anchor=east] at (-14pt,53pt) {\small $\log_{\frac{1-\varphi}{\varphi}}(n)$};
	\node[anchor=east] at (-24pt,42pt) {\small levels};
	
	\draw[gray, thick] (95pt, 96pt) -- (98pt, 96pt) -- (98pt, 79pt) -- (95pt, 79pt);
	\draw[gray, thick] (98pt, 87.5pt) -- (101pt, 87.5pt);
	\node[anchor=west] at (102pt,87.5pt) { $\Theta$\small$\left( \frac{n}{\log{n}} \right) $ nodes};

	\draw[gray, thick, arrows=-stealth] (75pt, 47.5pt) -- (105pt, 47.5pt);
	\node[anchor=west] at (106pt,47.5pt) {\small $s$ switches};
	
	\draw[gray, thick, arrows=-stealth] (75pt, 7.5pt) -- (105pt, 7.5pt);
	\node[anchor=west] at (106pt,7.5pt) {\small $\frac{1-\varphi}{\lambda+\varphi} \cdot s$ switches};
	
	\draw[gray, thick] (85pt, 80pt) -- (88pt, 80pt) -- (88pt, 55pt) -- (85pt, 55pt);
	\draw[gray, thick] (88pt, 67.5pt) -- (91pt, 67.5pt);
	\node[anchor=west] at (91pt,67.5pt) {\footnotesize $d$-regular bipartite};	
	
	\draw[gray, thick] (85pt, 40pt) -- (88pt, 40pt) -- (88pt, 15pt) -- (85pt, 15pt);
	\draw[gray, thick] (88pt, 27.5pt) -- (91pt, 27.5pt);
	\node[anchor=west] at (91pt,27.5pt) {\footnotesize $\frac{\varphi}{1-\varphi} \cdot d$-regular bipartite};	

\end{tikzpicture}
	\caption{Consecutive levels of the lower bound construction}
	\label{fig:constr}
\end{figure}

Our parameters $\lambda$ and $\varphi$ also determine that the number of switches should increase by a factor $\frac{1-\varphi}{\lambda+\varphi}$ on each new level. If we can always increase the switches at this rate, then each node on the lowermost level will switch
\begin{equation*} \label{eq:match2}
\left( \frac{1-\varphi}{\lambda+\varphi} \right)^{\log_{\frac{1-\varphi}{\varphi}}(n)} = n^{\frac{\log\left( \frac{1-\varphi}{\lambda+\varphi} \right)}{\log\left( \frac{1-\varphi}{\varphi} \right)}} = n^{f(\lambda)},
\end{equation*}
times, where the last equation holds because we are using $\varphi=\varphi^*(\lambda)$. Since there are $\widetilde{\Theta}(n)$ nodes on the lowermost level, the switches in this level already amount to a total of $\widetilde{\Theta}(n^{1+f(\lambda)})$, matching the upper bound.

However, note that when $\varphi^*(\lambda)$ or $\frac{1-\varphi}{\lambda+\varphi}$ is irrational, we can only use close enough rational approximations of these values. This comes at the cost of losing a small $\epsilon$ in the exponent.

\begin{theorem} \label{the:lower}
Under Rule II with a wide range of $\lambda$ values, there is a graph construction and initial coloring where majority/minority processes stabilize in time $\Omega(n^{1+f(\lambda)-\epsilon})$ for any $\epsilon > 0$.
\end{theorem}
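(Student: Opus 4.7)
The plan is to make Figure \ref{fig:constr} rigorous: build an explicit layered graph whose parameters realize the optimum of Equation \eqref{eq:f}, specify an initial coloring, and exhibit an adversarial schedule that drives each node through its intended number of switches. Concretely, I would first approximate $\varphi^*(\lambda)$ by a rational $\varphi = p/q \in (0, \tfrac{1-\lambda}{2}]$ close enough that
\[ \frac{\log\!\big((1-\varphi)/(\lambda+\varphi)\big)}{\log\!\big((1-\varphi)/\varphi\big)} \;\geq\; f(\lambda) - \epsilon, \]
and similarly approximate the switch-growth factor $\tfrac{1-\varphi}{\lambda+\varphi}$ by a rational. Take $L = \lfloor \log_{(1-\varphi)/\varphi} n \rfloor = \Theta(\log n)$ levels $V_1, \ldots, V_L$ from top to bottom, each of size $m = \Theta(n/\log n)$, with $m$ divisible by a suitable power of $q$ so that all subsequent integer-rounding works. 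Each $v \in V_\ell$ has $(1-\varphi)\, d_\ell$ up-neighbors in $V_{\ell-1}$ (none if $\ell=1$) and $\varphi\, d_\ell$ down-neighbors in $V_{\ell+1}$ (none if $\ell=L$), where $d_L$ is a large enough constant and $d_{\ell-1} = \tfrac{1-\varphi}{\varphi}\, d_\ell$; the consistency $\varphi\, d_{\ell-1} = (1-\varphi)\, d_\ell$ makes each bipartite layer biregular and realizable, for instance as a disjoint union of perfect matchings.

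\textbf{Coloring, schedule, and counting.} The target is that every node of $V_\ell$ switches exactly $s_\ell$ times, with $s_1 = O(1)$ and $s_\ell = \tfrac{1-\varphi}{\lambda+\varphi}\, s_{\ell-1}$. The initial coloring is chosen so that at every node $v \in V_\ell$ all its up-edges are conflicting and all its down-edges are non-conflicting; under $\varphi \leq \tfrac{1-\lambda}{2}$ this pushes the up-degree $(1-\varphi)\,d_\ell$ above the threshold $\tfrac{1+\lambda}{2}\,d_\ell$, so every node is initially switchable. The schedule then proceeds in super-rounds: in super-round $k$, I execute in depth-first fashion the $k$-th switch of each node on $V_1$, then the $\tfrac{1-\varphi}{\lambda+\varphi}$ level-$2$ switches per node that these support, and so on recursively down to $V_L$. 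The bottom level alone contributes $|V_L| \cdot s_L = \Theta(n/\log n) \cdot \big(\tfrac{1-\varphi}{\lambda+\varphi}\big)^{L-1}$ switches; substituting $L$ and absorbing the logarithmic slack into $\epsilon$, this is $\Omega(n^{1+f(\lambda)-\epsilon})$.

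\textbf{Verification and main obstacle.} The technical heart, which I expect to be the hard part, is showing that whenever the schedule asks a node $v \in V_\ell$ to perform its $k$-th switch, $v$ really has at least $\tfrac{1+\lambda}{2}\,d_\ell$ conflicting neighbours at that very instant, not merely on average. To localize the global conflict count I would pre-partition the $\varphi\,d_{\ell-1}$ down-edges of every $v' \in V_{\ell-1}$ into $s_{\ell-1}$ batches (this is where the rational approximation of $\varphi$ becomes essential) and couple the bipartite structure with the initial colouring so that the $j$-th switch of $v'$ interacts cleanly with the correct batch on the receiving side, maintaining the ``up-edges conflicting, down-edges non-conflicting just before each switch'' invariant for every internal node. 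An induction on $\ell$ from the top downward then shows that between consecutive switches of $v$, its $(1-\varphi)\,d_\ell$ up-neighbours contribute exactly $(\lambda+\varphi)\,d_\ell$ fresh conflicts to $v$'s incident edges while its $\varphi\,d_\ell$ down-neighbours remove $\varphi\,d_\ell$ conflicts, restoring the conflict count from $\tfrac{1-\lambda}{2}\,d_\ell$ back up to $\tfrac{1+\lambda}{2}\,d_\ell$ and keeping the schedule valid. The same graph and schedule handle both majority and minority processes, since ``conflict'' only swaps $N_o$ with $N_s$; inverting the colour of every second level converts a valid majority execution into a valid minority one.
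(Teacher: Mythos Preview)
Your high-level plan (rational approximation of $\varphi^*$, $L=\Theta(\log n)$ equal-size levels, biregular layers with degree ratio $\tfrac{1-\varphi}{\varphi}$, switch count growing by $\tfrac{1-\varphi}{\lambda+\varphi}$ per level) matches the paper exactly, and your final counting is correct. But two things are wrong.

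First, the initial colouring you describe is impossible. If every up-edge of every $v\in V_\ell$ is conflicting and every down-edge is non-conflicting, then the edge $(v,w)$ with $w\in V_{\ell+1}$ is simultaneously non-conflicting (as a down-edge of $v$) and conflicting (as an up-edge of $w$). The paper's actual initial state is more delicate: for $\lambda=\tfrac13$, only $\tfrac{2}{8}$ of the inputs start in conflict, not all of them, and this fraction is dictated by the control sequence.

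Second, and more importantly, the part you flag as the ``main obstacle'' is exactly where the real content lies, and your batching sketch does not solve it. It is not enough that on average a $\mu=\tfrac{\lambda+\varphi}{1-\varphi}$ fraction of $v$'s up-neighbours switch between two consecutive switches of $v$; one needs \emph{exactly} that many, and they must all be of the \emph{correct colour} (opposite to $v$'s current colour for majority), and every up-neighbour must accumulate the same total number of switches. Arranging this forces a very specific bipartite structure between levels: the paper builds a constant-size \emph{control gadget} (a $k\times k$ bipartite graph together with a self-replicating switching sequence on its nodes) and then tiles each layer with many parallel copies of it. Already for $\lambda=\tfrac13$ the smallest working gadget has $16+16$ nodes and a carefully chosen non-obvious sequence; for general $\lambda$ the paper introduces control sequences on $q$ groups and a ``subset shifting'' trick to avoid contradictions in occurrence counts, and even then this only succeeds for $\mu\le\tfrac35$ (roughly $\lambda<0.476$) and some sporadic larger values. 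This combinatorial obstruction is precisely why the theorem is stated for ``a wide range of $\lambda$ values'' rather than all $\lambda$. Your proposal to take the bipartite layer as a disjoint union of matchings and partition down-edges into $s_{\ell-1}$ batches neither fixes the colour-coordination problem nor produces a self-replicating schedule, and the partition itself is ill-defined (at the bottom $\varphi d_{\ell-1}=O(1)$ while $s_{\ell-1}=\Theta(n^{f(\lambda)})$).
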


This level-based structure describes the general idea behind our lower bound construction. However, the main challenge of the construction is in fact designing the connection between subsequent levels. In particular, this connection has to make sure that conflicts are indeed always relayed optimally, i.e. no potential is wasted between any two levels.

Recall from the proof of Lemma \ref{preservePot} that this is only possible if between any two consecutive switches of a node $v$, it is exactly a $\frac{\lambda+\varphi}{1-\varphi}$ fraction of $v$'s upper neighbors that switch. Moreover, these switching $\frac{\lambda+\varphi}{1-\varphi} \cdot \text{deg}(v)$ upper neighbors always have to be of the right color, i.e. they need to switch to the opposite of $v$'s current color in case of majority processes, and to the same color in case of minority processes. Since the upper neighbors of $v$ are in the same level, we also have to ensure that throughout the entire process, each upper neighbor switches the same number of times altogether.

These conditions impose heavy restrictions on the possible ways to connect two subsequent levels. If the conditions hold for a node $v$ (i.e. the sequence of switches of $v$'s upper neighbors can be split into $\frac{\lambda+\varphi}{1-\varphi} \cdot \text{deg}(v)$-size consecutive appropriate-colored subsets, in an altogether balanced way), then we say that $v$'s upper neighbors follow a valid \textit{control sequence}.

On the other hand, in order to argue about levels in general, we want each level to behave in a similar way. The easiest way to achieve this is to have a one-to-one correspondence between the nodes of different levels, and ensure that each level repeats the same sequence of steps periodically, but in a different pace. That is, we want to connect the levels in such a way that when a level exhibits a specific pattern of switches, then this allows the nodes of the next level to replicate the exact same pattern of switches, but more times.

Thus the key task in our lower bound constructions is to develop a so-called \emph{control gadget}, which is essentially a bipartite graph that fulfills these two requirements: it admits a scheduling of switches such that (i) the upper neighborhood of each lower node follows a valid control sequence, and (ii) while the upper level executes a sequence $s$ times, the lower level executes the same sequence $\frac{1-\varphi}{\lambda+\varphi} \cdot s$ times. Given such a control gadget, we can connect the subsequent level pairs of our construction using this gadgets. This allows us to indeed increase the number of switches by a $\frac{1-\varphi}{\lambda+\varphi}$ factor in each new level, resulting in a total of $\widetilde{\Theta}(n^{1+f(\lambda)})$ switches as described above.

However, developing a control gadget is a difficult combinatorial task in general: it depends on many factors including divisibility questions, and whether our parameters can be expressed as a fraction of small integers. A detailed discussion of control gadget design and the $\lambda$ values covered by Theorem \ref{the:lower} is available in Appendix \ref{App:B}. In particular, we present a method which allows us to develop a control gadget for every small $\lambda$ value below a threshold of approximately $0.476$ (more specifically, as long as $\frac{\lambda + \varphi}{1-\varphi} \leq \frac{3}{5}$). The same technique also provides a control gadget for some larger $\lambda$ values above the threshold, but only when the corresponding switch increase ratio $\frac{1-\varphi}{\lambda+\varphi}$ can be expressed as a fraction of relatively small integers. Furthermore, Appendix \ref{App:B} also describes a simpler solution technique to the control gadget problem; this leaves a slightly larger gap to the upper bound, but it works for any $\lambda$ without much difficulty.

\newpage

\bibliography{references}

\newpage

\begin{appendices}
\section{Discussion of upper bound proof} \label{App:A}

In this section, we discuss some parts of the upper bound proof in more detail.

\subsection{Majority and minority processes as CPSs}

When introducing the concept of CPS as the common abstraction of majority and minority processes, it is rather straightforward that conditions 2 and 3 are fulfilled, since each time when a node $v$ switches, it can only create 1 conflict on at most $\frac{1-\lambda}{2} \cdot \text{deg}(v)$ incident edges. Condition 1, however, requires some more discussion.

Between each two consecutive switches of $v$, we know that at least $\frac{1+\lambda}{2} \cdot \text{deg}(v)-\frac{1-\lambda}{2} \cdot \text{deg}(v)=\lambda \cdot \text{deg}(v)$ new conflicts must be wasted (i.e. removed) to raise the number of conflicts on incident edges above the switchability threshold of $\frac{1+\lambda}{2} \cdot \text{deg}(v)$ again. Furthermore, if between the two switches there are also conflicts that are removed from the incident edges by neighboring nodes (i.e., outputs), then each of these conflicts have to be replaced by a new one (an extra input) to have the required number of conflicts for switchability again.

More formally, let $in_i$ be the number of conflicts created on, and $out_i$ the number of conflicts removed from the edges of $v$ between the $(i-1)^{\text{th}}$ and $i^{\text{th}}$ switching of $v$, for $i \in \{1, ..., s(v)\}$. If $out_i$ further conflicts are removed from $v$'s edges before the $(i+1)^{\text{th}}$ switching of $v$, then $v$ needs to obtain $out_i$ further conflicts to reach the threshold of $\frac{1+\lambda}{2} \cdot \text{deg}(v)$ and be switchable for the $(i+1)^{\text{th}}$ time. This implies $in_i \geq \lambda \cdot \text{deg}(v) + out_i$; adding this up for all $i$ provides condition 1.

This explains why the relaxed version of condition 1 holds asymptotically. However, there are some edge cases that make the process slightly differ from this asymptotic behavior. Besides input conflicts (created by a neighbor of $v$), there may also be original conflicts on the edges incident to $v$, which were not created by a neighbor but were present from the beginning due to the initial coloring of the graph. These conflicts can be used by $v$ just like an input conflict when switching, and hence it is in fact the sum of original and input conflicts that has to be larger than the required number of conflicts for switching (i.e., the sum of outputs plus $\lambda \cdot \text{deg}(v) \cdot s(v)$). However, the number of original conflicts on incident edges is at most $\text{deg}(v)$, so adding an extra term of $\text{deg}(v)$ on the left side of condition 1 (i.e., requiring only that $c_{in}(v) \geq \lambda \cdot \text{deg}(v) \cdot s(v) + c_{out}(v) - \text{deg}(v)$) gives an inequality that holds for any node in a majority/minority process, even if a node $v$ uses up to $\text{deg}(v)$ original conflicts while switching. 

Also, the behavior of the process is slightly different before the first and after the last switch. On the one hand, in the first round, $v$ needs to use $\frac{1+\lambda}{2} \cdot \text{deg}(v)$ conflicts that are all inputs or original conflicts (whereas in later rounds, up to $\frac{1-\lambda}{2} \cdot \text{deg}(v)$ of the used conflicts might be ones that were created by $v$ in the previous round). Therefore, because of this first round, the total number of used conflicts is actually $\frac{1+\lambda}{2} \cdot \text{deg}(v)-\lambda \cdot \text{deg}(v)=\frac{1-\lambda}{2} \cdot \text{deg}(v)$ higher than in the asymptotic case. On the other hand, there is no need to compensate for output conflicts that are removed after the very last switching of $v$, since the number of conflicts in the final state of the graph is irrelevant; therefore, there may be up to $\frac{1-\lambda}{2} \cdot \text{deg}(v)$ output conflicts that do not have to be compensated.  Note, however, that these two edge cases do not require us to further modify condition 1, since the two new terms cancel each other on the right side.

\subsection{Relaxing the CPS definition}

While the extra $\text{deg}(v)$ term in condition 1 becomes asymptotically irrelevant if a node switches many times (i.e. $s(v)$ is large), the precise analysis still requires us to introduce the relaxed version of the CPS concept where condition 1 does not contain this extra term.

Consider a slightly smaller switching rule parameter $\lambda-\epsilon$, for any small $\epsilon>0$. Note that $c_{in}(v) \geq (\lambda-\epsilon) \cdot \text{deg}(v) \cdot s(v) + c_{out}(v)$ automatically implies $c_{in}(v) + \text{deg}(v) \geq \lambda \cdot \text{deg}(v) \cdot s(v) + c_{out}(v)$ for $s(v)$ large enough; that is, $\epsilon \cdot \text{deg}(v) \cdot s(v) \geq \text{deg}(v)$ holds whenever $s(v) \geq \frac{1}{\epsilon}$, so the additive term is not required. However, having $\lambda-\epsilon$ instead of $\lambda$ in the condition also results in the slightly less tight upper bound of $O(n^{1+f(\lambda-\epsilon)})$.

Therefore, we take the following approach. Assume we have a $\lambda_0$ for which we want to show the upper bound. We select a small $\epsilon>0$, and define $\lambda:=\lambda_0-\epsilon$. We define a constant switching threshold $s_0:=\frac{1}{\epsilon}$; nodes $v$ with $s(v) < s_0$ will be the base nodes. The base nodes in our graph then do not satisfy condition 1; however, since they only switch a few times, they have a limited influence on the process. By the choice of $s_0$, the remaining nodes satisfy condition 1 with $\lambda$, even without the extra term, so the relaxed version of condition 1 indeed holds with $s_0$ and $\lambda$.

We then follow the proof outlined before with Relaxed CPSs. This allows us to upper bound stabilization time by $O(n^{1+f(\lambda)})=O(n^{1+f(\lambda_0-\epsilon)})$. Since $f$ is continuous and the technique works for any $\epsilon>0$, this establishes an upper bound of $O(n^{1+f(\lambda_0)+\epsilon})$ for any $\epsilon>0$. Thus in terms of the parameter $\lambda_0$ of Rule II, our upper bound amounts to $O(n^{1+f(\lambda_0)+\epsilon})$ steps.

\subsection{Potential of dicuts}

Recall that Lemma \ref{preservePot} shows that the output potential of any node can be at most as much as its input potential. This allows us to upper bound the total potential in any dicut of the graph.

We use \textit{trivial dipartitioning} to refer to the dipartitioning ($V_1$, $V_2$) where $V_1$ only contains the source nodes of the DAG, and $V_2$ contains all other nodes.

\begin{lemma}
Every dipartitioning can be obtained from the trivial partitioning through a sequence of steps such that each intermediate step is also a dipartitioning.
\end{lemma}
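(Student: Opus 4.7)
The plan is to build the target dipartitioning $(V_1^*, V_2^*)$ from the trivial one by adding nodes to $V_1$ one at a time in a topological order. The key observation is that $V_1^*$ is \emph{ancestor-closed} in the DAG: if $v \in V_1^*$ and $(u,v)$ is a directed edge, then $u \in V_1^*$, since otherwise $(u,v)$ would be an edge from $V_2^*$ to $V_1^*$, violating the dipartitioning property.

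Given this, I would proceed as follows. First, note that every source of the DAG lies in $V_1^*$ by definition of a dipartitioning, so the set $S$ of source nodes satisfies $S \subseteq V_1^*$. Let $v_1, v_2, \ldots, v_k$ be a topological ordering of the nodes in $V_1^* \setminus S$ with respect to the DAG restricted to $V_1^*$. Define the intermediate sets $V_1^{(i)} = S \cup \{v_1, \ldots, v_i\}$ for $i = 0, 1, \ldots, k$, so that $V_1^{(0)}$ corresponds to the trivial dipartitioning and $V_1^{(k)} = V_1^*$. The sequence of partitions $(V_1^{(i)}, V \setminus V_1^{(i)})$ clearly interpolates between the trivial dipartitioning and the target one by moving exactly one node from $V_2$ to $V_1$ in each step.

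It remains to verify that each $(V_1^{(i)}, V \setminus V_1^{(i)})$ is itself a dipartitioning. All sources lie in $V_1^{(i)}$ by construction. Suppose for contradiction that there is an edge $(u,w)$ with $u \notin V_1^{(i)}$ and $w \in V_1^{(i)}$. Since $w$ has an incoming edge it is not a source, so $w = v_j$ for some $j \leq i$, which in particular implies $w \in V_1^*$. By ancestor-closedness, $u \in V_1^*$; and since the chosen ordering is topological on $V_1^*$, $u$ either is a source (and hence in $S \subseteq V_1^{(i)}$) or precedes $v_j$ in the ordering (so $u \in \{v_1,\ldots,v_{j-1}\} \subseteq V_1^{(i)}$). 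Either way $u \in V_1^{(i)}$, a contradiction.

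The main content of the argument is the ancestor-closedness of $V_1^*$; once that is in hand, the topological ordering yields a routine inductive construction and no further technical obstacles arise. I expect the proof to be short and to follow exactly this structure.
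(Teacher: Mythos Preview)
Your proof is correct and follows essentially the same approach as the paper: both arguments use a topological ordering of $V_1^*$ to move one node at a time while preserving the dipartitioning property. The only cosmetic difference is direction---the paper argues by backward induction (removing the topologically last non-source node from $V_1$), whereas you build $V_1$ up forward; the underlying observation (ancestor-closedness of $V_1^*$, equivalently that the topologically last node of $V_1$ has no outgoing edges into $V_1$) is identical.
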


\begin{proof}
The statement clearly holds for the trivial dipartitioning. For any other dipartitioning, we can prove the statement by induction on the number of nodes in $V_1$. Given any other dipartitioning ($V_1$, $V_2$), let us take a topological ordering of the DAG which begins with all the source nodes. Let us restrict this ordering to $V_1$, and let $v$ be the last node of the ordering which is in $V_1$. Since the ordering is topological, there are no edges from $v$ to $V_1 \setminus \{ v \}$. Therefore, ($V_1 \setminus \{ v \}$, $V_2 \cup \{ v \}$) is also a dipartitioning, so there exists a valid sequence to obtain it due to the induction hypothesis. Appending the dipartitioning ($V_1$, $V_2$) to the end of this sequence provides a sequence for ($V_1$, $V_2$).
\end{proof}

From this, the proof of Lemma \ref{cutbound} already follows. The dicut of the trivial dipartitioning has potential at most $O(n^2)$. Due to Lemma \ref{preservePot}, the potential of the dicut can only decrease throughout the sequence. This shows that the potential of dicut ($V_1$, $V_2$) is still at most as much potential of the trivial dipartitioning.

\subsection{Responsibility technique for the upper bound}

We now discuss the proof of Lemma \ref{lem:upper} in detail. Note that in the definition of a (relaxed) CPS, we defined the functions $s$ and $c$ as integer-valued, since this definition is intuitively closer to our original majority/minority processes. However, one can observe that all our statements in Section \ref{sec:upper} still hold if $s$ and $c$ are allowed to take any value among the nonnegative real numbers. Since allowing non-integer values allows for a simpler proof of Lemma \ref{lem:upper}, in the following, we consider this not-necessarily-integer version of CPSs in order to avoid some discretization challenges.

As an edge case, note that source nodes switch at most $O(1)$ time according to Lemma \ref{lem:constbase}, so altogether, they contribute at most $O(n)$ to the total number of switches. Therefore, we can ignore them in the analysis, and consider only the remaining nodes of the graph which satisfy the relaxed version of condition 1.

The main structure of the proof has already been outlined in Section \ref{sec:upperlast}; it only remains to describe the responsibility technique devised for the first part of the proof.

Let us take a topological ordering of the nodes in $A$, and let us iterate through the nodes of $A$ in this order. For each next node $v_0$ in this ordering, we define the responsibility of $v_0$, denoted $R(v_0)$. As outlined, we introduce a function $\Delta c(e)$ on the edges and $\Delta s(v)$ on the vertices for each such $v_0$, and after having processed $v_0$, we subtract these functions from $c(e)$ and $s(v)$, respectively.

That is, let $c': \widehat{E} \rightarrow \mathbb{R}$ and $s': \widehat{V} \rightarrow \mathbb{R}$, initially set to $c'(e):=c(e)$ and $s'(v):=s(v)$ for every vertex $v \in V$ and every directed edge $e$ of the DAG. Every time when we process the next node $v_0$, we define a new $\Delta c(e)$ and $\Delta s(v)$ based on the effects of $v_0$, and reduce $c'(e)$ by $\Delta c(e)$ on every $e \in \widehat{E}$, and reduce $s'(v)$ by $\Delta s(v)$ on every $v \in V$. Due to the definition of $\Delta c(e)$ and $\Delta s(v)$, the resulting $c'(e)$ and $s'(v)$ will still be a valid CPS after each step of the process. After processing all $v_0 \in A$, we obtain a final $c'(e)$ and $s'(v)$ for the second part of the proof outlined in Lemma \ref{lem:upper}.

\subsubsection{Definition of $\Delta c$ and $\Delta s$}

Let us now define the functions $\Delta c$ and $\Delta s$. Let $v_0$ be the next node of the topological ordering. In order to process the switches `caused by' $v_0$, we take a topological ordering of the nodes reachable from $v_0$ on the current edges of the DAG (that is, the real edges with regard to the current $c'(e)$). The first node of the ordering is clearly $v_0$ itself; for each output edge $(v_0, u) \in \widehat{E}$ of $v_0$, let $\Delta c(v_0, u)=c'(v_0, u)$. That is, after the current $\Delta c(e)$ will be subtracted from $c'(e)$, all output edges $(v_0, u)$ will have $c(v_0, u)=0$, and thus cease to be real edges, turning $v_0$ into a new sink node of the DAG.

In general, let $v$ be the next node in the topological ordering of the nodes reachable from $v_0$ (i.e., the inner loop of the algorithm). Since the ordering is topological, all input edges $(u,v)$ of $v$ already have a value $\Delta c(u,v)$ assigned to them (if an input node $u$ is not reachable from $v_0$, we consider $\Delta c(u,v)$ to have the default value of 0). Let $\Delta_{in}:=\sum_{(u,v) \in \widehat{E}} \Delta c(u,v)$.

First of all, we generally define
\begin{equation} \label{eq:deltadef1}
 \Delta s(v):=\frac{\Delta_{in}}{\frac{1+\lambda}{2} \cdot \text{deg}(v)}.
\end{equation}
Furthermore, we define $\Delta c(v,w)$ on the output edges $(v,w)$ of $v$ as follows. Similarly to the definition of $\Delta_{in}$, let $\Delta_{out}:=\sum_{(v,w) \in \widehat{E}} \Delta c(v,w)$. Our assignment will ensure two things. On the one hand, we assign $\Delta c(v,w)$ values such that $\Delta_{out} = \Delta s(v) \cdot \frac{1-\lambda}{2} \cdot \text{deg}(v)$; or, put otherwise through the definition of $\Delta s (v)$, $\Delta_{out}=\frac{1-\lambda}{1+\lambda} \cdot \Delta_{in}$. On the other hand, we always reduce the value $c'(v,w)$ on the output edge with the largest $c'(v,w)$ value, until a total reduction of $\frac{1-\lambda}{1+\lambda} \cdot \Delta_{in}$ is obtained.

Moreover, we have to apply a slightly different method when $c'_{out}(v) < \frac{1-\lambda}{1+\lambda} \cdot \Delta_{in}$, i.e. it is not large enough to be decreased by the required amount. In this case, we choose $\Delta_{out}$ as large as possible (that is, equal to $c'_{out}(v)$), and define $\widetilde{\Delta}_{in} = \Delta_{in} - \frac{\lambda+1}{\lambda-1} \cdot c'_{out}(v)$, i.e. the portion of the input which we cannot compensate from the remaining outputs. Since this part of the input conflicts is not used to create output conflicts, this can result in a higher number of switches at $v$. Hence, we reduce $s'(v)$ by a larger amount altogether. Specifically, we define
\begin{equation} \label{eq:deltadef2}
\Delta s(v):= \frac{\left( \Delta_{in} - \widetilde{\Delta}_{in} \right)}{\frac{1+\lambda}{2} \cdot \text{deg}(v)} + \frac{\widetilde{\Delta}_{in}}{\lambda \cdot \text{deg}(v)}.
\end{equation}

Intuitively, the idea behind this technique is that even if inputs are used in the most optimal format, then $1$ unit of input can correspond to at most $\frac{1-\lambda}{1+\lambda}$ units of output at $v$. This is because condition 2 ensures $c_{out}(v) \leq \frac{1-\lambda}{2} \cdot \text{deg}(v) \cdot s(v)$, and in case of the maximum possible output, condition 1 gives
\[ c_{in}(v) \geq \lambda \cdot \text{deg}(v) \cdot s(v) + \frac{1-\lambda}{2} \cdot \text{deg}(v) \cdot s(v) = \frac{1+\lambda}{2} \cdot \text{deg}(v) \cdot s(v),\]
providing a natural upper bound of $\frac{\frac{1+\lambda}{2}}{\frac{1-\lambda}{2}} = \frac{1+\lambda}{1-\lambda}$ on the rate of inputs to outputs. Furthermore, in case of this input to output ratio, the total input of (at least) $\frac{1+\lambda}{2} \cdot \text{deg}(v) \cdot s(v)$ corresponds to $s(v)$ switches, and thus each unit of input induces at most $\frac{1}{\frac{1+\lambda}{2} \cdot \text{deg}(v)}$ switches in $v$. On the other hand, when there are no more outputs anymore, the number of inputs $c_{in}(v)$ can be as low as $\lambda \cdot \text{deg}(v) \cdot s(v)$, and hence each unit of input induces at most $\frac{1}{\lambda \cdot \text{deg}(v)}$ switches in $v$.

To sum it up formally, when processing the next node $v$, we do the following. If $c'_{out}(v) \geq \frac{1-\lambda}{1+\lambda} \cdot \Delta_{in}$, then we define $\Delta s(v)$ according to Equation \ref{eq:deltadef1}. We select a threshold value $c_{thres}$, and define $\Delta c(v,w)$ on the output edges such that $\Delta c(v,w)=0$ for output edges where $c'(v,w) \leq c_{thres}$, and $\Delta c(v,w)=c'(v,w)-c_{thres}$ for output edges where $c'(v,w) > c_{thres}$. Since we can decrease $c_{thres}$ continuously, there exists exactly one threshold value which ensures that $\Delta_{out}=\frac{1-\lambda}{1+\lambda} \cdot \Delta_{in}$. Hence, each output $c'(v,w)$ is truncated to this threshold value.

Otherwise, if $c'_{out}(v) < \frac{1-\lambda}{1+\lambda} \cdot \Delta_{in}$, then we assign $\Delta c(v,w):=c'(v,w)$ to each output edge $(v,w)$ of $v$, calculate $\widetilde{\Delta}_{in}$ as discussed above, and define $\Delta s(v)$ according to Equation \ref{eq:deltadef2}.

\subsubsection{CPS conditions after subtracting $\Delta c$ and $\Delta s$}

\begin{lemma}
The definitions of these modifications ensure that after reducing the number of switches and conflicts, the resulting process still remains a CPS in each step.
\end{lemma}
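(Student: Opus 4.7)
The plan is to verify, separately for each of conditions 1R, 2, and 3, that the reduced assignment $(s',c') = (s-\Delta s,\, c-\Delta c)$ is again a (relaxed) CPS after the subtraction corresponding to any $v_0 \in A$. Nodes outside the reachable set of $v_0$ are untouched, so their conditions are preserved automatically, and the work reduces to checking each node $v$ that is visited in the inner loop.

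For condition 1R at such a $v$, I would write
\begin{equation*}
c'_{in}(v) - \lambda \cdot \text{deg}(v) \cdot s'(v) - c'_{out}(v) = \bigl[ c_{in}(v) - \lambda \cdot \text{deg}(v) \cdot s(v) - c_{out}(v) \bigr] - \Delta_{in} + \lambda \cdot \text{deg}(v) \cdot \Delta s(v) + \Delta_{out}.
\end{equation*}
The bracketed term is nonnegative by the original 1R, and the remaining three terms are set up to cancel to $0$. In Case 1, substituting $\Delta s(v) = \frac{2\Delta_{in}}{(1+\lambda)\,\text{deg}(v)}$ and $\Delta_{out} = \frac{1-\lambda}{1+\lambda}\Delta_{in}$ into the remainder yields exact cancellation. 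In Case 2, the split in (\ref{eq:deltadef2}) is precisely what restores the cancellation: the part $\Delta_{in} - \widetilde{\Delta}_{in}$ behaves like Case 1 and produces the matching $\Delta_{out}$, while the leftover $\widetilde{\Delta}_{in}$ (for which no outgoing compensation remains) feeds $\Delta s(v)$ at the ``no-output'' rate $\frac{1}{\lambda \cdot \text{deg}(v)}$, so that the $\lambda \cdot \text{deg}(v) \cdot \Delta s(v)$ term absorbs it exactly. The calculation for condition 2 is analogous: in Case 1 the corresponding $\Delta$-terms cancel, and in Case 2 the choice $\Delta_{out} = c'_{out}(v)$ zeroes $c'_{out}(v)$, reducing the condition to the nonnegativity of $s'(v)$.

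The real work lies in condition 3, which at $v$ reads $c'(v,w) \leq s'(v)$ for every output neighbour $w$. The ``reduce the largest output first'' rule is designed for exactly this: after processing, all modified outputs have been pulled down to a common threshold $c_{thres}$, and any output left untouched lies strictly below $c_{thres}$, so it suffices to establish $c_{thres} \leq s'(v)$. Case 2 is immediate, with $c_{thres}=0$ and $s'(v) \geq 0$ coming from condition 1R on $v$. In Case 1, I would let $k$ be the number of truncated outputs, use the defining identity $k \cdot c_{thres} = \sum_{i \leq k} c(v,w_i) - \Delta_{out}$ together with the per-edge bound $c(v,w_i) \leq s(v)$ from the original condition 3 to get $c_{thres} \leq s(v) - \Delta_{out}/k$, and then reduce the target inequality to $k \cdot \Delta s(v) \leq \Delta_{out}$; after plugging in the Case 1 formulas this becomes $k \leq \frac{1-\lambda}{2} \cdot \text{deg}(v)$, which I would extract from the original condition 2 on $v$.

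The main obstacle I expect is precisely this last step for condition 3: the bound on the number $k$ of truncated edges is where the proof really leans on the ``largest first'' rule, and the argument has to behave continuously across the Case 1 / Case 2 boundary (in particular, the $\widetilde{\Delta}_{in}$ split has to correctly account for the portion of input that can no longer be converted to output). Conditions 1R and 2, by contrast, are forced almost verbatim by the chosen definitions of $\Delta s(v)$ and $\Delta_{out}$, so once condition 3 is in hand the lemma follows.
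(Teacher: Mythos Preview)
Your treatment of conditions 1R and 2 is essentially the paper's argument: both sides drop by exactly $\Delta_{in}$ (resp.\ $\frac{1-\lambda}{1+\lambda}\Delta_{in}$), and the split of $\Delta s(v)$ in Case~2 is designed for precisely this cancellation. That part is fine.

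The gap is in your last step for condition~3. You reduce to the claim $k \leq \frac{1-\lambda}{2}\cdot\deg(v)$ and assert it follows from condition~2, but it does not: condition~2 bounds the \emph{sum} $c'_{out}(v)$, not the \emph{number} of real output edges. Concretely, take $\lambda=\tfrac{1}{5}$, $\deg(v)=10$ (so $\frac{1-\lambda}{2}\deg(v)=4$), $s_{prev}(v)=100$, and eight output edges each carrying $c'(v,w)=50$. Then $c'_{out}(v)=400=\frac{1-\lambda}{2}\deg(v)\cdot s_{prev}(v)$, so condition~2 holds with equality; yet for $\Delta_{out}=80$ all eight edges are truncated, giving $k=8>4$. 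Your chain $c_{thres}\le s_{prev}-\Delta_{out}/k$ is too weak here because the crude bound $c'(v,w_i)\le s_{prev}$ loses the slack when outputs are spread over many low-value edges.

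The paper handles condition~3 differently: it argues by contradiction, assuming $c_{thres}>s'(v)$, and then uses the \emph{post-step} condition~2 (which you have already verified) to reach a contradiction. From $c_{thres}>s'(v)$ one gets that each $\Delta c(v,w)$ is at most $s_{prev}(v)-c_{thres}<\Delta s(v)=\frac{2}{(1-\lambda)\deg(v)}\Delta_{out}$, which forces strictly more than $\frac{1-\lambda}{2}\deg(v)$ edges to be truncated; since each of these now sits at $c_{thres}>s'(v)$, this violates the new condition~2. Equivalently, your direct argument works when $k\le\frac{1-\lambda}{2}\deg(v)$, while for $k>\frac{1-\lambda}{2}\deg(v)$ the post-step condition~2 together with $c'_{out}(v)\ge k\cdot c_{thres}$ immediately gives $c_{thres}\le s'(v)$. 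Either way, you need to invoke the already-established condition~2 \emph{after} the step, not the original one.
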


\begin{proof}
Recall that the conditions of a relaxed CPS require
\renewcommand{\labelenumi}{\arabic{enumi}.}
\vspace{3pt}
\begin{enumerate}
{\setlength\itemindent{3pt} 
\item $c'_{in}(v) \geq \lambda \cdot \text{deg}(v) \cdot s'(v) + c'_{out}(v)$,
\item $c'_{out}(v) \leq \frac{1-\lambda}{2} \cdot \text{deg}(v) \cdot s'(v)$, and
\item $c'(v,w) \leq s'(v)$ for each output edge $(v,w)$
}
\end{enumerate}
\vspace{3pt}
for node $v$. We show that these conditions still hold for the new functions $c'$ and $s'$, obtained after subtracting $\Delta c$ and $\Delta s$.

First consider the case when there are still output $c'(v,w)$ values to decrease. In condition 1, the number of inputs decreases by $\Delta_{in}$ on the left side when executing the step. The number of outputs decreases by $\frac{1-\lambda}{1+\lambda} \cdot \Delta_{in}$ on the right side, and the first term on the right is reduced by
\[ \lambda \cdot \text{deg}(v) \cdot \Delta s(v) = \lambda \cdot \text{deg}(v) \cdot \frac{\Delta_{in}}{\frac{1+\lambda}{2} \cdot \text{deg}(v)} =  \frac{2 \lambda}{1+\lambda} \cdot \Delta_{in}. \]
This adds up to a decrease of $\left( \frac{1-\lambda}{1+\lambda} + \frac{2 \lambda}{1+\lambda} \right) \cdot \Delta_{in} = \Delta_{in}$ on the right side, thus condition 1 remains true in this case.

In condition 2, the left side is decreased by $\Delta_{out} = \frac{1-\lambda}{1+\lambda} \cdot \Delta_{in}$, while the right side is also decreased by
\[ \frac{1-\lambda}{2} \cdot \text{deg}(v) \cdot \Delta s(v) = \frac{1-\lambda}{2} \cdot \text{deg}(v) \cdot \frac{\Delta_{in}}{\frac{1+\lambda}{2} \cdot \text{deg}(v)} = \frac{1-\lambda}{1+\lambda} \cdot \Delta_{in} \]
in each step.

To show that condition 3 remains true, we use the fact that $c'(v,w)$ is always decreased on the output edges with the highest $c'(v,w)$ values. Assume that $c'(v,w_0) > s'(v)$ on some output edge $(v,w_0)$, for the new functions $c'$ and $s'$ obtained after subtracting $\Delta c$ and $\Delta s$. Recall that with our truncation technique, if we have $c'(v,w_0)$ on any edge after the reduction, then $c_{thres} \geq c'(v,w_0)$. Together, this implies $c_{thres} > s'(v)$.

Let $s_{prev}'(v):= s'(v) + \Delta s(v)$, the value of $s'(v)$ before the decrease. Recall that by the definition of $\Delta s(v)$, we have $s_{prev}'(v) - s'(v) = \Delta_{out} \cdot \frac{2}{1-\lambda} \cdot \frac{1}{\text{deg}(v)}$, so for the difference between $s_{prev}'(v)$ and $c_{thres}$, we have $s_{prev}'(v) - c_{thres} < \Delta_{out} \cdot \frac{2}{1-\lambda} \cdot \frac{1}{\text{deg}(v)}$. Note that this difference is the maximum value of $\Delta c(v, w)$ on any output edge, since before the decrease, all $c'(v,w)$ values were at most $s_{prev}'(v)$, and none of them were reduced below $c_{thres}$. However, since we decrease the outputs by $\Delta_{out}$ in total, this means that we have to reduce (i.e., have a nonzero $\Delta c(v, w)$) on strictly more than
\[ \frac{\Delta_{out}}{\Delta_{out} \cdot \frac{2}{1-\lambda} \cdot \frac{1}{\text{deg}(v)}} = \frac{1-\lambda}{2} \cdot \text{deg}(v) \]
distinct output edges. Each of these output edges is reduced to $c_{thres}$, so the total sum of outputs after the decrease is at least
\[ c'_{out}(v) \geq \frac{1-\lambda}{2} \cdot \text{deg}(v) \cdot c_{thres} > \frac{1-\lambda}{2} \cdot \text{deg}(v) \cdot s'(v), \]
which contradicts the already established condition 2. Thus condition 3 must also hold.

Finally, consider the other case, when there are no more output values $c'(v,w)$ to decrease. The left side of condition 1 is still reduced by $\Delta_{in}$, and the right side consists of the first term only, which is reduced by 
\[ \lambda \cdot \text{deg}(v) \cdot \Delta s(v) = \lambda \cdot \text{deg}(v) \cdot \frac{\Delta_{in}}{\lambda \cdot \text{deg}(v)} = \Delta_{in}, \] so condition 1 remains true. In this case, conditions 2 and 3 hold trivially, since all output edges $(v,w)$ already have $c'(v,w)=0$. \qedhere
\end{proof}

\subsubsection{Responsibilities of nodes}

Consider any $v_a \in A$ throughout the process. The value $s'(v_a)$ is initially equal to $s(v_a)$, and then keeps being reduced until $v_a$ is the next node in the topological ordering (i.e., when $v_0 = v_a$). From this point, $s'(v_a)$ is not changed anymore; on the other hand, when analyzing the effects of $v_a$, $s'(v)$ values of other nodes are reduced, and we reassign these switches to be the responsibility of $v_a$. That is, whenever having processed a node $v_0$, we define $R(v_0) = s'(v_0) + \sum_{v \in A} \Delta s(v)$ for the $\Delta s$ function obtained in case of this specific $v_0$. Clearly, throughout the process, every decrease $\Delta s$ happens with regard to a specific $v_0$, so this is indeed a redistribution of the original $s(v)$ values, and hence $\sum_{v \in A}{s(v)} = \sum_{v \in A}{R(v)}$ holds.

\begin{lemma}
For any $v_0 \in A$ and for the final $s'(v_0)$ value, we have $R(v_0) = O(s'(v_0))$.
\end{lemma}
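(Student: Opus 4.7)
The plan is to follow the intuition already sketched in the excerpt: the $\Delta c$-values pushed downstream of $v_0$ form a flow whose output-to-input ratio at every intermediate node is at most $\frac{1-\lambda}{1+\lambda}<1$, so the total volume of $\Delta c$ entering the downstream nodes forms a geometric series bounded by a constant (depending only on $\lambda$) times the initial push at $v_0$. Since every node of $A$ has degree in $[a,2a)$, the conversion between ``input volume'' and ``number of $\Delta s$-switches induced'' is essentially the same at $v_0$ and at every downstream $v \in A$, so the final bound reduces to a constant times $s'(v_0)$.

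Concretely, I would first bound the initial push by applying condition 2 to the current CPS at $v_0$: since $\Delta_{out}(v_0)=c'_{out}(v_0)$ by construction and $\text{deg}(v_0)<2a$, we get $\Delta_{out}(v_0) \leq \frac{1-\lambda}{2} \cdot \text{deg}(v_0) \cdot s'(v_0) < (1-\lambda)\, a \, s'(v_0)$. Next I would verify two local inequalities at every downstream node $v$, by inspecting both branches of the definitions of $\Delta c$ and $\Delta s$:
\[ \Delta_{out}(v) \leq \frac{1-\lambda}{1+\lambda}\cdot\Delta_{in}(v) \qquad \text{and} \qquad \Delta s(v) \leq \frac{\Delta_{in}(v)}{\lambda\cdot\text{deg}(v)}. \]
In the first branch (where $c'_{out}(v) \geq \frac{1-\lambda}{1+\lambda}\Delta_{in}$) the first inequality holds by construction and the second follows from $\frac{1+\lambda}{2}>\lambda$; in the second branch, the first inequality holds by the case hypothesis itself, and the second holds because $\Delta s(v)$ is a convex combination of the two multipliers $\frac{1}{\frac{1+\lambda}{2}\,\text{deg}(v)}$ and $\frac{1}{\lambda\,\text{deg}(v)}$ applied to a partition of $\Delta_{in}(v)$, the larger of which is exactly $\frac{1}{\lambda\,\text{deg}(v)}$.

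Then, letting $D$ denote the set of nodes strictly downstream of $v_0$ in the current DAG, edge-conservation gives $\sum_{v \in D}\Delta_{in}(v)=\Delta_{out}(v_0)+\sum_{v \in D}\Delta_{out}(v)$, and combining with the first local inequality and rearranging produces the geometric bound $\sum_{v \in D}\Delta_{in}(v) \leq \frac{1+\lambda}{2\lambda}\cdot\Delta_{out}(v_0)$. Since $\text{deg}(v) \geq a$ for every $v \in A \cap D$, applying the second local inequality together with the bound on $\Delta_{out}(v_0)$ gives
\[ \sum_{v \in A \cap D}\Delta s(v) \leq \frac{1}{\lambda\,a}\sum_{v \in D}\Delta_{in}(v) \leq \frac{1+\lambda}{2\lambda^2\,a}\cdot(1-\lambda)\,a\,s'(v_0) = \frac{1-\lambda^2}{2\lambda^2}\,s'(v_0), \]
so $R(v_0) = s'(v_0) + \sum_{v \in A \cap D}\Delta s(v) = O(s'(v_0))$, with the hidden constant depending only on $\lambda$. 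The main obstacle I anticipate is the careful case analysis needed to certify the two local inequalities in the ``insufficient-output'' second branch of the definition of $\Delta s$ and $\Delta c$; the edge-conservation step and the final summation are then essentially routine geometric-series manipulations.
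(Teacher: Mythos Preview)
Your proposal is correct and follows essentially the same approach as the paper: bound the initial push $\Delta_{out}(v_0)$ via condition~2 and $\deg(v_0)<2a$, establish the two local inequalities $\Delta_{out}(v)\le\frac{1-\lambda}{1+\lambda}\,\Delta_{in}(v)$ and $\Delta s(v)\le\frac{\Delta_{in}(v)}{\lambda\,\deg(v)}$, sum to obtain the factor $\frac{1+\lambda}{2\lambda}$, and finish using $\deg(v)\ge a$ for $v\in A$. The only cosmetic difference is that you derive the total-$\Delta_{in}$ bound from the one-line edge-conservation identity $\sum_{v\in D}\Delta_{in}(v)=\Delta_{out}(v_0)+\sum_{v\in D}\Delta_{out}(v)$, whereas the paper phrases the same computation as a layer-by-layer geometric series in the distance from $v_0$; both yield the identical constant and the same final bound $R(v_0)\le\bigl(1+\frac{1-\lambda^2}{2\lambda^2}\bigr)\,s'(v_0)$.
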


\begin{proof}
Consider the round when $v_0$ is the chosen node in the outer loop. As said above, $s'(v_0)$ is not modified anymore after this round, so it already has its final value; also the value of $R(v_0)$ is decided solely in this round.

Since $v_0 \in A$, we have $\text{deg}(v_0)<2a$. Hence, according to condition 2, $c'_{out}(v_0) = \Delta_{out}(v_0) < \frac{1-\lambda}{2} \cdot 2a \cdot s'(v_0)$ at the beginning of this round. Note that at each node $v$ reachable from $v_0$, we have $\Delta_{out}(v) \leq \frac{1-\lambda}{1+\lambda} \cdot \Delta_{in}(v)$, and hence the total of amount of changes $\Delta c$ decreases by a constant factor at each node $v$. Hence after processing all nodes up to a distance of at most $d$, the total amount of changes $\Delta c$ on the edges is at most
\[ \Delta_{out}(v_0) \cdot \left( 1 + \frac{1-\lambda}{1+\lambda} + \left( \frac{1-\lambda}{1+\lambda} \right)^2 + ... + \left( \frac{1-\lambda}{1+\lambda} \right)^d \right). \]
Since this is a geometric series with $\frac{1-\lambda}{1+\lambda} < 1$, the total amount of changes is at most
\[ \Delta_{out}(v_0) \cdot \sum_{i=0}^{\infty} \left( \frac{1-\lambda}{1+\lambda} \right)^i \leq \Delta_{out}(v_0) \cdot \frac{1}{1-\frac{1-\lambda}{1+\lambda}} = \Delta_{out}(v_0) \cdot \frac{1+\lambda}{2 \cdot \lambda} \]
regardless of $d$, thus even when all the nodes reachable from $v_0$ have been processed. Note that at each node $v$, each unit of decrease in $\Delta_{in}(v)$ corresponds to either $\frac{2}{1+\lambda} \cdot \frac{1}{\text{deg}(v)}$ or $\frac{1}{\lambda} \cdot \frac{1}{\text{deg}(v)}$ decrease in $\Delta s(v)$ (depending on whether $v$ still has real output edges to decrease). Even if we take the larger decrease rate of $\frac{1}{\lambda} \cdot \frac{1}{\text{deg}(v)}$, this means that the total amount of changes $\Delta c$ can only produce a limited amount of total decrease $\Delta s$; more specifically
\[ \sum_{v \in A} \Delta s(v) \leq \Delta_{out}(v_0) \cdot \frac{1+\lambda}{2 \cdot \lambda} \cdot \frac{1}{\lambda} \cdot \frac{1}{\text{deg}(v)} \leq O(1) \cdot \frac{ \Delta_{out}(v_0)}{a}, \]
using the fact that each $v \in A$ has degree at least $a$. Thus using the upper bound $\Delta_{out}(v_0) \leq \frac{1-\lambda}{2} \cdot 2a \cdot s'(v_0)$, we get
\[ R(v_0) = s'(v_0) + \sum_{v \in A} \Delta s(v) \leq s'(v_0) + \frac{ O(1) \cdot \frac{1-\lambda}{2} \cdot 2a \cdot s'(v_0)}{a} =  s'(v_0) \cdot \left( 1 + O(1) \right) = O(s'(v_0)). \]
\qedhere
\end{proof}

Hence $\sum_{v \in A} s(v) = \sum_{v \in A}{R(v)} = O(\sum_{v \in A} s'(v))$, so it suffices to upper bound the sum of the final $s'(v)$ values in order to prove Lemma \ref{lem:upper}, as done in the second part of the proof in Section \ref{sec:upper}.

\section{Discussion of lower bound proof} \label{App:B}

We now discuss the main challenges of designing a control gadget, and present some techniques that allow a control gadget design for a wide range of $\lambda \in (0,1)$. Let us introduce the notation $\mu:=\frac{\lambda+\varphi}{1-\varphi}$ for the input switching rate.

\subsection{Lower bound construction for $\lambda=\frac{1}{3}$}

We first demonstrate the construction showing the tight lower bound for a specific $\lambda$ value of $\frac{1}{3}$. This choice of $\lambda$ has a range of advantages: both $f(\frac{1}{3})=\frac{1}{3}$ and the optimal output ratio $\varphi^*(\lambda)=\frac{1}{9}$ are rational, the ratio of inputs to outputs $\frac{1-\varphi}{\varphi}=8$ is an integer, and the number of switches also increases by an integer factor $\frac{1}{\mu}=\frac{1-\varphi}{\lambda+\varphi}=2$. Thanks to these properties, $\lambda=\frac{1}{3}$ allows a fairly simple control gadget design.

\begin{lemma}
Consider majority/minority processes under Rule II with $\lambda=\frac{1}{3}$. There exists a graph construction and initial coloring that has stabilization time $\widetilde{\Omega}(n^{4/3})$.
\end{lemma}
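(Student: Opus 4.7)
The plan is to instantiate the general layered scheme of Section \ref{sec:lower} with the specific parameters $\varphi = \varphi^{*}(1/3) = 1/9$ and $\mu = (\lambda+\varphi)/(1-\varphi) = 1/2$. I would construct a graph with $L = \log_{8} n$ levels $V_{1}, V_{2}, \ldots, V_{L}$, each containing $\Theta(n/\log n)$ nodes. Consecutive levels $V_{i-1}$ and $V_{i}$ are joined by a regular bipartite graph so that each interior node $v \in V_{i}$ has up-degree $d_{i}$ to $V_{i-1}$ and down-degree $d_{i}/8$ to $V_{i+1}$; with $d_{i}$ decreasing by a factor of $8$ per level, from $\Theta(n)$ at the top to $\Theta(1)$ at the bottom, the total degree $9d_{i}/8$ makes the output fraction exactly $\varphi$. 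The intended schedule has every node at level $V_{i}$ switch $2^{L-i}$ times, so the number of switches doubles each level (consistent with $1/\mu = 2$), and the total is dominated by the bottom level: $|V_{L}| \cdot 2^{L-1} = \Theta(n/\log n) \cdot \Theta(n^{1/3}) = \widetilde{\Theta}(n^{4/3})$, as required.

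For each interior $v \in V_{i}$, the budget calculation is tight: between two consecutive switches of $v$, exactly $d_{i}/2$ of its $d_{i}$ upper neighbors should switch (each creating a new upper-edge conflict) and $d_{i}/8$ of its $d_{i}/8$ lower neighbors should switch (each consuming one output conflict), yielding a net gain of $3d_{i}/8 = \lambda \cdot \text{deg}(v)$ that exactly restores the $2/3$-switchability threshold. To realize this, I would use a circulant bipartite control gadget between $V_{i-1}$ and $V_{i}$: label both levels by $\mathbb{Z}_{m}$ with $m = \Theta(n/\log n)$, and connect the $j$-th node of $V_{i}$ to the $d_{i}$ consecutive nodes of $V_{i-1}$ starting at index $\lfloor 8j \cdot |V_{i-1}|/|V_{i}| \rfloor$. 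Firing the upper level in cyclic index order causes the first $d_{i}/2$ upper neighbors of each lower node $v$ to switch before the second $d_{i}/2$, producing the same $\mu = 1/2$ pattern uniformly across all lower nodes. I would then choose a periodic initial coloring whose block length on level $V_{i}$ is tied to $d_{i}$, so that the very first scheduled upper neighbor always flips its edge to $v$ from non-conflict to conflict (rather than the reverse), and inductively verify level by level that each scheduled switch meets the $2/3$-threshold of Rule II.

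The main obstacle, as anticipated in the general discussion of Theorem \ref{the:lower}, is the combinatorial design of the control gadget: a single edge set and a single schedule must simultaneously ensure (i) that the correct $d_{i}/2$ upper neighbors switch between consecutive switches of each lower node, (ii) that each such switch is color-compatible with \emph{creating} an upper-edge conflict rather than removing one, and (iii) that the $d_{i}/8$ switches from the next level down consume $v$'s output conflicts at the right cadence relative to $v$'s own switches. The favorable arithmetic at $\lambda = 1/3$ — namely that $(1-\varphi)/\varphi = 8$ and $1/\mu = 2$ are both small integers — is precisely what lets the circulant design succeed cleanly, avoiding the rational-approximation detour (and the associated $\epsilon$ loss in the exponent) that is unavoidable for generic $\lambda$ in Theorem \ref{the:lower}.
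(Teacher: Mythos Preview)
Your high-level scheme matches the paper exactly: $\varphi = 1/9$, $L = \log_8 n$ levels of size $\Theta(n/\log n)$, degree shrinking by a factor $8$ and switches doubling per level, giving $\widetilde\Theta(n^{4/3})$ from the bottom level. The budget arithmetic is also correct.

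The gap is in the control gadget. The circulant-plus-cyclic-firing scheme you describe does not satisfy the color constraint (ii), and no periodic initial coloring can rescue it. Concretely: if the first $d_i/2$ upper neighbors of $v$ are each to \emph{create} a conflict when they fire, they must all start in non-conflict with $v$; hence the required $d_i/4$ initial input conflicts lie entirely in the second half. After the first half fires and $v$ switches, the second half now contains $d_i/4$ nodes in conflict with the new color of $v$ and $d_i/4$ in non-conflict; when the second half fires, these contributions cancel and the net conflict gain is zero, so $v$ never reaches the $2/3$ threshold a second time. A separate problem is self-replication: after one cyclic pass of the upper level, the lower level does not perform two clean cyclic passes but an interleaved schedule (node $0$'s second switch occurs around node $d_i/16$'s first), so the same circulant wiring cannot be reused between the next pair of levels.

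The paper's actual gadget is different in kind: a fixed $16{+}16$ bipartite block on nodes $\{A,B,C,D\}\times\{1,2,3,4\}$, where each lower node labeled $x$ is joined to the upper group with the $x$-th letter. The upper neighborhood of every lower node then follows the overlapping-bracket control sequence $(12)(23)(34)(41)$ rather than a disjoint half/half split; the overlap is precisely what forces each bracket to be monochromatic, and the paper explicitly remarks that the simpler disjoint sequence $(12)(34)$ fails for exactly the color reason above. This sequence is also self-replicating --- while the upper $16$-tuple executes it once, the lower $16$-tuple executes the same sequence twice --- so the gadget stacks recursively. High degree is obtained not by a single large circulant but by connecting each $16$-tuple to many copies of the gadget on the adjacent level.
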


As outlined in Section \ref{sec:lower}, our construction consists of $L=\log_{8}(n)$ levels, each of which contains $\Theta(\frac{n}{\log{n}})$ nodes. Each consecutive pair of levels forms a regular bipartite graph, with $\frac{1}{8}$ of the degree of the previous consecutive pair. Each node $v$ has updegree $\frac{8}{9} \text{deg}(v)$ and downdegree $\frac{1}{9} \text{deg}(v)$.

E.g. in a majority process, in the initial state, $\frac{2}{8}$ of inputs will have the opposite color as $v$, and all other neighbors will have the same color. Whenever $\mu=\frac{1}{2}$ of the inputs (i.e., $\frac{4}{9}$ of the degree) switch to the opposite color, then $\frac{6}{8}$ of inputs will have the opposite color; as this is $\frac{6}{9}=\frac{1+\lambda}{2}$ of all neighbors, $v$ can now switch. As a result, the lower neighbors of $v$ will have a different color than $v$ (i.e., a conflict is pushed down), and eventually these nodes will follow $v$ to the same new color. This results in a state again where $\frac{2}{8}$ of inputs have the opposite color as $v$, and the rest have the same.

Note that between every two switches of $v$, exactly half of its upper neighbors switch, so the number of switches for each node will always increase by a factor of 2 if we move a level down. This shows that each node in the bottom level switches $2^L= n^{\frac{1}{3}}$ times. Since there are $\widetilde{\Theta}(n)$ nodes on the bottom level, the already sum up to $\widetilde{\Omega}(n^{4/3})$ switches, establishing the lower bound.

Two consecutive levels of the construction are connected through control gadgets. A control gadget is a regular, bipartite gadget on $k+k$ nodes for some constant $k$, i.e. a way to connect two $k$-tuples of nodes on a consecutive pair of levels. The upper and lower $k$ nodes of the gadget are in a 1-to-1 correspondence with each other. The goal of the gadget is to ensure that given some sequence of switches in the $k$-tuple, if we execute the the switches $s$ times on the upper level, then this allows us to execute the same sequence of switches on lower $k$-tuple $2s$ times. This allows for a recursive repetition of the same process, executed twice as many times on each next level.

\begin{figure}
\centering
\captionsetup{justification=centering}
	\begin{tikzpicture}

	\draw[gray, thick] (0pt,0pt) -- (22.5pt,20pt);
	\draw[gray, thick] (15pt,0pt) -- (22.5pt,20pt);
	\draw[gray, thick] (30pt,0pt) -- (22.5pt,20pt);
	\draw[gray, thick] (45pt,0pt) -- (22.5pt,20pt);
	\draw[gray, ultra thick] (22.5pt,20pt) -- (22.5pt,45pt);
	\draw[gray, thick] (0pt,90pt) -- (0pt,80pt) -- (22.5pt,45pt);
	\draw[gray, thick] (60pt,90pt) -- (60pt,80pt) -- (22.5pt,45pt);
	\draw[gray, thick] (120pt,90pt) -- (120pt,80pt) -- (22.5pt,45pt);
	\draw[gray, thick] (180pt,90pt) -- (180pt,80pt) -- (22.5pt,45pt);
	
	\draw[gray, thick] (60pt,0pt) -- (82.5pt,20pt);
	\draw[gray, thick] (75pt,0pt) -- (82.5pt,20pt);
	\draw[gray, thick] (90pt,0pt) -- (82.5pt,20pt);
	\draw[gray, thick] (105pt,0pt) -- (82.5pt,20pt);
	\draw[gray, ultra thick] (82.5pt,20pt) -- (82.5pt,45pt);
	\draw[gray, thick] (15pt,90pt) -- (15pt,80pt) -- (82.5pt,45pt);
	\draw[gray, thick] (75pt,90pt) -- (75pt,80pt) -- (82.5pt,45pt);
	\draw[gray, thick] (135pt,90pt) -- (135pt,80pt) -- (82.5pt,45pt);
	\draw[gray, thick] (195pt,90pt) -- (195pt,80pt) -- (82.5pt,45pt);
	
	\draw[gray, thick] (120pt,0pt) -- (142.5pt,20pt);
	\draw[gray, thick] (135pt,0pt) -- (142.5pt,20pt);
	\draw[gray, thick] (150pt,0pt) -- (142.5pt,20pt);
	\draw[gray, thick] (165pt,0pt) -- (142.5pt,20pt);
	\draw[gray, ultra thick] (142.5pt,20pt) -- (142.5pt,45pt);
	\draw[gray, thick] (30pt,90pt) -- (30pt,80pt) -- (142.5pt,45pt);
	\draw[gray, thick] (90pt,90pt) -- (90pt,80pt) -- (142.5pt,45pt);
	\draw[gray, thick] (150pt,90pt) -- (150pt,80pt) -- (142.5pt,45pt);
	\draw[gray, thick] (210pt,90pt) -- (210pt,80pt) -- (142.5pt,45pt);
	
	\draw[gray, thick] (180pt,0pt) -- (202.5pt,20pt);
	\draw[gray, thick] (195pt,0pt) -- (202.5pt,20pt);
	\draw[gray, thick] (210pt,0pt) -- (202.5pt,20pt);
	\draw[gray, thick] (225pt,0pt) -- (202.5pt,20pt);
	\draw[gray, ultra thick] (202.5pt,20pt) -- (202.5pt,45pt);
	\draw[gray, thick] (45pt,90pt) -- (45pt,80pt) -- (202.5pt,45pt);
	\draw[gray, thick] (105pt,90pt) -- (105pt,80pt) -- (202.5pt,45pt);
	\draw[gray, thick] (165pt,90pt) -- (165pt,80pt) -- (202.5pt,45pt);
	\draw[gray, thick] (225pt,90pt) -- (225pt,80pt) -- (202.5pt,45pt);
	
	\draw[gray, fill=gray] (22.5pt,20pt) circle (0.25ex);
	\draw[gray, fill=gray] (22.5pt,45pt) circle (0.25ex);
	\draw[gray, fill=gray] (82.5pt,20pt) circle (0.25ex);
	\draw[gray, fill=gray] (82.5pt,45pt) circle (0.25ex);
	\draw[gray, fill=gray] (142.5pt,20pt) circle (0.25ex);
	\draw[gray, fill=gray] (142.5pt,45pt) circle (0.25ex);
	\draw[gray, fill=gray] (202.5pt,20pt) circle (0.25ex);
	\draw[gray, fill=gray] (202.5pt,45pt) circle (0.25ex);
	

	\draw[black, fill=white] (0pt,0pt) circle (0.8ex);
	\draw[black, fill=white] (15pt,0pt) circle (0.8ex);
	\draw[black, fill=white] (30pt,0pt) circle (0.8ex);
	\draw[black, fill=white] (45pt,0pt) circle (0.8ex);
	\draw[black, fill=white] (60pt,0pt) circle (0.8ex);
	\draw[black, fill=white] (75pt,0pt) circle (0.8ex);
	\draw[black, fill=white] (90pt,0pt) circle (0.8ex);
	\draw[black, fill=white] (105pt,0pt) circle (0.8ex);
	\draw[black, fill=white] (120pt,0pt) circle (0.8ex);
	\draw[black, fill=white] (135pt,0pt) circle (0.8ex);
	\draw[black, fill=white] (150pt,0pt) circle (0.8ex);
	\draw[black, fill=white] (165pt,0pt) circle (0.8ex);
	\draw[black, fill=white] (180pt,0pt) circle (0.8ex);
	\draw[black, fill=white] (195pt,0pt) circle (0.8ex);
	\draw[black, fill=white] (210pt,0pt) circle (0.8ex);
	\draw[black, fill=white] (225pt,0pt) circle (0.8ex);
	
	\node[anchor=north] at (0pt,-2pt) {\tiny $A1$};
	\node[anchor=north] at (15pt,-2pt) {\tiny $B1$};
	\node[anchor=north] at (30pt,-2pt) {\tiny $C1$};
	\node[anchor=north] at (45pt,-2pt) {\tiny $D1$};
	\node[anchor=north] at (60pt,-2pt) {\tiny $A2$};
	\node[anchor=north] at (75pt,-2pt) {\tiny $B2$};
	\node[anchor=north] at (90pt,-2pt) {\tiny $C2$};
	\node[anchor=north] at (105pt,-2pt) {\tiny $D2$};
	\node[anchor=north] at (120pt,-2pt) {\tiny $A3$};
	\node[anchor=north] at (135pt,-2pt) {\tiny $B3$};
	\node[anchor=north] at (150pt,-2pt) {\tiny $C3$};
	\node[anchor=north] at (165pt,-2pt) {\tiny $D3$};
	\node[anchor=north] at (180pt,-2pt) {\tiny $A4$};
	\node[anchor=north] at (195pt,-2pt) {\tiny $B4$};
	\node[anchor=north] at (210pt,-2pt) {\tiny $C4$};
	\node[anchor=north] at (225pt,-2pt) {\tiny $D4$};
	
	
	\draw[black, fill=white] (0pt,90pt) circle (0.8ex);
	\draw[black, fill=white] (15pt,90pt) circle (0.8ex);
	\draw[black, fill=white] (30pt,90pt) circle (0.8ex);
	\draw[black, fill=white] (45pt,90pt) circle (0.8ex);
	\draw[black, fill=white] (60pt,90pt) circle (0.8ex);
	\draw[black, fill=white] (75pt,90pt) circle (0.8ex);
	\draw[black, fill=white] (90pt,90pt) circle (0.8ex);
	\draw[black, fill=white] (105pt,90pt) circle (0.8ex);
	\draw[black, fill=white] (120pt,90pt) circle (0.8ex);
	\draw[black, fill=white] (135pt,90pt) circle (0.8ex);
	\draw[black, fill=white] (150pt,90pt) circle (0.8ex);
	\draw[black, fill=white] (165pt,90pt) circle (0.8ex);
	\draw[black, fill=white] (180pt,90pt) circle (0.8ex);
	\draw[black, fill=white] (195pt,90pt) circle (0.8ex);
	\draw[black, fill=white] (210pt,90pt) circle (0.8ex);
	\draw[black, fill=white] (225pt,90pt) circle (0.8ex);
	
	\node[anchor=south] at (0pt,92pt) {\tiny $A1$};
	\node[anchor=south] at (15pt,92pt) {\tiny $B1$};
	\node[anchor=south] at (30pt,92pt) {\tiny $C1$};
	\node[anchor=south] at (45pt,92pt) {\tiny $D1$};
	\node[anchor=south] at (60pt,92pt) {\tiny $A2$};
	\node[anchor=south] at (75pt,92pt) {\tiny $B2$};
	\node[anchor=south] at (90pt,92pt) {\tiny $C2$};
	\node[anchor=south] at (105pt,92pt) {\tiny $D2$};
	\node[anchor=south] at (120pt,92pt) {\tiny $A3$};
	\node[anchor=south] at (135pt,92pt) {\tiny $B3$};
	\node[anchor=south] at (150pt,92pt) {\tiny $C3$};
	\node[anchor=south] at (165pt,92pt) {\tiny $D3$};
	\node[anchor=south] at (180pt,92pt) {\tiny $A4$};
	\node[anchor=south] at (195pt,92pt) {\tiny $B4$};
	\node[anchor=south] at (210pt,92pt) {\tiny $C4$};
	\node[anchor=south] at (225pt,92pt) {\tiny $D4$};

\end{tikzpicture}
	\caption{Illustration of the connections within the control gadget of 16+16 nodes for $\lambda=\frac{1}{3}$, with simplified notation for complete bipartite subgraphs on 4+4 nodes.}
	\label{fig:conns}
\end{figure}
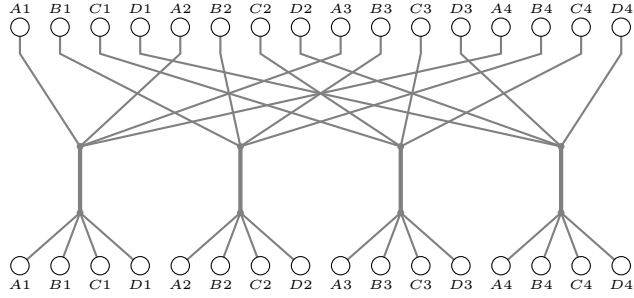

We present such a control gadget on $k=16$ nodes. For this, we take 4 groups $A, B, C, D$, each containing 4 nodes; thus, our nodes will be elements of $\{A, B, C, D\} \times \{ 1, 2, 3, 4\}$. Each lower level node labeled by number $x$ will be connected to the group corresponding to the $x^{\text{th}}$ letter of the alphabet. E.g. nodes $A2$, $B2$, $C2$ and $D2$ on the lower level form a complete bipartite subgraph with nodes $B1$, $B2$, $B3$ and $B4$ on the upper level; the connections are illustrated on Figure \ref{fig:conns}. Hence, each node has an induced degree 4 within the gadget.

Given these connections, Figure \ref{fig:whole_sequence} shows a self-replicating sequence of this control gadget. Considering the 4 upper neighbors of any specific node (without the group identifier), we can see that they follow the control sequence (12)(23)(34)(41). This ensures that every node occurs the same number of times in the sequence, and that between any two switches of a lower node, exactly 2 of its 4 upper neighbors are switched, so no inputs are wasted indeed. (Note that the simpler sequence (12)(34) would also satisfy these properties, but it would not allow us to assign colors to the nodes in a proper way.)

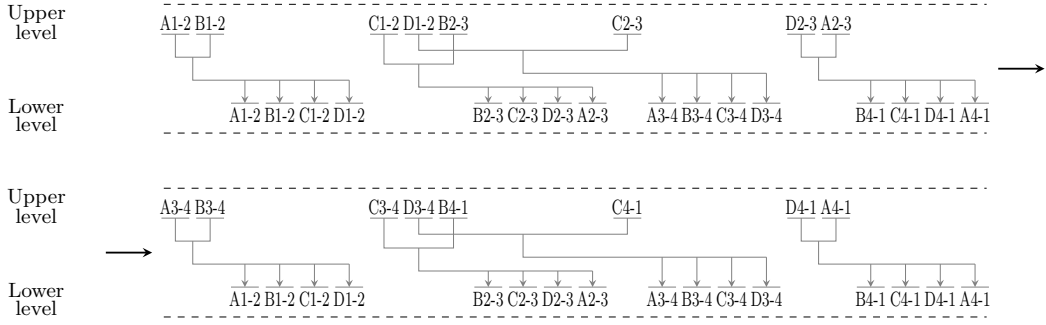
\begin{figure}
\centering
\captionsetup{justification=centering}
	\resizebox{0.99\textwidth}{!}{\begin{tikzpicture}

	\node[anchor=center] at (0pt,40pt) {\small \scalebox{.7}[1.0]{A1-2}};
	\node[anchor=center] at (15pt,40pt) {\small \scalebox{.7}[1.0]{B1-2}};
	\node[anchor=center] at (30pt,0pt) {\small \scalebox{.7}[1.0]{A1-2}};
	\node[anchor=center] at (45pt,0pt) {\small \scalebox{.7}[1.0]{B1-2}};
	\node[anchor=center] at (60pt,0pt) {\small \scalebox{.7}[1.0]{C1-2}};
	\node[anchor=center] at (75pt,0pt) {\small \scalebox{.7}[1.0]{D1-2}};
	\node[anchor=center] at (90pt,40pt) {\small \scalebox{.7}[1.0]{C1-2}};
	\node[anchor=center] at (105pt,40pt) {\small \scalebox{.7}[1.0]{D1-2}};
	\node[anchor=center] at (120pt,40pt) {\small \scalebox{.7}[1.0]{B2-3}};
	\node[anchor=center] at (135pt,0pt) {\small \scalebox{.7}[1.0]{B2-3}};
	\node[anchor=center] at (150pt,0pt) {\small \scalebox{.7}[1.0]{C2-3}};
	\node[anchor=center] at (165pt,0pt) {\small \scalebox{.7}[1.0]{D2-3}};
	\node[anchor=center] at (180pt,0pt) {\small \scalebox{.7}[1.0]{A2-3}};
	\node[anchor=center] at (195pt,40pt) {\small \scalebox{.7}[1.0]{C2-3}};
	\node[anchor=center] at (210pt,0pt) {\small \scalebox{.7}[1.0]{A3-4}};
	\node[anchor=center] at (225pt,0pt) {\small \scalebox{.7}[1.0]{B3-4}};
	\node[anchor=center] at (240pt,0pt) {\small \scalebox{.7}[1.0]{C3-4}};
	\node[anchor=center] at (255pt,0pt) {\small \scalebox{.7}[1.0]{D3-4}};
	\node[anchor=center] at (270pt,40pt) {\small \scalebox{.7}[1.0]{D2-3}};
	\node[anchor=center] at (285pt,40pt) {\small \scalebox{.7}[1.0]{A2-3}};
	\node[anchor=center] at (300pt,0pt) {\small \scalebox{.7}[1.0]{B4-1}};
	\node[anchor=center] at (315pt,0pt) {\small \scalebox{.7}[1.0]{C4-1}};
	\node[anchor=center] at (330pt,0pt) {\small \scalebox{.7}[1.0]{D4-1}};
	\node[anchor=center] at (345pt,0pt) {\small \scalebox{.7}[1.0]{A4-1}};
	
	\node[anchor=center] at (0pt,-40pt) {\small \scalebox{.7}[1.0]{A3-4}};
	\node[anchor=center] at (15pt,-40pt) {\small \scalebox{.7}[1.0]{B3-4}};
	\node[anchor=center] at (30pt,-80pt) {\small \scalebox{.7}[1.0]{A1-2}};
	\node[anchor=center] at (45pt,-80pt) {\small \scalebox{.7}[1.0]{B1-2}};
	\node[anchor=center] at (60pt,-80pt) {\small \scalebox{.7}[1.0]{C1-2}};
	\node[anchor=center] at (75pt,-80pt) {\small \scalebox{.7}[1.0]{D1-2}};
	\node[anchor=center] at (90pt,-40pt) {\small \scalebox{.7}[1.0]{C3-4}};
	\node[anchor=center] at (105pt,-40pt) {\small \scalebox{.7}[1.0]{D3-4}};
	\node[anchor=center] at (120pt,-40pt) {\small \scalebox{.7}[1.0]{B4-1}};
	\node[anchor=center] at (135pt,-80pt) {\small \scalebox{.7}[1.0]{B2-3}};
	\node[anchor=center] at (150pt,-80pt) {\small \scalebox{.7}[1.0]{C2-3}};
	\node[anchor=center] at (165pt,-80pt) {\small \scalebox{.7}[1.0]{D2-3}};
	\node[anchor=center] at (180pt,-80pt) {\small \scalebox{.7}[1.0]{A2-3}};
	\node[anchor=center] at (195pt,-40pt) {\small \scalebox{.7}[1.0]{C4-1}};
	\node[anchor=center] at (210pt,-80pt) {\small \scalebox{.7}[1.0]{A3-4}};
	\node[anchor=center] at (225pt,-80pt) {\small \scalebox{.7}[1.0]{B3-4}};
	\node[anchor=center] at (240pt,-80pt) {\small \scalebox{.7}[1.0]{C3-4}};
	\node[anchor=center] at (255pt,-80pt) {\small \scalebox{.7}[1.0]{D3-4}};
	\node[anchor=center] at (270pt,-40pt) {\small \scalebox{.7}[1.0]{D4-1}};
	\node[anchor=center] at (285pt,-40pt) {\small \scalebox{.7}[1.0]{A4-1}};
	\node[anchor=center] at (300pt,-80pt) {\small \scalebox{.7}[1.0]{B4-1}};
	\node[anchor=center] at (315pt,-80pt) {\small \scalebox{.7}[1.0]{C4-1}};
	\node[anchor=center] at (330pt,-80pt) {\small \scalebox{.7}[1.0]{D4-1}};
	\node[anchor=center] at (345pt,-80pt) {\small \scalebox{.7}[1.0]{A4-1}};

	\draw[gray] (-6pt,35pt) -- (6pt,35pt);
	\draw[gray] (9pt,35pt) -- (21pt,35pt);
	\draw[gray] (0pt,35pt) -- (0pt,25pt) -- (15pt,25pt) -- (15pt,35pt);
	\draw[gray] (7.5pt,25pt) -- (7.5pt,15pt) -- (75pt,15pt);
	\draw[gray, arrows=-stealth] (30pt,15pt) -- (30pt,5pt);
	\draw[gray, arrows=-stealth] (45pt,15pt) -- (45pt,5pt);
	\draw[gray, arrows=-stealth] (60pt,15pt) -- (60pt,5pt);
	\draw[gray, arrows=-stealth] (75pt,15pt) -- (75pt,5pt);
	\draw[gray] (24pt,5pt) -- (36pt,5pt);
	\draw[gray] (39pt,5pt) -- (51pt,5pt);
	\draw[gray] (54pt,5pt) -- (66pt,5pt);
	\draw[gray] (69pt,5pt) -- (81pt,5pt);
	
	\draw[gray] (84pt,35pt) -- (96pt,35pt);
	\draw[gray] (114pt,35pt) -- (126pt,35pt);
	\draw[gray] (90pt,35pt) -- (90pt,22pt) -- (120pt,22pt) -- (120pt,35pt);
	\draw[gray] (105pt,22pt) -- (105pt,12pt) -- (180pt,12pt);
	\draw[gray, arrows=-stealth] (135pt,12pt) -- (135pt,5pt);
	\draw[gray, arrows=-stealth] (150pt,12pt) -- (150pt,5pt);
	\draw[gray, arrows=-stealth] (165pt,12pt) -- (165pt,5pt);
	\draw[gray, arrows=-stealth] (180pt,12pt) -- (180pt,5pt);
	\draw[gray] (129pt,5pt) -- (141pt,5pt);
	\draw[gray] (144pt,5pt) -- (156pt,5pt);
	\draw[gray] (159pt,5pt) -- (171pt,5pt);
	\draw[gray] (174pt,5pt) -- (186pt,5pt);
	
	\draw[gray] (99pt,35pt) -- (111pt,35pt);
	\draw[gray] (189pt,35pt) -- (201pt,35pt);
	\draw[gray] (105pt,35pt) -- (105pt,28pt) -- (195pt,28pt) -- (195pt,35pt);
	\draw[gray] (150pt,28pt) -- (150pt,18pt) -- (255pt,18pt);
	\draw[gray, arrows=-stealth] (210pt,18pt) -- (210pt,5pt);
	\draw[gray, arrows=-stealth] (225pt,18pt) -- (225pt,5pt);
	\draw[gray, arrows=-stealth] (240pt,18pt) -- (240pt,5pt);
	\draw[gray, arrows=-stealth] (255pt,18pt) -- (255pt,5pt);
	\draw[gray] (204pt,5pt) -- (216pt,5pt);
	\draw[gray] (219pt,5pt) -- (231pt,5pt);
	\draw[gray] (234pt,5pt) -- (246pt,5pt);
	\draw[gray] (249pt,5pt) -- (261pt,5pt);
	
	\draw[gray] (264pt,35pt) -- (276pt,35pt);
	\draw[gray] (279pt,35pt) -- (291pt,35pt);
	\draw[gray] (270pt,35pt) -- (270pt,25pt) -- (285pt,25pt) -- (285pt,35pt);
	\draw[gray] (277.5pt,25pt) -- (277.5pt,15pt) -- (345pt,15pt);
	\draw[gray, arrows=-stealth] (300pt,15pt) -- (300pt,5pt);
	\draw[gray, arrows=-stealth] (315pt,15pt) -- (315pt,5pt);
	\draw[gray, arrows=-stealth] (330pt,15pt) -- (330pt,5pt);
	\draw[gray, arrows=-stealth] (345pt,15pt) -- (345pt,5pt);
	\draw[gray] (294pt,5pt) -- (306pt,5pt);
	\draw[gray] (309pt,5pt) -- (321pt,5pt);
	\draw[gray] (324pt,5pt) -- (336pt,5pt);
	\draw[gray] (339pt,5pt) -- (351pt,5pt);
	
	\draw[gray] (-6pt,-45pt) -- (6pt,-45pt);
	\draw[gray] (9pt,-45pt) -- (21pt,-45pt);
	\draw[gray] (0pt,-45pt) -- (0pt,-55pt) -- (15pt,-55pt) -- (15pt,-45pt);
	\draw[gray] (7.5pt,-55pt) -- (7.5pt,-65pt) -- (75pt,-65pt);
	\draw[gray, arrows=-stealth] (30pt,-65pt) -- (30pt,-75pt);
	\draw[gray, arrows=-stealth] (45pt,-65pt) -- (45pt,-75pt);
	\draw[gray, arrows=-stealth] (60pt,-65pt) -- (60pt,-75pt);
	\draw[gray, arrows=-stealth] (75pt,-65pt) -- (75pt,-75pt);
	\draw[gray] (24pt,-75pt) -- (36pt,-75pt);
	\draw[gray] (39pt,-75pt) -- (51pt,-75pt);
	\draw[gray] (54pt,-75pt) -- (66pt,-75pt);
	\draw[gray] (69pt,-75pt) -- (81pt,-75pt);
	
	\draw[gray] (84pt,-45pt) -- (96pt,-45pt);
	\draw[gray] (114pt,-45pt) -- (126pt,-45pt);
	\draw[gray] (90pt,-45pt) -- (90pt,-58pt) -- (120pt,-58pt) -- (120pt,-45pt);
	\draw[gray] (105pt,-58pt) -- (105pt,-68pt) -- (180pt,-68pt);
	\draw[gray, arrows=-stealth] (135pt,-68pt) -- (135pt,-75pt);
	\draw[gray, arrows=-stealth] (150pt,-68pt) -- (150pt,-75pt);
	\draw[gray, arrows=-stealth] (165pt,-68pt) -- (165pt,-75pt);
	\draw[gray, arrows=-stealth] (180pt,-68pt) -- (180pt,-75pt);
	\draw[gray] (129pt,-75pt) -- (141pt,-75pt);
	\draw[gray] (144pt,-75pt) -- (156pt,-75pt);
	\draw[gray] (159pt,-75pt) -- (171pt,-75pt);
	\draw[gray] (174pt,-75pt) -- (186pt,-75pt);
	
	\draw[gray] (99pt,-45pt) -- (111pt,-45pt);
	\draw[gray] (189pt,-45pt) -- (201pt,-45pt);
	\draw[gray] (105pt,-45pt) -- (105pt,-52pt) -- (195pt,-52pt) -- (195pt,-45pt);
	\draw[gray] (150pt,-52pt) -- (150pt,-62pt) -- (255pt,-62pt);
	\draw[gray, arrows=-stealth] (210pt,-62pt) -- (210pt,-75pt);
	\draw[gray, arrows=-stealth] (225pt,-62pt) -- (225pt,-75pt);
	\draw[gray, arrows=-stealth] (240pt,-62pt) -- (240pt,-75pt);
	\draw[gray, arrows=-stealth] (255pt,-62pt) -- (255pt,-75pt);
	\draw[gray] (204pt,-75pt) -- (216pt,-75pt);
	\draw[gray] (219pt,-75pt) -- (231pt,-75pt);
	\draw[gray] (234pt,-75pt) -- (246pt,-75pt);
	\draw[gray] (249pt,-75pt) -- (261pt,-75pt);
	
	\draw[gray] (264pt,-45pt) -- (276pt,-45pt);
	\draw[gray] (279pt,-45pt) -- (291pt,-45pt);
	\draw[gray] (270pt,-45pt) -- (270pt,-55pt) -- (285pt,-55pt) -- (285pt,-45pt);
	\draw[gray] (277.5pt,-55pt) -- (277.5pt,-65pt) -- (345pt,-65pt);
	\draw[gray, arrows=-stealth] (300pt,-65pt) -- (300pt,-75pt);
	\draw[gray, arrows=-stealth] (315pt,-65pt) -- (315pt,-75pt);
	\draw[gray, arrows=-stealth] (330pt,-65pt) -- (330pt,-75pt);
	\draw[gray, arrows=-stealth] (345pt,-65pt) -- (345pt,-75pt);
	\draw[gray] (294pt,-75pt) -- (306pt,-75pt);
	\draw[gray] (309pt,-75pt) -- (321pt,-75pt);
	\draw[gray] (324pt,-75pt) -- (336pt,-75pt);
	\draw[gray] (339pt,-75pt) -- (351pt,-75pt);

	\node[anchor=center] at (-60pt,44pt) {\small Upper};
	\node[anchor=center] at (-60pt,36pt) {\small level};
	\node[anchor=center] at (-60pt,4pt) {\small Lower};
	\node[anchor=center] at (-60pt,-4pt) {\small level};
	\node[anchor=center] at (-60pt,-36pt) {\small Upper};
	\node[anchor=center] at (-60pt,-44pt) {\small level};
	\node[anchor=center] at (-60pt,-76pt) {\small Lower};
	\node[anchor=center] at (-60pt,-84pt) {\small level};
	
	\draw[thick, arrows=-stealth] (355pt,20pt) -- (375pt,20pt);
	\draw[thick, arrows=-stealth] (-30pt,-60pt) -- (-10pt,-60pt);
	
	\draw[dashed] (-5pt,48pt) -- (350pt,48pt);
	\draw[dashed] (-5pt,-8pt) -- (350pt,-8pt);
	\draw[dashed] (-5pt,-32pt) -- (350pt,-32pt);
	\draw[dashed] (-5pt,-88pt) -- (350pt,-88pt);


\end{tikzpicture}}
	\caption{Self-replicating sequence of switches on 16 nodes: while the upper level executes the sequence once, the lower level executes the same sequence twice. Arrows show that the lower nodes become switchable due to the switching of the specific upper nodes.}
	\label{fig:whole_sequence}
\end{figure}

Having designed this control gadget of constant size, each level will consist of $\Theta(\frac{n}{\log{n}})$ distinct copies of this 16-node group $\{A, B, C, D\} \times \{ 1, 2, 3, 4\}$.  We then start with constant-degree nodes on the lowermost level, and increase this degree by a factor of $\frac{1-\varphi}{\varphi}=8$ on every new level from bottom to top. To achieve this degree, we connect the lower level of a control gadget to the upper level of not only one, but \textit{multiple} control gadgets; e.g. the nodes $A2$, $B2$, $C2$ and $D2$ are connected to the $B$-labeled nodes of not only one, but multiple 16-node groups on the level above. This allows us to indeed increase the degree by a factor of 8 at each new level. For example, if the node $A2$ in a group is connected to the nodes $A1$, $B1$, $C1$ and $D1$ in $x$ distinct 16-node groups on the level below (thus having a downdegree of $4x$), it will be connected to the nodes $B1$, $B2$, $B3$ and $B4$ in $8x$ distinct 16-node groups on the level above (resulting in an updegree of $32x$).

Since all 16-node groups on the same level can execute the same steps in a parallel manner, this allows us to produce the very same behavior as in the control gadget, but for high-degree nodes.  With this technique, each consecutive pair of levels will form a regular (i.e., same-degree) bipartite graph, comprised of numerous copies of the control gadget as a subgraph.

Given the construction for propagating conflicts appropriately, we can easily assign colors to the nodes to obtain a majority or minority process. Observe that a constructions for majority and minority processes follow straightforwardly from each other: since our graph is bipartite, we can simply reverse the color of every node on every second level, directly obtaining a minority example from a majority example, or vice versa.

\subsection{Generalization for other $\lambda$ values}

The main idea for generalizing the construction, as already outlined in Section \ref{sec:lower}, is the following. Given a control gadget of constant size, we can place $\Theta(\frac{n}{\log{n}})$ such gadgets on each level, having $L=\frac{1}{\log(\frac{1-\varphi}{\varphi})} \log(n)$ levels altogether. We then begin with a constant degree for each node on the lowermost level, and increase the degree by a factor of $\frac{1-\varphi}{\varphi}$ on each new level. In order to do this, we again connect the lower level of control gadgets to the upper level of not only one, but multiple distinct control gadgets, as in the case of the $\lambda=\frac{1}{3}$ example. Thus consecutive pairs of levels form a regular bipartite graph, with the degree rising exponentially as we move upward in the construction.

The main challenge in the general construction is to design a control gadget of constant size, i.e. to devise a way where the next level of nodes follows the exact some switching order, but with a schedule where the nodes switch an $\frac{1}{\mu}$ factor more frequently. However, when the input switching rate $\mu$ is not a rational number, then switching a $\mu$ portion of the upper neighborhood is of course not possible. Hence in this case, we can only approximate the rate by a rational number $\frac{p}{q} \approx \mu$, with $p, q, \in \mathbb{Z}$. With the appropriate choice of $p$ and $q$, we can get arbitrarily close to the desired rate $\mu$. We then develop the same construction and control gadget for the input switching rate $\frac{p}{q}$, which will yield almost the same amount of total switches: since $f(\lambda)$ is continuous, a close enough $\frac{p}{q}$ approximation gives a construction with $\Theta(n^{1+f(\lambda)-\epsilon})$ switches for any $\epsilon>0$.

For convenience, we will always assume that $p+q$ is an even value; in case it is not, we can easily achieve this by doubling the value of both $p$ and $q$, using the approximation $\frac{2p}{2q} \approx \mu$ instead of $\frac{p}{q}$. Note that in the the previous subsection where $\lambda=\frac{1}{3}$ implied $\mu=\frac{1}{2}$, we have already done this essentially: while we could have switched 1 out of 2 upper neighbors in each step, we have in fact switched 2 out of 4 every time. This assumption is required because we want nodes to be in conflict with $\frac{p+q}{2}$ out of their $q$ upper neighbors when switching, since this is the amount of upper neighbors that correspond to the switching threshold, namely
\[ \frac{\frac{p+q}{2}}{q} \cdot \text{deg}_{\text{upper}}(v) = \frac{1}{2} \cdot \frac{p+q}{q} \cdot (1-\varphi) \cdot \text{deg}(v) \approx \left( \frac{\lambda+\varphi}{1-\varphi} + 1 \right) \cdot \frac{1}{2} (1-\varphi) \cdot \text{deg}(v) = \frac{\lambda+1}{2} \cdot \text{deg}(v). \]
Hence, $\frac{p+q}{2}$ has to be an integer.

In the following, in order to develop the required control gadget, we first generalize the notion of control sequence for any $(p,q)$ pair; this is essentially a balanced schedule of switching in the upper neighborhood which ensures wasteless conflict propagation, i.e. that the lower neighbor always switches when it is exactly on the threshold of switchability. We then discuss the main challenge in generalizing the control gadget used for $\lambda=\frac{1}{3}$ to other $\lambda$ values.

Furthermore, the construction also raises some minor technical questions relating to divisibility; we discuss these at the end of the section.

\subsection{Control sequences for general $p$ and $q$}

Similarly to the $\mu=\frac{1}{2}$ case, given $p$ and $q$, we can develop a control sequence of numbers $(1, ..., q)$, and switch the upper neighborhood of any node in our construction following this sequence. Let $b=\frac{p-q}{2}$. The first \textit{bracket} of the control sequence contains numbers $(1, ..., p)$, and for every next bracket, we shift the both the beginning and the end of the interval by $b$; in general, the $i^{\text{th}}$ bracket consist of the numbers $((i-1) \cdot b + 1)$, ..., $((i-1) \cdot b + p)$, all taken modulo $q$ to fall into the interval $[1, ..., q]$.

Initially, all nodes labeled $1, ..., p$ and $p+b+1, ..., q$ are black, and all nodes labeled $p+1, ..., p+b$ are white. Then this sequence of steps ensures that in every odd step, all the nodes in the next bracket of the control sequence are currently black, and in every even step, all the nodes in the next bracket are currently white. This means that after every odd (or even) step, $\frac{p+b}{q}$ of the upper neighborhood is white (or black, respectively). As
\[ \frac{p+b}{q}=\mu+\frac{1-\mu}{2}=\frac{1+\mu}{2}=\frac{1+\lambda}{2(1-\varphi)} ,\]
and all output connections have a non-conflicting color before switching, this means that $(1-\varphi) \cdot \frac{1+\lambda}{2(1-\varphi)} = \frac{1+\lambda}{2}$ of the entire neighborhood is in conflict with the node, so it is indeed precisely on the threshold for switchability.

For example, the control sequence for ($p$,$q$)=(5,9) is
\[ (12345)(34567)(56789)(78912)(91234)(23456)(45678)(67891)(89123) ,\]
with nodes labeled 1-5 and 8-9 initially black and nodes labeled 6-7 initially white. Then in every odd (even)  bracket, the nodes that switch are always colored black (white) currently. To some extent, the same control sequence idea has already been applied in \cite{minorityW}.

Since $b$ and $q$ are relatively prime (as the greatest common divisor of $p$ and $q$ is either 1 or 2), the sequence consists of $q$ distinct brackets before periodically repeating itself. Note that among the nodes of a specific color, the next bracket always includes those that have occurred the least amount of times so far (have the smallest \textit{occurrence number}). This ensures that at any point in the sequence, the difference in the number of occurrences between any two nodes is at most 2. Whenever a specific node is absent from the sequence, it is always absent for exactly 2 consecutive brackets. Each node $1,...,q$ appears the same number of times ($p$ times) before the sequence start repeating itself; hence, if the upper neighborhood of a node $v$ follows this sequence, then $v$ indeed switches $\frac{q}{p}=\frac{1}{\mu}$ times more than its upper neighbors, and does not waste any input conflicts.

Observe, however, that any node $v$ connected to such an upper neighborhood has to be of the same color to be switchable in all steps. I.e. in case of a majority process, $v$ becomes white (black) after every odd step (even step, respectively), while in a minority process, $v$ becomes black (white) after every odd step (even step, respectively). Since we also need nodes of both color on the next level, in practice, we have to take two copies of our control gadgets; this produces twice as many nodes on each level, distributed equally among the two colors, which all switch at the same time if we proceed through the steps of the two control gadgets in a parallel manner. This technique of duplicating the controlling gadget has already been used and discussed in \cite{minorityW}. The duplication is a technical step that increases the size of each level by a factor of 2 only; hence in the following, we do not consider the color of nodes, and instead focus on the main challenge, which is the design of the control gadget that is to be duplicated.

\subsection{From control sequence to control gadget}

In our example for ($p$,$q$)=(2,4), we created 4 groups ($A-D$) of 4 nodes each ($1-4$). At specific points in time, in every group of the upper level, two nodes become switchable (at the same time in each upper group). We then process these upper groups in a permutation of our choice: in each step, we select one of these groups (a `letter'), and switch 2 nodes in this group, according to the next bracket of the control sequence. We will refer to such a step as \textit{switching the group}; note that this does not mean switching all nodes in the group, but executing a step of the control sequence, i.e., switching $\mu$ portion of the group so that all lower neighbors of the group become switchable. Once all four groups have been switched, all 16 nodes on the lower level become switchable, so we can start (or continue) executing the same process on the level-pair below.
 
Note that on the upper level, each next step in a specific group always picks a predetermined pair of nodes in the group (based on the control sequence), so in the upper level, it is enough to consider the order in which we select the groups: regardless of the actual nodes switched, the step always has the same effect, namely, it makes all nodes connected to this group switchable. In contrast to this, on the lower level, all nodes labeled with the same number become switchable at the same time, as they have the same upper neighbors (a specific group); thus when discussing the switchability of lower-level nodes, we can simply handle the nodes labeled with the same number together. Thus we can illustrate the process in a simplified way in the following diagram (note that numbers within the brackets of the control sequence are only reordered for better visibility).
\begin{figure}[H]
\centering
	\vspace{-2pt}
	\begin{tikzpicture}

	\draw[gray, very thick] (-5pt,26pt) -- (-4pt,4pt);
	\draw[gray, very thick] (8pt,26pt) -- (4pt,4pt);
	\draw[gray, very thick] (22pt,26pt) -- (28pt,4pt);
	\draw[gray, very thick] (35pt,26pt) -- (62pt,4pt);
	\draw[gray, very thick] (60.5pt,26pt) -- (34pt,4pt);
	\draw[gray, very thick] (72.5pt,26pt) -- (68pt,4pt);
	\draw[gray, very thick] (87.5pt,26pt) -- (92pt,4pt);
	\draw[gray, very thick] (100.5pt,26pt) -- (100pt,4pt);
	
	\draw[white, fill=white] (-10pt,24.5pt) -- (-10pt,36pt) -- (106pt,36pt) -- (106pt,24.5pt) -- cycle;
	\node[anchor=center] at (16pt,30pt) {$ A \;\, B \;\; C \;\, D $};
	\node[anchor=center] at (80pt,30pt) {$ B \;\, C \;\; A \;\, D $};
	\node[anchor=center] at (48pt,27pt) {\Large .};
	
	\draw[white, fill=white] (-10pt,-6pt) -- (-10pt,6pt) -- (106pt,6pt) -- (106pt,-6pt) -- cycle;
	\node[anchor=center] at (0pt,0pt) {$( \, 1 \; 2 \,)$};
	\node[anchor=center] at (32pt,0pt) {$( \, 3 \; 2 \,)$};
	\node[anchor=center] at (64pt,0pt) {$( \, 4 \; 3 \,)$};
	\node[anchor=center] at (96pt,0pt) {$( \, 1 \; 4 \,)$};

\end{tikzpicture}
	\vspace{-7pt}
\end{figure}
Note that when processing the second bracket, we need to switch group $B$ for the second time. Before that, we first execute the first switching of group $D$, too, and then by reaching up to the level above the upper level, we make all four groups switchable for the second time (denoted by a dot in the figure), and then switch $B$ for the second time. Note that this first switch of group $D$ already makes the nodes labeled 4 switchable when processing the second bracket. This is not a problem; since number 4 is not in the second bracket, we simply wait with the switching of these nodes until we start processing the third bracket.

Also note that we always ensure that the nodes of a specific bracket (e.g., nodes labeled 3 and 4 in the previous example) are all switched at the same time. This is needed to carry our initial the assumption over to the level-pair below, namely that the upper groups all become switchable together at specific points, and we can switch them in any order of our choice.

It is a natural idea to generalize this method for any ($p$,$q$) pair, by creating $q$ different groups of $q$ nodes each, and cross-connecting these $q^2$ nodes in a similar fashion. However, it is not straightforward to apply the technique for any ($p$,$q$). Consider the control sequence for ($p$,$q$)=(3,5), and a similar construction of groups:
\begin{figure}[H]
\centering
	\vspace{-2pt}
	\begin{tikzpicture}

	\draw[gray, very thick] (-8.5pt,26pt) -- (-8.5pt,4pt);
	\draw[gray, very thick] (2.5pt,26pt) -- (0pt,4pt);
	\draw[gray, very thick] (14pt,26pt) -- (8pt,4pt);
	\draw[gray, very thick] (25.5pt,26pt) -- (34pt,4pt);
	\draw[gray, very thick] (35.5pt,26pt) -- (80pt,4pt);
	\draw[gray, very thick] (61.5pt,26pt) -- (40pt,4pt);
	\draw[gray, very thick] (73pt,26pt) -- (48.5pt,4pt);
	\draw[gray, very thick] (84pt,26pt) -- (84pt,4pt);
	\draw[gray, very thick] (95pt,26pt) -- (119.5pt,4pt);
	\draw[gray, very thick] (106pt,26pt) -- (128pt,4pt);
	\draw[gray, very thick] (132pt,26pt) -- (88pt,4pt);
	\draw[gray, very thick] (141.5pt,26pt) -- (134pt,4pt);
	\draw[gray, very thick] (153pt,26pt) -- (160pt,4pt);
	\draw[gray, very thick] (165pt,26pt) -- (168pt,4pt);
	\draw[gray, very thick] (176.5pt,26pt) -- (176.5pt,4pt);
	
	\draw[white, fill=white] (-14pt,24.5pt) -- (-14pt,36pt) -- (210pt,36pt) -- (210pt,24.5pt) -- cycle;
	\node[anchor=center] at (15pt,30pt) {$ A \; B \; C \; D \; E$};
	\node[anchor=center] at (84pt,30pt) {$ B \; C \; D \; A \; E$};
	\node[anchor=center] at (153pt,30pt) {$ C \; D \; B \; A \; E$};
	\node[anchor=center] at (49.5pt,27pt) {\Large .};
	\node[anchor=center] at (118.5pt,27pt) {\Large .};
	
	\draw[white, fill=white] (-14pt,-6pt) -- (-14pt,6pt) -- (182pt,6pt) -- (182pt,-6pt) -- cycle;
	\node[anchor=center] at (0pt,0pt) {$( \, 1 \; 2 \; 3 \,)$};
	\node[anchor=center] at (42pt,0pt) {$( \, 4 \; 2 \; 3 \,)$};
	\node[anchor=center] at (84pt,0pt) {$( \, 5 \; 4 \; 3 \,)$};
	\node[anchor=center] at (126pt,0pt) {$( \, 1 \; 5 \; 4 \,)$};
	\node[anchor=center] at (168pt,0pt) {$( \, 2 \; 1 \; 5 \,)$};

\end{tikzpicture}
	\vspace{-7pt}
\end{figure}
The problem in the above sequence is that by the third bracket, the number 3 has already occurred 3 times, so by the time we process this bracket, group $C$ on the upper level has to switch for the third time. Since each upper-level group becomes switchable at the same time, this means that by this point, all groups $A$, ..., $E$ now must be switchable for the third time; in particular, group $E$ too. That must mean that group $E$ has already switched at least twice previously; however, the third bracket contains the very first occurrence of number 5, so at least for one of the two switches of group $E$, the nodes labeled `5' on the lower level have wasted an opportunity to switch, so they could not switch a $\mu=\frac{5}{3}$ factor more than their upper neighbors. 

Essentially, the problem with the sequence is that the third bracket contains both the $j^{\text{th}}$ occurrence of one number and the $(j+2)^{\text{th}}$ occurrence of another (numbers 5 and 3, respectively). Because of the $(j+2)^{\text{th}}$ occurrence of a number in the bracket, all groups have to become switchable $(j+2)$ times, and hence already be switched $(j+1)$ times by the time we reach this point. However, if nodes labeled with another number are only switching at this point for the $j^{\text{th}}$ time, then one of the $(j+1)$ switches of their control group has not been used. Generally, given groups $X$ and $Y$, if there is a bracket in the sequence that contains the $j^{\text{th}}$ occurrence of the number corresponding to $X$ and the $(j+2)^{\text{th}}$ occurrence of the number corresponding to $Y$, then we say that $X$ and $Y$ are \textit{in contradiction} with each other (in the given bracket). For ($p$,$q$)=(3,5), $C$ and $E$ are in contradiction in the third bracket as discussed. For ($p$,$q$)=(2,4), we can see that there is no contradiction between any two letters.

Note that such contradictions are the only possible source of a problem; given a control sequence with no contradiction, there always exists a valid switching sequence of the upper groups. Since the control sequence itself guarantees that the occurrence numbers can never differ by more than 2, the lack of contradictions ensures that the difference between occurrences is at most 1 at any point. Hence whenever we require the $(j+1)^{\text{th}}$ switching of a specific upper group, we can simply switch all upper groups that have not been switched for the $j^{\text{th}}$ time yet; by this point, the lower neighbors of each such group have certainly been switched for the $(j-1)^{\text{th}}$ time already, so we are indeed not wasting any switches. Thus our goal is to somehow avoid contradictions in the control sequence.

Generally, devising a control gadget for any $p$ and $q$ is a challenging task. In the following, we present the technique of \textit{shifting}, which allows us to considerably increase the number of ($p$,$q$) pair for which we can devise a control gadget. We first illustrate the technique on the concrete example of ($p$,$q$)=(3,5).

\subsection{Subset shifting} \label{sec:shift}

In the above example of ($p$,$q$)=(3,5), the only problem essentially was that the second instance of $E$ always preceded the third $A$. However, the sequence $(ABCD.ABCDE.ABCDE.E)$ would, on the other hand, cause no problems at all.
 
Therefore, the key idea is that we can simply skip the very first switching of the group $E$, and only switch the groups $ABCD$ in this case. Then every further time when the upper groups become switchable, we do switch every group. Finally, when the upper groups become switchable for the fourth time, we start by switching the group $E$. At this point, the sequence of switched blocks is exactly $(ABCD.ABCDE.ABCDE.E)$, which will then again be followed by $ABCD$ when we also switch the other groups for the fourth time. A concatenation of such sequences yields a sequence where the group $E$ is effectively in a different phase, delayed from the other groups by 1 round.

Note that shifting $E$ skips an opportunity to switch group $E$ in the very first switching of the upper groups, and also an opportunity to switch $ABCD$ at the very last switching of the upper groups. Hence, if the number of switches on a given level is $s$, then with this technique, the number of switches on the next level will not be $s \cdot \frac{1-\varphi}{\lambda+\varphi}$, but only $(s-1) \cdot \frac{1-\varphi}{\lambda+\varphi} = \frac{1-\varphi}{\lambda+\varphi} \cdot s - \frac{1-\varphi}{\lambda+\varphi}$. However, one can see that this only adds up to a loss of (an arbitrarily small) $\epsilon_1$ in the exponent of the number of switches: for any $\epsilon_1>0$, we can select a constant $s_0$ high enough such that $\frac{1-\varphi}{\lambda+\varphi}  \cdot s_0 -\frac{1-\varphi}{\lambda+\varphi} > s_0 \cdot (\frac{1-\varphi}{\lambda+\varphi}-\epsilon_1)$ (note that this is very similar to the technique we used when relaxing the CPS definition; nodes that switch at most $s_0$ times are essentially considered new base nodes). Then due to this inequality, the number of switches of each group on the lowermost level of our construction is still

\[ \Omega \left( \left( \frac{1-\varphi}{\lambda+\varphi}-\epsilon_1 \right)^{\frac{1}{\log\left(\frac{1-\varphi}{\varphi}\right)} \cdot n} \right) = \Omega \left( n^{\frac{\log \left(\frac{1-\varphi}{\lambda+\varphi}-\epsilon_1\right)}{\log\left(\frac{1-\varphi}{\varphi}\right)} \cdot n} \right) = \Omega \left( n^{f(\lambda)-\epsilon_2} \right), \]
for an arbitrarily small $\epsilon_2$, as we are using $\varphi=\varphi^*(\lambda)$, and $f(\lambda)$ is continuous. Also, note that since each such loss of $\epsilon$ in the exponent can be arbitrarily small, the different such losses in the exponent can be merged into one common $\epsilon$ in the final running time.

Note that both in majority and minority, skipping the very first or very last switch of a node does not create any problems colorwise. Skipping the last switching opportunity only results in ending up with the opposite color in the final state. For each node that is supposed to skip the first switching opportunity, we have to invert its original color, such that the nodes already start with the color they would acquire if group $E$ was also switched at the first opportunity.

\subsection{Shifting in general}

Note, however, that this technique only allows us to shift a specific subset of the upper groups by 1. A crucial property of shifting is that the subsets at the beginning and the end of our modified sequence ($ABCD$ and $E$, respectively) form a disjoint partitioning of the upper neighbor groups. If we were to use the sequence $(ABCD.ABCD.ABCDE.E.E)$, then with the concatenation of such sequences, instead of skipping one switch altogether, the groups would skip a switch at every third opportunity. This would effectively reduce the number of switches on each next level to  $s \cdot \frac{1-\varphi}{\lambda+\varphi} \cdot \frac{2}{3}$, which would have a major effect on stabilization time.

This is also the reason why shifting does not provide a general solution for any ($p$,$q$) pair. Consider, for example, the control sequence for ($p$,$q$)=(7,9), which looks as follows:
\[ (1234567)(2345678)(3456789)(4567891)(5678912)(6789123)(7891234)(8912345)(9123456) \]

Here, the $3^{\text{rd}}$ bracket contains the $1^{\text{st}}$ occurrence of 9 and the $3^{\text{rd}}$ occurrence of 3, while the $6^{\text{th}}$ bracket contains the $4^{\text{th}}$ occurrence of 3 and the $6^{\text{th}}$ occurrence of 7. This implies that for a correct solution, the upper neighbors of 9 (i.e., group $I$) should be shifted at least 1 further than the upper neighbors of 3 (group $C$), and the upper neighbors of 3 (group $C$) shifted at least one further than the upper neighbors of 7 (group $G$). However, then group $I$ is shifted at least 2 steps away from group $G$ (i.e., must skip at least 2 initial rounds to be sufficiently later than $G$), which, as discussed above, is not viable.

The main goal of shifting is to separate the groups that are in contradiction with each other in a specific bracket. We say that a subset of letters (i.e., groups) is \textit{consistent} if there is no two groups of the subset are in contradiction in any bracket. In general, shifting provides a solution for a ($p$,$q$) pair if the letters can be partitioned into two consistent subsets. We call these two subsets \textit{blocks}, and we also refer to the partitioning as consistent if both of its blocks are consistent. For $(p,q)=(3,5)$, a partitioning is consistent exactly if it places $A$ and $E$ in different blocks.

It depends on the concrete value of $p$ and $q$ whether a consistent partitioning (into two groups) exists, i.e., whether the shifting technique provides a valid control gadget. In the following section, we show that such a partitioning always exists if $\mu \leq \frac{3}{5}$, that is, for $\lambda$ less than approximately $0.476$.

\begin{lemma} \label{lem:upto35}
Under Rule II with $\lambda < 0.476$, for any $\epsilon > 0$, there exists a graph construction and initial coloring where majority/minority processes stabilize in time $\Omega(n^{1+f(\lambda)-\epsilon})$.
\end{lemma}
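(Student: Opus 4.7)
The plan is to run the level-based construction of Section \ref{sec:lower} using a control gadget built via the shifting technique of Section \ref{sec:shift}. Given $\lambda < 0.476$, we have $\mu := \frac{\lambda + \varphi^*(\lambda)}{1 - \varphi^*(\lambda)} \leq \frac{3}{5}$. For any $\epsilon > 0$, I would choose a rational approximation $\widetilde{\mu} = p/q \leq \frac{3}{5}$ of $\mu$ with $\gcd(p,q) \in \{1,2\}$ (so that $\gcd(b,q)=1$ for $b = (q-p)/2$), taken close enough that replacing $\mu$ by $\widetilde{\mu}$ in the exponent costs at most $\epsilon/2$. Under the framework of Section \ref{sec:shift}, what remains is to exhibit a consistent partition of the letters $\{1, \dots, q\}$ into two blocks.

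The proposed partition is $B_1 := \{1, \dots, p\}$ and $B_2 := \{p+1, \dots, q\}$: these are precisely the numbers whose $O_x$-arc does, respectively does not, contain the starting point $t_1 = 0$ of the rotation $t_i := (i-1) b \bmod q$, so $B_1$ is exactly the set of letters already appearing in bracket $1$. The main combinatorial claim to verify is that for every $p/q \leq 3/5$, no contradiction has both endpoints in $B_1$ or both in $B_2$. For each pair $x < y$ with $d = y - x$ I would track the imbalance $I(i) := \#\{j \leq i : t_j \in O_x \setminus O_y\} - \#\{j \leq i : t_j \in O_y \setminus O_x\}$, which equals $n_x(i) - n_y(i)$ on shared brackets so that a contradiction is exactly $|I(i)| = 2$ for some $i$ with $t_i \in O_x \cap O_y$. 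The two only-arcs have length $d$ and are offset by $p$ on $\mathbb{Z}/q\mathbb{Z}$; equivalently, every only-$x$ visit at step $i$ is paired with an only-$y$ visit at step $i + p b^{-1} \bmod q$.

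I expect the main obstacle to be proving the imbalance bound $|I(i)| \leq 1$ at shared brackets uniformly across all admissible $(p,q)$ with $p/q \leq 3/5$. The natural approach is a geometric case analysis on the cyclic order of the four regions only-$x$, both, only-$y$, neither: for $x,y \in B_1$ the both-region is a long arc straddling $0$ so $I(1) = 0$ and one only needs to bound the drift between successive ``both'' visits, while for $x,y \in B_2$ the both-region is a single arc disjoint from $t_1 = 0$ and the analysis is cleaner. The hypothesis $\mu \leq 3/5$ rewrites as $b \geq q/5$, and the key step is to argue that a step of this size cannot land in the same only-region twice without first traversing the complementary only-region, so that the imbalance changes by at most $1$ between consecutive entries into the both-region. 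The sharpness of the $3/5$-threshold is witnessed by the $(7,9)$ example in Section \ref{sec:shift}, where the pairs $3$-$7$, $3$-$9$, $7$-$9$ form a triangle in the contradiction graph, so the argument will necessarily use the step-size hypothesis at several places and will need to handle cyclic wrap-around carefully in the $B_1$ case.

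Given the consistent partition, the shifting construction of Section \ref{sec:shift} yields a constant-size control gadget whose lower level executes the switching sequence at an effective rate of $(s - 1)/\widetilde{\mu}$ per $s$ executions on the upper level. The ``$-1$'' per level is absorbed into an additional $\epsilon/2$ loss in the exponent, exactly as in the $s_0$-relaxation argument for base nodes in Section \ref{sec:upper}. Plugging this gadget into the multi-level construction of Section \ref{sec:lower} yields $L = \log_{(1-\varphi)/\varphi}(n)$ levels of size $\widetilde{\Theta}(n/\log n)$ each, with switch counts scaling by $1/\widetilde{\mu}$ between consecutive levels, so that the lowermost level contributes $\widetilde{\Omega}(n \cdot n^{f(\lambda) - \epsilon})$ switches and the claimed $\Omega(n^{1 + f(\lambda) - \epsilon})$ bound follows.
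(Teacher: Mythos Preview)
Your overall framework is right and matches the paper: level-based construction, rational approximation of $\mu$, a constant-size control gadget obtained by the shifting technique, and the $\epsilon$-accounting for the ``$-1$ per level'' loss. The gap is in the specific two-block partition you propose.

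The partition $B_1=\{1,\dots,p\}$, $B_2=\{p+1,\dots,q\}$ is \emph{not} consistent for all admissible $(p,q)$ with $p/q\le 3/5$. Take $(p,q)=(8,14)$ (so $\mu=4/7<3/5$, $p+q$ even, $b=3$, $\gcd(b,q)=1$; note that $4/7$ forces the doubled pair $(8,14)$ because $4+7$ is odd). Bracket $8$ is $\{8,9,\dots,14,1\}$; by then $8$ has appeared in brackets $1,2,3,6,7,8$ (its $6^{\text{th}}$ occurrence) while $1$ has appeared in brackets $1,4,5,8$ (its $4^{\text{th}}$), so $1$ and $8$ are in contradiction and both lie in $B_1$. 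The same phenomenon recurs for $(p,q)=(11,19)$: at bracket $8$ the pair $(1,11)$ has occurrence counts $4$ and $6$. So you cannot sidestep the issue by picking a different approximant.

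The place where your sketch breaks is the sentence ``the two only-arcs have length $d$.'' That is only true when $d=y-x\le 2b$; once $d>2b$ (which happens for $x,y\in\{1,\dots,p\}$ whenever $p>2b+1$), the two presence arcs $O_x,O_y$ overlap on \emph{both} sides of the circle, the only-arcs have length $q-p<d$, and your ``step of size $b\ge q/5$ cannot hit the same only-region twice before the other'' heuristic fails --- in $(8,14)$ the starting point visits the only-$1$ region at brackets $2,3$ and again at $6,7$ with just one visit to the only-$8$ region ($4,5$) in between, producing the imbalance of $2$.

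The paper's fix is to use a different partition that deliberately straddles the boundary at $p$: one block is $\{p-2b+1,\dots,p+b\}$ and the other is $\{1,\dots,b\}\cup\{p+b+1,\dots,q\}$ (with the overlap at $p\le 3b$ resolved arbitrarily). Consistency is then proved by four short lemmas showing that any $2b$ consecutive numbers inside $\{1,\dots,p\}$ form a consistent block, and that the two ``tail'' blocks $\{p+1,\dots,p+b\}$ and $\{p+b+1,\dots,q\}$ can be merged with $\{p-2b+1,\dots,p\}$ and $\{1,\dots,b\}$ respectively without creating contradictions. The point is that the merged blocks have size at most $3b$, but more importantly they are positioned so that the ``late start'' of the numbers above $p$ exactly cancels the extra drift; your block $B_1=\{1,\dots,p\}$ has no such cancellation and fails once $p>2b$.
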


While these $\mu \leq \frac{3}{5}$ values allow a relatively simple proof of consistency, these are not the only $\mu$ values for which shifting provides a valid solution. For larger $\mu$, however, the existence of a consistent partitioning depends on multiple factors, including how large the integers $p$ and $q$ are. For example, the case $(p,q)=(5,7)$ can also be partitioned consistently, and thus the shifting technique provides a valid construction for $\mu=\frac{5}{7}$. This corresponds to $\lambda \approx 0.635$, which is a notably larger value than $0.476$.

\begin{lemma}
Under Rule II with $\lambda \approx 0.635$, for any $\epsilon > 0$, there exists a graph construction and initial coloring where majority/minority processes stabilize in time $\Omega(n^{1+f(\lambda)-\epsilon})$.
\end{lemma}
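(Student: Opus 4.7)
The plan is to instantiate the shifting framework of Appendix \ref{App:B} at $(p,q)=(5,7)$; the corresponding value $\lambda \approx 0.635$ is the one for which $\mu^*(\lambda) = \frac{5}{7}$ exactly, obtained from the identity $\mu = \frac{\lambda+\varphi^*(\lambda)}{1-\varphi^*(\lambda)}$ applied at $\mu = 5/7$. With $p=5$, $q=7$, and shift $b=1$, the control sequence is the seven brackets
\[ (12345)(23456)(34567)(45671)(56712)(67123)(71234). \]

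The first step is to tabulate, for each bracket, the occurrence number of every letter and extract all contradicting pairs --- those co-occurring in some bracket with occurrence numbers differing by exactly $2$. A direct check of the seven brackets shows that only brackets $3$, $4$ and $5$ produce contradictions (the others all have occurrence-number range at most $1$), and the complete list of contradicting pairs is
\[ \{3,7\},\ \{4,7\},\ \{5,7\},\ \{1,4\},\ \{1,5\},\ \{2,5\}. \]

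Next I would exhibit a consistent bipartition of $\{1,\dots,7\}$ with respect to this contradiction graph, namely $B_1 = \{1,2,7\}$ and $B_2 = \{3,4,5,6\}$; each of the six pairs above straddles this cut, so both blocks are consistent. This is precisely what shifting requires, so the construction proceeds exactly as in the $(p,q)=(3,5)$ example of Section \ref{sec:shift}: the control gadget lives on $7+7$ nodes (with the colour-duplication trick to cover both majority and minority), and a full round of the upper level switches the groups in $B_2$ first and those in $B_1$ one bracket later, so that in every contradicting bracket the $B_1$-groups have accumulated one fewer switching than the $B_2$-groups, matching the required occurrence numbers. Stacking $L = \log_{(1-\varphi)/\varphi}(n)$ such levels with $\Theta(n/\log n)$ copies of the gadget on each level, and fanning out degrees by a factor $\frac{1-\varphi}{\varphi}$ from bottom to top as in Figure \ref{fig:constr}, produces $\widetilde{\Theta}(n)$ bottom-level nodes that each switch $\Omega(n^{f(\lambda)-\epsilon'})$ times --- the shifting and the rational approximation each cost only an arbitrarily small loss in the exponent by the argument in Section \ref{sec:shift}, absorbed into a common $\epsilon$.

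The main obstacle is really the purely combinatorial check that the contradiction graph for $(5,7)$ admits a $2$-colouring; the computation of contradicting pairs is a finite case analysis on seven brackets, and the bipartition $\{1,2,7\}\,/\,\{3,4,5,6\}$ witnesses consistency. Once this is done, every other step is a direct reuse of the machinery already established in Appendix \ref{App:B} and requires no additional work.
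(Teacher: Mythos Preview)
Your approach is correct and is exactly what the paper has in mind: it merely asserts that $(p,q)=(5,7)$ can be partitioned consistently, and you supply the verification. Your computation of the contradicting pairs and the bipartition $\{1,2,7\}\,/\,\{3,4,5,6\}$ are both right, and the direction of the shift ($B_1$ delayed) is the correct one.

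One small slip: the control gadget does not live on $7+7$ nodes. Following the paper's general scheme (see the $\lambda=\tfrac13$ example and the remark ``creating $q$ different groups of $q$ nodes each''), each side of the gadget has $q^2=49$ nodes --- seven groups of seven --- before the colour duplication. This is purely a matter of scale and does not affect any of your arguments, but you should correct the number.
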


Thus in general, the concept of levels allows us to devise a construction idea to prove the lower bound for any $\lambda$ value. However, to obtain an actual realization of such a construction for every $\lambda \in (0,1)$, it remains to solve the combinatorial task of forming a control gadget for the remaining $\lambda$ values that are not covered by the shifting method.

\subsection{Consistent partitioning for $\mu \leq \frac{3}{5}$}

We now discuss how to partition the upper groups into two consistent groups for any $\mu \leq \frac{3}{5}$. Note that while our method shifts a block of groups on the upper level (e.g. groups $A$ and $B$), the consistency of this block depends on the groups' lower neighbors (e.g., where nodes labeled 1 and 2 appear in the control sequence below). Thus, for simplicity, we refer to each group not by its letter, but by the number assigned to its neighbors on the level below, and our goal is to find a consistent partitioning of the numbers $(1, ..., q)$ into 2 blocks.

Recall that $b=\frac{q-p}{2}$, i.e. the number of different elements in two consecutive brackets of the control sequence. For now, let us first assume that $p \geq 2b$.

Furthermore, let us use $B_{\ell}$ to denote a block formed from any $\ell$ consecutive numbers in $(1, ..., p)$, i.e. containing (the letters for) the numbers $i+1, i+2, ..., i+{\ell}$ for some $0 \leq i \leq p - \ell$. Also, let $B'_b$ and $B''_b$ denote the blocks formed from the numbers $(p+1, ..., p+b)$ and $(p+b+1, ..., q)$, respectively; note that these both consist of $b$ numbers indeed.

\begin{lemma} \label{lem:Bcons}

Any block $B_{2b}$ is consistent.

\end{lemma}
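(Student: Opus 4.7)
My plan is to translate the claim into a geometric question on the cyclic group $\mathbb{Z}/q\mathbb{Z}$ and exploit a shift-by-two pairing that makes the relevant bracket contributions cancel.

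Fix two elements $x < y$ of $B_{2b}$ and set $d := y - x \leq 2b - 1$. Writing $L(k) := (k-1)b \bmod q$, bracket $k$ contains a number $n$ iff $L(k)$ lies in the length-$p$ arc $I_n := [n-p, n-1] \pmod q$. Partitioning the cycle by membership in $I_x$ and $I_y$ produces four consecutive arcs: an ``only-$x$'' arc $R^+$ of length $d$, the overlap $R^0$ of length $p-d$, an ``only-$y$'' arc $R^-$ of length $d$, and an outside arc $R^\times$ of length $2b-d$, which is nonempty because $d < 2b$. Letting $D(k^*) := |\{k \leq k^* : L(k) \in R^+\}| - |\{k \leq k^* : L(k) \in R^-\}|$, a contradiction between $x$ and $y$ at bracket $k^*$ is precisely $L(k^*) \in R^0$ together with $|D(k^*)| \geq 2$, so the goal reduces to showing $|D(k^*)| \leq 1$ at all $R^0$-visits.

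The geometric core is the identity $R^- + 2b \equiv R^+ \pmod q$. Since $L(k+2) - L(k) = 2b$, this matches every $R^-$-visit at $k$ with an $R^+$-visit at $k+2$, giving a cyclic bijection with lag two. A short direct calculation shows the middle step $L(k+1)$ then never lies in $R^0$, and this is where the nonemptiness of $R^\times$, i.e.\ the hypothesis $d < 2b$, is essential. Next, because $x, y \leq p$, the arc $R^- = [x, y-1]$ is contained in $[1, p-1]$, whereas the last two bracket values $L(q-1) = p$ and $L(q) = p+b$ both fall outside $R^-$. So no $R^-$-visit occurs in the last two bracket positions, and the lag-two pairing does not wrap around when viewed in linear order.

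With a genuine linear bijection, $D(k^*)$ telescopes to $-\,|\{k^*-1, k^*\} \cap \{k : L(k) \in R^-\}|$. Both memberships are excluded: $k^*$ is not an $R^-$-visit because $L(k^*) \in R^0$, and $k^*-1$ is not an $R^-$-visit by the ``middle step misses $R^0$'' observation. Hence $D(k^*) = 0$, which is strictly stronger than needed and rules out any contradiction. The subtlest step I expect is the non-wrap check: had $R^-$ been allowed to touch indices $q-1$ or $q$, the lagged $R^+$-visits would appear at positions $\{1, 2\}$ and inflate $D(k^*)$, so the hypothesis $B_{2b} \subseteq \{1, \dots, p\}$ is used precisely to preclude this boundary case.
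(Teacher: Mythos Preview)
Your proof is correct, and it rests on the same structural fact as the paper's: the shift by $2b$ carries the ``only-$y$'' arc onto the ``only-$x$'' arc, which is your $R^- + 2b \equiv R^+$ and the paper's observation that the bracket endpoint $h_e$ always occupies the position that the startpoint $h_s$ occupied two steps earlier. The executions differ in granularity. The paper treats the whole block at once: when $h_s$ enters the range of $B_{2b}$, the next two brackets contain subsets $S_1,S_2$ of the block and the two after that contain exactly $\overline{S_1},\overline{S_2}$, so occurrence counts re-equalise every four brackets and never spread by~$2$. You instead fix a single pair $x<y$, build the four-arc partition of $\mathbb{Z}/q\mathbb{Z}$, and telescope the signed difference $D(k^*)$ to zero via the lag-two bijection and the no-wrap check. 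Your route is more explicit and self-contained (and yields $D(k^*)=0$ outright); the paper's is shorter but more informal. One small inaccuracy in your commentary: the hypothesis $d<2b$ is what makes the four arcs a genuine cyclic partition, but the ``middle step avoids $R^0$'' calculation itself goes through with only $b\ge 1$ --- so the hypothesis is used, just not quite at the step where you flag it. Your identification of the no-wrap check (using $B_{2b}\subseteq\{1,\dots,p\}$ to rule out $R^-$-visits at positions $q-1,q$) as the delicate boundary point is spot on.
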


\begin{proof}

Note that a control sequence is developed as follows: there is a starting point $h_s$ and an endpoint $h_e$, which are shifted in each step in a modular fashion (i.e., $q$ is followed by 1 again). Initially, $h_s$ and $h_e$ are at $1$ and $p+1$, respectively, so the first bracket of the control sequence contains the numbers $[h_s, h_e)$. In each step, both points are shifted further ahead by $b$ (modulo $q$). Since $h_e$ starts at $p+1$, after two steps, $h_e$ will arrive at 1, and then follow the same pattern from here as $h_s$ from the beginning. Hence, the position of $h_e$ in the $j^{\text{th}}$ step is always the same as the position of $h_s$ in the $(j-2)^{\text{th}}$ step.

The initial bracket of the sequence contains all elements of $B_{2b}$. After some steps, we have $h_s>i+1$ (for the first number $i+1$ in the group); let $h_s^1$ denote the value of $h_s$ in this step. This shows that in this step, only the numbers ($h_s^1$, ..., $i+2b$) will be present in the next bracket. Then in the following step, $h_s$ falls within the range of $B_{2b}$ again, so only the numbers ($h_s^1+b$, ..., $i+2b$) will be contained in the next bracket. The key observation is that in the step after this, $h_e$ will be equal to $h_s^1$ (it always takes the same position as $h_s$ did two rounds ago), hence the next bracket will contain the groups ($i+1$, ..., $h_s^1-1$) of $B_{2b}$, which is exactly the complement of groups two rounds ago. Similarly, the bracket of the next step contains ($i+1$, ..., $h_s^1+b-1$), the complement of the bracket from two steps before. After this point, each element of $B$ will have occurred the same number of times again.

Therefore, whenever we have brackets that only contain a subset of $B_{2b}$, they are always organized as follows. Before this point, each group in $B_{2b}$ has the same occurrence number. Then the following two brackets contain some subsets $S_1$ and $S_2$ of $B_{2b}$, and after this, the next two brackets contain exactly the complements of $S_1$ and $S_2$. This pattern ensures that regardless of the content of $S_1$ and $S_2$, no bracket has a difference of 2 in occurrence numbers, and after the pattern, all groups have the same occurrence numbers again.

It is worth pointing out that this heavily relies on the fact that the size of $B_{2b}$ is at most $2b$, and hence whenever $h_s$ or $h_e$ falls within the range of $B_{2b}$, it is guaranteed that it already surpasses the entire range of $B_{2b}$ in the second step after this. For example, in case of $(p,q)=(7,9)$ shown above, the block $(3,4,5,6,7)$ does not obey this property, since the starting point falls into it in 4 consecutive rounds, and hence it is not consistent. \qedhere

\end{proof}

Note that the same proof holds for any continuous block $B$ within $(1, ..., p)$ if it has size at most $2b$. Specifically, for the case of $p < 2b$, putting all of $(1, ..., p)$ together still forms a consistent block.

\begin{lemma} \label{lem:B12cons}

Blocks $B'_b$ and $B''_b$ are both consistent.  

\end{lemma}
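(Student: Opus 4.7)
The plan is to prove consistency of $B'_b$ and $B''_b$ by extending the argument of Lemma \ref{lem:Bcons}, exploiting that both blocks have size exactly $b$, matching the shift $b$ per bracket. For $B'_b = \{p+1, \ldots, p+b\}$, I would classify each bracket into one of three types with respect to $B'_b$: fully covering it, avoiding it, or partially covering it. A partial-entry bracket corresponds to $h_s = p+1+k$ for some $k \in \{1, \ldots, b-1\}$, covering the back $b-k$ elements of $B'_b$; a partial-exit bracket corresponds to $h_e = p+k$ for the same range of $k$, covering the front $k$ elements. The boundary cases $h_s = p+1$ and $h_e = p+b$ give full coverage rather than partial.

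The central claim is that partial entries and partial exits are paired: a partial-entry bracket with parameter $k$ is followed, exactly two brackets later, by a partial-exit bracket with the same parameter $k$. This is verified by a short modular computation: advancing $h_s$ by $b$ from $h_s = p+1+k$ lands it at $p+1+k+b$, which lies in $B''_b$, so the intervening bracket contains no element of $B'_b$; a further shift of $b$ yields $h_s \equiv k+1 \pmod{q}$ (using $q = p+2b$), giving $h_e = p+k$, exactly the partial-exit bracket with parameter $k$. Because the back $b-k$ elements and the front $k$ elements together partition $B'_b$, each element of $B'_b$ is incremented exactly once across these three consecutive brackets.

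Consistency then follows in a strong form. In any fully-covered bracket all elements of $B'_b$ share the same count. In any partial-entry bracket, the back $b-k$ elements present are incremented together from a previously synchronized value. In the paired partial-exit bracket two steps later, the front $k$ elements — which were skipped in the partial entry and therefore still sit at the earlier synchronized value — are now incremented together. Hence every bracket containing two or more elements of $B'_b$ shows them at equal counts, which is stronger than the required consistency. The argument for $B''_b = \{p+b+1, \ldots, q\}$ is essentially identical, with $B''_b$ playing the role of $B'_b$; the pairing again arises from the modular identity above.

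The main obstacle is carrying out the modular arithmetic cleanly at the wraparound, but this reduces to the identity $h_s^{(i+2)} = h_s^{(i)} + 2b \equiv h_s^{(i)} - p \pmod{q}$, which makes the pairing transparent. A shorter alternative is to observe that $B'_b \cup B''_b = \{p+1, \ldots, q\}$ is a run of $2b$ consecutive elements on the cycle $\{1, \ldots, q\}$, so cyclically relabeling element names by $-p \pmod{q}$ maps it onto $\{1, \ldots, 2b\}$ within $\{1, \ldots, p\}$; Lemma \ref{lem:Bcons} then yields consistency of the union at one stroke, and $B'_b$ and $B''_b$ inherit it as subsets.
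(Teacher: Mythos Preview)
Your main argument—classifying brackets as full, avoiding, or partial with respect to $B'_b$, and pairing each partial-entry bracket with a complementary partial-exit bracket two steps later via the identity $h_s^{(i+2)} \equiv h_s^{(i)}+2b \equiv h_s^{(i)}-p \pmod q$—is correct and is precisely the mechanism behind the paper's proof. The paper compresses this into two sentences: $B'_b$ (respectively $B''_b$) is absent from the first bracket (respectively the first two brackets), during which all its elements sit at occurrence count zero, and from that point on it behaves exactly like a size-$b$ block to which the reasoning of Lemma~\ref{lem:Bcons} applies verbatim. Your detailed case analysis is that same argument unrolled.

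Your shorter alternative, however, has a gap. Relabeling element names by $-p \pmod q$ does send $B'_b \cup B''_b$ to $\{1,\ldots,2b\}$, but it simultaneously shifts the control sequence: the relabeled $i$-th bracket coincides (as a set) with the original $(i{+}2)$-th bracket, so the first bracket in the relabeled world is $\{2b+1,\ldots,q\}$, not $\{1,\ldots,p\}$. Lemma~\ref{lem:Bcons} is stated and proved for blocks lying inside the first bracket $(1,\ldots,p)$ of the standard sequence—its base case is that all block elements are synchronized there—so it does not apply directly after relabeling. Consistency is not automatically invariant under shifting the starting bracket, since occurrence counts of different elements shift by different amounts. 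To close this, you would need exactly the ``absent from the first one or two brackets, hence still synchronized'' observation that the paper uses, at which point the relabeling is superfluous.
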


\begin{proof}

Blocks $B'_b$ and $B''_b$ follow the same behavior as any block $B_b$ described in Lemma \ref{lem:Bcons}, except for not being included in the first 1 and first 2 brackets, respectively. Hence, the same reasoning shows that these blocks are also consistent. \qedhere

\end{proof}

It remains to show that we can merge the blocks $B'_b$ and $B''_b$ with the blocks in $(1, ..., p)$ to obtain a consistent partitioning into two blocks for smaller $\mu$ values. For this, we introduce some new notation. Let us denote the block corresponding to numbers $(1, ..., b)$ by $B_b^{\text{first}}$, and the block corresponding to numbers $(p-2b+1, ..., p)$ by $B_{2b}^{\text{last}}$.

\begin{lemma} \label{lem:Mergecons1}

The block $B_{2b}^{\text{last}} \cup B'_b$ is consistent.

\end{lemma}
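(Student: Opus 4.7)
The plan is to show that the only possible contradictions in $B_{2b}^{\text{last}} \cup B'_b$ are cross-block, and then to control these via a telescoping identity on the occurrence-count difference.

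\textbf{Step 1 (reduction).} By Lemma~\ref{lem:Bcons} applied with $\ell = 2b$ (since $B_{2b}^{\text{last}}$ consists of $2b$ consecutive numbers inside $[1, p]$) and by Lemma~\ref{lem:B12cons} for $B'_b$, each sub-block is already consistent on its own. I would therefore fix arbitrary $x \in B_{2b}^{\text{last}} = [p - 2b + 1, p]$ and $y \in B'_b = [p + 1, p + b]$ (so $x < y$ automatically), and show that no bracket contains the $j$-th occurrence of one and the $(j+2)$-th of the other. Writing $c_n(i)$ for the number of brackets among the first $i$ that contain $n$, this reduces to proving $|c_x(i) - c_y(i)| \le 1$ for every $i$, since whenever both $x$ and $y$ lie in bracket $i$ one has $c_x(i-1) - c_y(i-1) = c_x(i) - c_y(i)$.

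\textbf{Step 2 (the shift identity).} Let $d(i) := c_x(i) - c_y(i)$. Two cyclic intervals of $h_s$ values drive the evolution of $d$: a ``decrease'' interval $D := [x+1, y]$ (bracket contains $y$ but not $x$, so $d$ drops by $1$) and an ``increase'' interval $I := [x - p + 1, y - p] \pmod{q}$ (bracket contains $x$ but not $y$, so $d$ rises by $1$), each of size $y - x$. Using $q = p + 2b$ I would verify $I = D + 2b \pmod{q}$; combined with the fact that $h_s$ advances by $2b$ every two steps, this gives the shift identity $I_i = D_{i - 2 \pmod{q_c}}$, where $D_i := \mathbf{1}\{h_s(i) \in D\}$, $I_i := \mathbf{1}\{h_s(i) \in I\}$, and $q_c := q/\gcd(b, q)$ is the cycle length of $h_s$.

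\textbf{Step 3 (telescoping and endpoint identification).} Telescoping $d(i) = \sum_{j=1}^{i}(I_j - D_j)$ via the shift identity produces the closed form
\[
d(i) \;=\; (D_{q_c - 1} + D_{q_c}) \;-\; (D_{i-1} + D_i) \qquad (1 \le i \le q_c).
\]
From $h_s(i) = (i-1)b + 1 \pmod{q}$ I would compute directly that $h_s(q_c - 1) = p + 1$ and $h_s(q_c) = p + b + 1$; the assumption $p + q$ even ensures that both values lie on the orbit of $h_s$ even when $\gcd(b, q) = 2$. Since $p + 1 \in [x+1, y] = D$ for every admissible pair (as $x \le p$ and $y \ge p + 1$), while $p + b + 1 > p + b \ge y$, we obtain $D_{q_c - 1} = 1$ and $D_{q_c} = 0$. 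The closed form then gives $d(i) = 1 - D_{i-1} - D_i \in \{-1, 0, 1\}$ for all $i$, which finishes the argument.

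\textbf{Main obstacle.} The argument hinges on the shift identity of Step~2 and the two endpoint identifications in Step~3. Both are short modular calculations, but they must be verified uniformly in $\gcd(b, q) \in \{1, 2\}$; in particular when $\gcd(b, q) = 2$ the orbit of $h_s$ covers only half of $\{1, \dots, q\}$, so one must check that $p + 1$ and $p + b + 1$ actually lie on that half-orbit (which follows from $p, q$ having equal parity, forcing $b$ to be even in the $\gcd = 2$ case). Once these boundary facts are pinned down, the rest of the proof reduces to the clean telescoping formula above.
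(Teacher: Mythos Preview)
Your approach is sound and genuinely different from the paper's. The paper argues structurally: it notes that $B_{2b}^{\text{last}}\cup B'_b$ is a run of $3b$ consecutive numbers, that the only obstruction to extending Lemma~\ref{lem:Bcons} from length $2b$ to $3b$ is that $h_s$ may land inside the block on three consecutive steps, and that this obstruction is neutralized because $B'_b$ misses the very first bracket and therefore lags the rest of the block by one occurrence throughout. Your route instead produces an explicit closed formula $d(i)=1-D_{i-1}-D_i$ via the shift identity, which is more computational but yields the bound $d(i)\in\{-1,0,1\}$ very cleanly once established.

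There is, however, a gap in Step~2. The parenthetical ``bracket contains $y$ but not $x$'' attached to $D=[x+1,y]$ is not always true: whenever $y-x>2b$ (which does occur, e.g.\ $x=p-2b+1$, $y=p+b$ gives $y-x=3b-1$), your $D$ is strictly larger than the set of starting points with ``$y$ in, $x$ out''. Concretely, for $(p,q)=(5,9)$, $x=2$, $y=7$ you have $D=\{3,4,5,6,7\}$, while the true decrease set is $\{3,4,5,6\}$; the bracket with $h_s=7$ contains \emph{both} $2$ and $7$. What saves you is that these extra starting points (namely those with both $x$ and $y$ in the bracket) also appear in your $I$, so they cancel in $I_i-D_i$. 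Thus the increment identity $d(i)-d(i-1)=I_i-D_i$ is still correct, but for a reason other than the one you gave; the clean fix is to verify directly that $\mathbf 1_{[x-p+1,\,y-p]}-\mathbf 1_{[x+1,\,y]}=\chi_x-\chi_y$ as functions on $\mathbb Z/q\mathbb Z$ (both sides equal $\mathbf 1_{[s+1,\,s+2b]}-\mathbf 1_{[1,\,2b]}$ in the coordinate $w=z-x$, with $s=y-x$). With that patch the shift identity $I=D+2b$, the endpoint values $D_{q-1}=1$, $D_q=0$, and the conclusion all go through. Incidentally, in the paper's setup one always has $\gcd(b,q)=1$, so the $\gcd=2$ branch of your ``main obstacle'' never actually arises.
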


\begin{proof}

Our previous lemmas show that both $B_{2b}^{\text{last}}$ and $B'_b$ are consistent separately. Together, they form a block of $3b$ consecutive numbers. Note that the only reason why the proof of Lemma \ref{lem:Bcons} does not apply to blocks of length $3b$ is that $h_s$ can fall within the range of the block on 3 consecutive occasions, and thus a bracket could simultaneously have the $(j+2)^{\text{th}}$ occurrence of the last few numbers and the $j^{\text{th}}$ occurrence of the first few numbers. However, in our case, $B'_b$ is not contained in the first bracket ($h_e=p+1$ initially), so the occurrence number of all nodes in $B'_b$ is always smaller by 1 than the same occurrence numbers in the $B_{3b}$ case. Hence even if $h_s$ falls into the range of the block 3 consecutive times, the resulting bracket only contains the $(j+1)^{\text{th}}$ occurrence of the last nodes in $B'_b$, and the $j^{\text{th}}$ occurrence of the first nodes in $B_{2b}^{\text{last}}$. \qedhere

\end{proof}

\begin{lemma} \label{lem:Mergecons2}

The block $B_{b}^{\text{first}} \cup B''_b$ is consistent.

\end{lemma}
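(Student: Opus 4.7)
The block $B_b^{\text{first}} \cup B''_b = \{1,\ldots,b\} \cup \{p+b+1,\ldots,q\}$ consists of $2b$ cyclically consecutive positions modulo $q$, straddling the wrap-around at $q \to 1$. My plan is to argue that this cyclic $2b$-block admits exactly the same four-bracket $S_1, S_2, S_1^c, S_2^c$ transition pattern as the linear $B_{2b}$ case of Lemma~\ref{lem:Bcons}, and therefore inherits consistency in the same way.

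First I would observe that the bracket-start $h_s$ advances by $b$ modulo $q$ while $h_e = h_s + p \equiv h_s - 2b \pmod{q}$ trails it by exactly two brackets, and neither mechanism cares whether the $2b$-length block is linear or cyclic: $h_s$ sweeps through any $2b$-length cyclic interval on at most two consecutive brackets, and $h_e$ does the same two brackets later. Reusing the argument of Lemma~\ref{lem:Bcons} verbatim in the cyclic setting, the four bracket-intersections with the block around each entrance of $h_s$ form $S_1, S_2, S_1^c, S_2^c$, where $S_2 \subseteq S_1$ are the cyclic "right portions" of the block starting at $h_s$; since the elements of the block present in each individual bracket all shared the same occurrence number before that bracket, no within-bracket occurrence gap of $2$ is ever created, and after each such quadruple of brackets all elements of the block return to a common occurrence.

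The one delicate point is the boundary condition at the start of the sequence: the initial bracket $\{1,\ldots,p\}$ contains $B_b^{\text{first}}$ but not $B''_b$, so the sequence begins partway through a cyclic pattern rather than at the "clean" all-equal-occurrence state. I would therefore verify directly that brackets $1$--$4$ instantiate the pattern with $S_1 = B_b^{\text{first}}$, $S_2 = \emptyset$, $S_1^c = B''_b$, $S_2^c = B_b^{\text{first}} \cup B''_b$: using $p \geq 2b$ (hence $q \geq 4b$), bracket $2 = \{b+1,\ldots,p+b\}$ is disjoint from the block, bracket $3 = \{2b+1,\ldots,q\}$ captures exactly $B''_b$, and bracket $4 = \{3b+1,\ldots,q,1,\ldots,b\}$ captures the full block. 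A direct occurrence count confirms that the block-elements present in each of these four brackets share a common occurrence number, so no contradiction arises within any of them, and after bracket $4$ the block is uniformly at occurrence $2$ -- precisely the equal-occurrence state from which the general pattern argument above applies without modification.

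The step I expect to require the most care is making the cyclic analog of Lemma~\ref{lem:Bcons} fully rigorous, since its original argument is phrased for blocks strictly inside $\{1,\ldots,p\}$ and appeals to a linear ordering of the block's elements. The cleanest remedy is either to invoke the cyclic symmetry of the bracket shift directly (using that $\gcd(b,q) \leq 2$, so $h_s$ visits every position of $\mathbb{Z}/q\mathbb{Z}$ and the entry/exit geometry is invariant under cyclic relabelling), or to apply the shift $x \mapsto (x-p-b) \bmod q$ which turns $B_b^{\text{first}} \cup B''_b$ into the linear block $\{1,\ldots,2b\}$ and to track how this shift reindexes the bracket sequence; the boundary verification above is then the only residual case to handle.
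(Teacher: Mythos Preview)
Your proposal is correct and essentially the same as the paper's proof: both verify the first few brackets by hand until every element of the merged block has a common occurrence count, then invoke the $B_{2b}$ pattern of Lemma~\ref{lem:Bcons} for the remainder. The paper stops the hand-check after bracket~3 (all elements at occurrence~1) while you continue through bracket~4 (occurrence~2) and are more explicit about the cyclic analog of Lemma~\ref{lem:Bcons} that the paper invokes tacitly, but these are expository rather than mathematical differences.
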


\begin{proof}

The first bracket of the control sequence contains all elements of $B_{b}^{\text{first}}$. The second bracket contains none of the numbers in the merged block, while the third bracket only contains the elements of $B''_b$. Up to this point, all elements of the merged block appear exactly once. From here, the merged block simply behaves as any block $B_{2b}$ in the proof of Lemma \ref{lem:Bcons}: it is a block of $2b$ consecutive number, such that each have the same occurrence number in the beginning. \qedhere

\end{proof}

Note that this already provides a construction proving Lemma \ref{lem:upto35}. If $\mu \leq \frac{3}{5}$, then $p \leq 3b$, so $B_b^{\text{first}}$ and $B_{2b}^{\text{last}}$ together already cover all numbers in $(1, ..., p)$. Thus the merged blocks in Lemmas \ref{lem:Mergecons1} and \ref{lem:Mergecons2} cover all upper groups, giving a consistent partitioning. Therefore, the shifting technique provides a valid control gadget if we shift all the upper groups in $B_{b}^{\text{first}} \cup B''_b$ by 1.

On the other hand, for general $(p,q)$ pairs with $\mu > \frac{3}{5}$, the groups corresponding to $(1, ..., p)$ can not necessarily be partitioned into two consistent blocks, and thus we cannot obtain a valid control gadget with the shifting method, as in the example of $(p,q)=(7,9)$ before.

Note that some of the above statements would have to be slightly reformulated to also hold for very small $\mu$ values, when even $p < b$. However, for such small $\mu$, the control sequence is always guaranteed to be contradiction-free, so the shifting technique is not even required to form a control gadget.

\subsection{An easier lower bound}

We also briefly note that a simple technique allows us to show a slightly weaker lower bound in case of any $\lambda$, even without the shifting technique. Recall that the idea of upper groups (i.e., assigning a letter and a number to a node) allowed us to handle any case where the occurrence numbers in any bracket of a control sequence differ by at most 1. Note that in a control sequence, the occurrence numbers in any bracket can differ by at most 2 in any case, so increasing this limit by 1 more would already provide a control gadget for any $\lambda$.

\begin{figure}
\centering
\captionsetup{justification=centering}
	\includegraphics[width=0.6\textwidth]{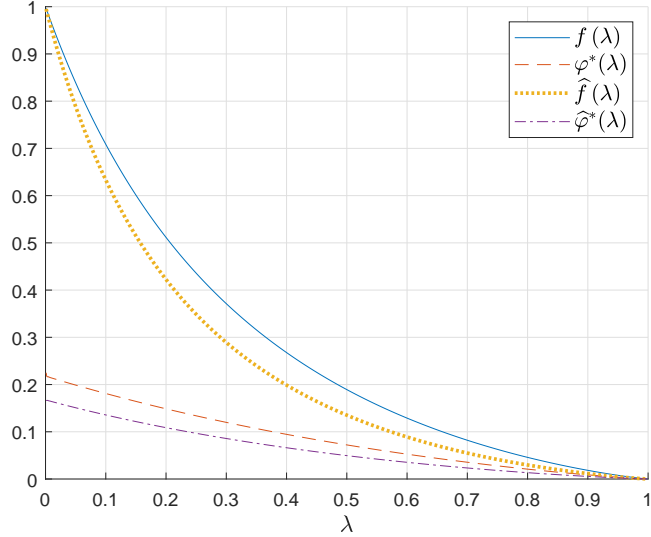}
	\caption{Plot of $\widehat{f}(\lambda)$ and $\widehat{\varphi}^*(\lambda)$, besides $f(\lambda)$ and $\varphi^*(\lambda)$}
	\label{fig:func_loose}
\end{figure}

Consider the idea of placing a level of \textit{relay nodes} between any two consecutive levels of our construction, taking a mediator role between the two levels. While previously, the nodes labeled $A$ in the upper level were connected to the nodes labeled $1$ in the lower level, we now remove these edges, an instead connect all these nodes to a set of relay nodes $R_{A/1}$ inbetween. This extra level then allows us to temporarily store conflicts, and relay them to the lower level in a timing of our choice, which is already enough to implement the control sequence for any $\lambda$.

The drawback of the technique, however, is that the relay nodes now also waste conflicts. While previously both the downdegree of the upper level and the updegree of the lower level was $d$, now in order to allow the relay nodes to be dominated by their upper neighbors, we now must select the downdegree of the upper level and the updegree of $R_{A/1}$ to be $d$, and then the downdegree of $R_{A/1}$ and the updegree of the lower level to be $\frac{1-\lambda}{1+\lambda} \cdot d$. In practice, this means that every new level of the construction will imply an extra degree decrease factor of $\frac{1-\lambda}{1+\lambda}$.

For every new level, the number of edges now decreases by $\frac{\varphi}{1-\varphi} \cdot\frac{1-\lambda}{1+\lambda}$, so the optimal choice of $\varphi$ also changes. Hence this construction requires a new choice $\widehat{\varphi}^*$ of output rate, which will then, analogously to the original case, result in a stabilization time defined by the function

\[ \widehat{f}(\lambda) := \max_{\varphi \in (0,\frac{1-\lambda}{2}]} \; \frac{\log\left( \frac{1-\varphi}{\lambda+\varphi} \right)}{\log\left( \frac{1-\varphi}{\varphi} \cdot \frac{1+\lambda}{1-\lambda} \right)}. \]

This alternative lower bound function is shown in Figure \ref{fig:func_loose}. While this lower bound does leave some gap to the upper bound of $O(n^{1+f(\lambda)+\epsilon})$, it has the advantage of being easy to show for any $\lambda$, without having to devise complicated control gadgets.

\begin{theorem}
Under Rule II with any $\lambda \in (0,1)$, for any $\epsilon > 0$, there exists a graph construction and initial coloring where majority/minority processes stabilize in time $\Omega(n^{1+\widehat{f}(\lambda)-\epsilon})$.
\end{theorem}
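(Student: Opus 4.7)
The plan is to adapt the level-based construction of Section~\ref{sec:lower} by inserting a layer of \emph{relay nodes} between every consecutive pair of levels. Given $\lambda$, I first compute $\widehat{\varphi}^*(\lambda)$ as the argmax in the definition of $\widehat{f}(\lambda)$, approximating it (and the induced ratio $\frac{1-\varphi}{\lambda+\varphi}$) by rationals as needed, at the cost of an arbitrarily small $\epsilon$ in the exponent. I then take $L = \log(n) / \log\bigl(\frac{1-\varphi}{\varphi} \cdot \frac{1+\lambda}{1-\lambda}\bigr)$ consecutive level pairs, each containing $\Theta(n/\log n)$ nodes, so the total node count stays within $n$. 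Within a level pair, each relay has updegree $d$ and downdegree $\frac{1-\lambda}{1+\lambda}\, d$; these proportions are calibrated precisely so that a relay is switchable only when \emph{all} $d$ of its upper neighbors conflict with it, and its subsequent switch creates exactly $\frac{1-\lambda}{1+\lambda}\, d$ conflicts on its downward edges. Consecutive non-relay levels therefore exhibit a combined degree decrease factor of $\frac{\varphi}{1-\varphi} \cdot \frac{1-\lambda}{1+\lambda}$, matching the denominator in $\widehat{f}(\lambda)$.

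The key conceptual gain is that each relay acts as a buffer: it holds accumulated upper-edge conflicts indefinitely rather than firing immediately, so the scheduling at consecutive levels decouples. This removes the need for the elaborate control gadgets (or the shifting workaround of Appendix~\ref{App:B}) that were necessary for the tight bound. Concretely, I would schedule each round as follows: first switch the upper-level nodes in any convenient order so that every targeted relay accumulates $d$ upper conflicts; then fire the relays in whichever order the lower level requires; then let the lower-level nodes switch. Because a relay refuses to fire until its entire upper neighborhood is conflicting, we can freely delay any single relay switch until the moment it is needed, and thus realize any control sequence on the relay-to-lower level without requiring it to be contradiction-free. Iterating this schedule recursively, between two consecutive switches of any lower-level node exactly a $\mu = \frac{\lambda+\varphi}{1-\varphi}$ fraction of its incident relays fires, so the number of switches per node grows by a factor of $1/\mu$ per level.

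After $L$ levels, each node on the lowermost level switches $(1/\mu)^L = n^{\widehat{f}(\lambda)}$ times; with $\widetilde{\Theta}(n)$ nodes there, the total stabilization time is $\Omega(n^{1+\widehat{f}(\lambda)-\epsilon})$ once polylog factors and rational-approximation losses are absorbed into $\epsilon$. The initial coloring is set analogously to the $\lambda=\tfrac{1}{3}$ example, and minority constructions arise from majority ones by inverting colors on every second level, as in the main lower bound. I expect the hardest step will be verifying Rule~II pointwise under the recursive schedule: one must confirm that the degree calibration truly prevents any relay from firing prematurely under adversarial tie-breaking, that each lower-level node reaches the switching threshold precisely when the intended $\mu$-fraction of its relays has fired, and that boundary effects at the first and last rounds of each recursive level do not break the propagated invariants. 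These are technical but concrete checks, directly analogous to the bookkeeping already carried out for the $\lambda=\tfrac{1}{3}$ construction, and the chosen degree proportions $d$ versus $\frac{1-\lambda}{1+\lambda}\,d$ are tailored exactly to make them go through.
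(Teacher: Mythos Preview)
Your proposal is correct and follows essentially the same approach as the paper: insert a relay layer between consecutive levels with updegree $d$ and downdegree $\frac{1-\lambda}{1+\lambda}\,d$, so that relays are dominated by their upper neighbors and can buffer conflicts, thereby removing the contradiction-free requirement on the control sequence; the resulting per-level degree decrease $\frac{\varphi}{1-\varphi}\cdot\frac{1-\lambda}{1+\lambda}$ and switch increase $\frac{1-\varphi}{\lambda+\varphi}$ yield exactly $\widehat{f}(\lambda)$ in the exponent. The paper's own discussion of this theorem is in fact more terse than your write-up; your only slight overstatement is that a relay holds conflicts ``indefinitely'' (it remains switchable only while its upper neighbors stay in conflict with it), but within the adversarial schedule you describe this distinction is immaterial, and the paper phrases the same point as the relay layer letting one ``temporarily store conflicts, and relay them to the lower level in a timing of our choice.''
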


\subsection{Above the uppermost level}

Furthermore, the uppermost level of the construction needs to be discussed separately, since in order to make the construction behave as we described, we also have to ensure that the nodes of the uppermost level already execute the control sequence a constant $s_0$ number of times.

The reason why this is necessary is that on each level of the construction, we lose a constant number of switches due to two different factors. On the one hand, recall that if we apply the subset shifting method, then this leaves exactly 1 switch of each node on each level unused. On the other hand, if each node in the given level switches $s$ times, the next level cannot always switch $s \cdot \frac{1-\varphi}{\lambda+\varphi}$ times if this expression is not an integer. In fact, if each node switches $t$ times in the control sequence of our control gadget (with $t=O(1)$), this allows for only $\left \lfloor{\frac{s}{t}} \right \rfloor$ complete executions of the control sequence on the upper level, and hence only
\[ \left \lfloor{\frac{\left \lfloor{\frac{s}{t}} \right \rfloor \cdot \frac{1-\varphi}{\lambda+\varphi}}{t}} \right \rfloor \]
complete executions of the control sequence on the lower level. Thus due to these two factors, the number of switches does not increase from $s$ to $s \cdot \frac{1-\varphi}{\lambda+\varphi}$ for each new level, but only to $s \cdot \frac{1-\varphi}{\lambda+\varphi} - O(1)$ for some constant.

As discussed already in Section \ref{sec:shift}, we can overcome this by ensuring that the nodes of each level switch at least $s_0$ times for a specific constant $s_0$, at the cost of losing a factor $\epsilon$ from the exponent of our lower bound. The smaller the $\epsilon$ loss we tolerate, the larger the minimal switches $s_0$ we have to ensure for each (i.e., even the uppermost) level.

There is a simple method to ensure that each node in the uppermost level of the construction switches $s_0$ times, for any constant $s_0$. A similar technique was already used in the weighted constructions of \cite{minorityW}. Since our control gadgets have constant size, there are at most constantly many different `type of' nodes on the uppermost level. For all these sets $V_0$ of uppermost level nodes (that have the same role in different control gadgets), we can connect $V_0$ to a group $V'_0$ on an even higher pseudo-level, such that each edge between $V_0$ and $V'_0$ has a conflict initially. If nodes in $V_0$ have a downdegree of $d$, then we connect each node in $V_0$ to $\frac{\lambda+1}{\lambda-1} \cdot d$ nodes in $V'_0$. This ensures that each node in $V_0$ is switchable initially, while the extra nodes in $V'_0$ and extra edges to $V'_0$ still remain in the magnitude of $|V_0|$ and $|V_0| \cdot d$, respectively.

We can then continue this in a similar fashion, and add another group $V''_0$ above $V'_0$, connected with even more edges, in order to make $V'_0$ initially switchable. After adding $s_0$ such pseudo-levels above, and then unfolding them from bottom to top (i.e., first switching $V_0$, then $V'_0$ and then $V_0$, then $V''_0$ and $V'_0$ and then $V_0$, and so on), we obtain a way to switch the nodes of $V_0$ altogether $s_0$ times, at a timing of our choice. Since $s_0$ is a constant, executing this process for a specific $V_0$ does not change the magnitude of nodes or edges in the graph. As our control gadgets consist of constantly many nodes, adding distinct such pseudo-levels for all the constantly many $V_0$ sets still does not affect the magnitude of the nodes and edges.

\subsection{Divisibility challenges}

Besides the difficulty of devising a control gadget for every $\lambda$, there is another problem to address in the construction.

Assume that the input-output rate $\frac{1-\varphi}{\varphi}$ can be expressed as (or, in the irrational case, approximated by) a rational number $\frac{p'}{q'}$ with $p',q' \in \mathbb{Z}$ (note that this $p'$ and $q'$ has no relation to our choice of $p$ and $q$, which are used to approximate $\mu$).

This means that if a node in a specific level has downdegree $d$, then it has to have updegree $\frac{p'}{q'} \cdot d$ for the optimal rate $\varphi^*(\lambda)$. However, in our construction, that would imply that the level above has updegree $\left( \frac{p'}{q'} \right)^2 \cdot d$, the following level $\left( \frac{p'}{q'} \right)^3 \cdot d$, and so on. In order for all of these numbers to be integers, $d$ would have to be divisible by $q'$ many times ($\Theta(\log n)$ times). This is clearly not possible, especially for the lowermost levels, where $d$ is a constant.

We can overcome this problem by slightly modifying the number of nodes (i.e., the number of control gadgets) on each level. Let us select $k \in \mathbb{Z}$ such that $\frac{p'}{q'} \in \left[k, k+1 \right)$ holds (note that $\varphi^*(\lambda)<0.22$ for any $\lambda$, and thus $\frac{1-\varphi}{\varphi}>3$ in any case). Assume we have a specific level where each node has an updegree of $d$. If the level above had the same number of nodes, than that would imply a downdegree of $d$ for each node above, and consequently, an updegree of $\frac{p'}{q'} \cdot d$. However, instead, we can increase the size of the level above by a factor of $\frac{p'}{k \cdot q'}$, resulting in a downdegree of only $\frac{k \cdot q'}{p'} \cdot d$, and thus an updegree of $\frac{k \cdot q'}{p'} \cdot \frac{p'}{q'} \cdot d = k \cdot d$ on the level above. Similarly, if we decrease the size of the next level by a factor of $\frac{p'}{(k+1) \cdot q'}$, then the next updegree $ (k+1) \cdot d$ will similarly be an integer.

The general idea is to follow this technique to ensure that the degree remains an integer after each such level. Note, however, that in order not to change the construction significantly, we need to select a combination of $k$-s and $(k+1)$-s such that their product over all $L$ levels is relatively close to $\left( \frac{p'}{q'} \right)^L$. In case of too many $k$-s, the uppermost level would be significantly larger than the lowermost one, not giving us enough frequently-switching nodes on lower levels. In case of too many $(k+1)$-s, the degree of nodes would grow significantly faster than $\frac{p'}{q'}$ on a level, resulting in less than $L$ levels altogether (since the degree on the uppermost level would have to be larger than $\Theta(n)$). A possible solution is to select the largest combination of $k$-s and $(k+1)$ that is still below $\left( \frac{p'}{q'} \right)^L$, which is therefore at least $\frac{k}{k+1} \cdot \left( \frac{p'}{q'} \right)^L$. This ensures that there is only at most a constant variance in level sizes, and that the uppermost level has degree which is only a constant factor lower than it would be with $\left( \frac{p'}{q'} \right)^L$.

Note that our divisibility solution itself raises another minor divisibility problem: changing the size of specific levels by a factor of $\frac{p'}{k \cdot q'}$ or $\frac{p'}{(k+1) \cdot q'}$ might also mean that the following level should have a non-integer number of control gadgets. However, we can easily overcome this. For simplicity, let us analyze the process in the other direction, from uppermost to lowermost level. Whenever the level size change by the given factor would result in a non-integer number of control gadgets, we can simply round this number down, and connect the few extra edges to a dummy gadget on the level below that we do not use. With possibly one less actual control gadget, the number of nodes can only decrease by a constant on each new level, hence we only lose $O(\log(n))$ nodes by the lowermost level. Since each level consists of $\widetilde{\Theta}(n)$ nodes, this does not affect the magnitude of nodes on any level.

\section{Discussion of $f(\lambda)$} \label{App:C}

We now discuss the functions $f(\lambda)$ and $\varphi^*(\lambda)$ in more detail. The diagram of both functions have already been presented in the main part of the paper. This shows that both functions are continuous and monotonously decreasing. The function $f(\lambda)$ takes values in $[0,1]$, while $\varphi^*(\lambda)$ takes values between 0 and approximately 0.2178.

Let us introduce the notation
\[ g(\lambda, \varphi) = \frac{\log\left( \frac{1-\varphi}{\lambda+\varphi} \right)}{\log\left( \frac{1-\varphi}{\varphi} \right)}. \]
In order to find the optimal $\varphi$, one would have to differentiate $g(\lambda, \varphi)$:
\[ g'_\varphi (\lambda, \varphi) = \frac{(\lambda+1) \cdot \varphi \cdot \log (\frac{1-\varphi}{\varphi}) - (\lambda + \varphi) \cdot \log (\frac{1-\varphi}{\lambda+\varphi})}{(\varphi-1) \cdot \varphi \cdot (\lambda+\varphi) \cdot \log^2 (\frac{1-\varphi}{\varphi})}. \]
Thus at a local minimum, we have
\[ (\lambda+1) \cdot \varphi \cdot \log \left( \frac{1-\varphi}{\varphi} \right) = (\lambda + \varphi) \cdot \log \left( \frac{1-\varphi}{\lambda+\varphi} \right) .\]
In order to obtain $\varphi^*(\lambda)$, we would have to solve this for $\varphi$, with $\lambda$ as a parameter. To our knowledge, there is no closed-form solution to this problem.

Note that if we split the logarithms into subtractions, we also obtain an alternative formulation of this equation.
\[ (\lambda + \varphi) \cdot \log ( \lambda+\varphi ) = (\lambda+1) \cdot \varphi \cdot \log ( \varphi) + \lambda \cdot (1-\varphi) \cdot \log (1-\varphi) . \]

\subsection{Lookup table of function values}

Finally, we show the approximate values of $f(\lambda)$ and $\varphi^*(\lambda)$ for a wide range of $\lambda$ values between 0 and 1. Besides, we also show the input switching rate $\mu=\frac{\lambda+\varphi^*(\lambda)}{1-\varphi^*(\lambda)}$ for these $\lambda$ values. The values are illustrated in Table \ref{tab:functions}.

\setlength\tabcolsep{4pt}
\begin{table}[H]
\captionsetup{justification=centering}
\vspace{24pt}
\centering
  \begin{tabular}{ c || c | c | c | }
    \hline
		$\lambda$ & $f(\lambda)$ & $\varphi^*(\lambda)$ & $\mu(\lambda)$ \\ \hline \hline
    0.05 & 0.839 & 0.199 & 0.311 \\ \hline
		0.10 & 0.709 & 0.181 & 0.343 \\ \hline
		0.15 & 0.601 & 0.164 & 0.376 \\ \hline
		0.20 & 0.512 & 0.149 & 0.410 \\ \hline
		0.25 & 0.436 & 0.134 & 0.443 \\ \hline
		0.30 & 0.371 & 0.120 & 0.477 \\ \hline
		0.35 & 0.316 & 0.107 & 0.512 \\ \hline
		0.40 & 0.268 & 0.095 & 0.546 \\ \hline
		0.45 & 0.226 & 0.083 & 0.581 \\ \hline
		0.50 & 0.189 & 0.072 & 0.617 \\ \hline
		0.55 & 0.157 & 0.062 & 0.653 \\ \hline
		0.60 & 0.129 & 0.053 & 0.689 \\ \hline
		0.65 & 0.104 & 0.044 & 0.726 \\ \hline
		0.70 & 0.082 & 0.036 & 0.763 \\ \hline
		0.75 & 0.063 & 0.028 & 0.800 \\ \hline
		0.80 & 0.046 & 0.021 & 0.838 \\ \hline
		0.85 & 0.031 & 0.015 & 0.877 \\ \hline
		0.90 & 0.018 & 0.009 & 0.917 \\ \hline
		0.95 & 0.008 & 0.004 & 0.958 \\ \hline
  \end{tabular}
  \vspace{4pt}
    \caption{Values of our functions for some specific $\lambda$ parameters.}
	\label{tab:functions}
\end{table}

\end{appendices}

\newpage

\end{document}